\newtheorem{theorem}{Theorem}[section]
\newtheorem{lemma}[theorem]{Lemma}
\newtheorem{proposition}[theorem]{Proposition}
\newtheorem{definition}{Definition}
\def\boxit#1{\vbox{\hrule\hbox{\vrule\kern4pt
  \vbox{\kern1pt#1\kern1pt}
\kern2pt\vrule}\hrule}}
\newcommand{\keywords}[1]{\bigskip \par\noindent
{\small{\em Keywords\/}: #1}}
\newcommand{\badgraph}{minimal forbidden induced subgraph}
\newcommand{\hv}[1]{\ensuremath{N_H[#1]}}
\newcommand{\nhcag}{normal Helly circular-arc graph}
\newcommand{\ce}[1]{\ensuremath{{\mathtt{cp}(#1)}}}
\newcommand{\cce}[1]{\ensuremath{{\mathtt{ccp}(#1)}}}
\newcommand{\lp}[1]{\ensuremath{{\mathtt{lp}(#1)}}}
\newcommand{\rp}[1]{\ensuremath{{\mathtt{rp}(#1)}}}
\newcommand{\head}[1]{\ensuremath{{\mathtt{last}(#1)}}}
\newcommand{\tail}[1]{\ensuremath{{\mathtt{first}(#1)}}}
\newcommand{\stpath}[2]{($#1$, $#2$)-path}
\newcommand{\ec}{\ensuremath{E_{\text{c}}}}
\newcommand{\ecc}{\ensuremath{E_{\text{cc}}}}
\newcommand{\oc}{\ensuremath{T_{\text{c}}}}
\newcommand{\oo}{\ensuremath{T}}
\newcommand{\occ}{\ensuremath{T_{\text{cc}}}}
\newcommand{\og}[1]{\ensuremath{\phi(#1)}}
\newcommand{\comment}[1]{\hfill $\setminus\!\!\setminus$ {\em #1}}
\title{Forbidden Induced Subgraphs of Normal Helly \\Circular-Arc
  Graphs: Characterization and Detection\thanks{Preliminary results of
    this paper appeared in the proceedings of SBPO 2012
    \cite{grippo-12-cag-without-dominating-triple} and FAW 2014
    \cite{cao-14-recognizing-nhcag}.}}
\author{Yixin Cao\thanks{Institute for Computer Science and Control,
    Hungarian Academy of Sciences.  Email:
    \href{mailto:yixin@sztaki.hu}{\tt yixin@sztaki.hu}.  Supported by
    the European Research Council (ERC) under the grant 280152 and the
    Hungarian Scientific Research Fund (OTKA) under the grant
    NK105645. }  \and Luciano N. Grippo\thanks{Instituto de Ciencias,
    Universidad Nacional de General Sarmiento, Los Polvorines, Buenos
    Aires, Argentina.
    Email:\href{mailto:lgrippo@ungs.edu.ar}{lgrippo@ungs.edu.ar},
    \href{mailto:msafe@ungs.edu.ar}{msafe@ungs.edu.ar}.  Partially
    supported by CONICET PIP 11220120100450CO and ANPCyT PICT
    2012-1324 grants.}
  \addtocounter{footnote}{-1}
  \and Mart\'in D. Safe\footnotemark
 }
\date{\today}
\begin{document}
\maketitle
\tikzstyle{corner}  = [fill=blue,inner sep=2.5pt]
\tikzstyle{special} = [fill=black,circle,inner sep=2pt]
\tikzstyle{vertex}  = [fill=black,circle,inner sep=2pt]
\tikzstyle{original edge} = [thick,-,blue,dashed]
\tikzstyle{forbidden edge} = [dashed,-,red]
\tikzstyle{edge}    = [draw,thick,-]
\tikzstyle{at edge} = [draw,ultra thick,-,red]

\begin{abstract}
  A normal Helly circular-arc graph is the intersection graph of arcs
  on a circle of which no three or less arcs cover the whole circle.
  Lin, Soulignac, and Szwarcfiter [Discrete Appl.\ Math.\ 2013]
  characterized circular-arc graphs that are not normal Helly
  circular-arc graphs, and used it to develop the first recognition
  algorithm for this graph class.  As open problems, they ask for the
  forbidden induced subgraph characterization and a direct recognition
  algorithm for normal Helly circular-arc graphs, both of which are
  resolved by the current paper.  Moreover, when the input is not a
  normal Helly circular-arc graph, our recognition algorithm finds in
  linear time a minimal forbidden induced subgraph as certificate.
 \end{abstract}
 \keywords{certifying algorithms, holes, interval models, (minimal)
   forbidden induced subgraphs, (normal, Helly) circular-arc models.}

\section{Introduction} 
This paper will be only concerned with undirected and simple graphs.
A graph is a \emph{circular-arc graph} if its vertices can be assigned
to arcs on a circle such that two vertices are adjacent if and only if
their corresponding arcs intersect.  Such a set of arcs is called a
\emph{circular-arc model} of this graph.  If there is some point on
the circle that is not covered by any arc in the model, then the graph
is an \emph{interval graph}, and it can also be represented by a set
of intervals on the real line, which is called an \emph{interval
  model}.  Circular-arc graphs and interval graphs are two of the most
famous intersection graph classes, and both have been studied
intensively for decades.  However, in contrast to interval graphs, our
understanding of circular-arc graphs is far limited, and to date some
fundamental problem remains unsolved.

\begin{figure*}[h]
  \centering\footnotesize
  \subfloat[long claw]{\label{fig:long-claw}
    \includegraphics{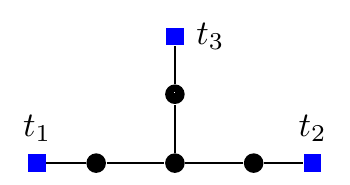} 
  }
  $\,$
  \subfloat[whipping top]{\label{fig:whipping-top}
    \includegraphics{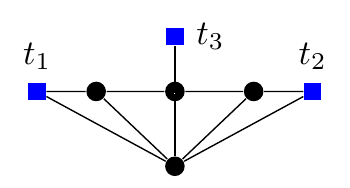} 
  }
  $\,$
  \subfloat[\dag]{\label{fig:net}
    \includegraphics{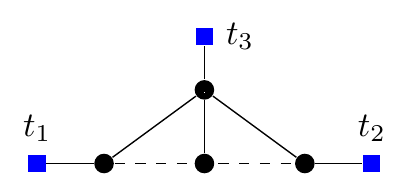} 
  }
  $\,$
  \subfloat[\ddag]{\label{fig:tent}
    \includegraphics{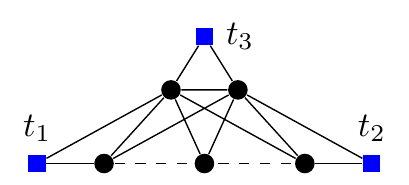} 
  }
  \caption{Chordal minimal forbidden induced graphs.}
  \label{fig:at}
\end{figure*}

One fundamental combinatorial problem on a graph class is its
characterization by forbidden induced subgraphs.  For example,
Lekkerkerker and Boland \cite{lekkerkerker-62-interval-graphs} showed
in 1962 that a graph is an interval graph if and only if it contains
neither a hole (i.e., a induced cycle of length at least four) nor any
graph in Fig.~\ref{fig:at} as an induced subgraph.  Recall that holes
are the forbidden induced subgraphs of \emph{chordal graphs}, which
are the intersection of subtrees of a tree.  In contrast, since it was
first asked by Hadwiger et
al.~\cite{hadwiger-64-combinatorial-geometry} in 1964, all efforts
attempting to characterize circular-arc graphs by forbidden induced
subgraphs have succeeded only partially.  Tucker made the most
significant contribution to the study of the class of circular-arc
graphs and its subclasses, which includes the forbidden induced
subgraph characterizations of both unit circular-arc graphs (i.e., a
graph with a circular-arc model where every arc has the same length)
and proper circular-arc graphs (i.e., a graph with a circular-arc
model where no arc properly contains another)
\cite{tucker-74-structures-cag}; we will see more later.  There is a
similar line of research for other proper subclasses of circular-arc
graphs, which aims at determining their forbidden induced subgraphs or
some other kinds of obstructions; for this we refer to the surveys of
Lin and Soulignac \cite{lin-09-cag-and-subclasses} and Dur\'an et
al.~\cite{duran-14-survey} and references therein.

One fundamental algorithmic problem on a graph class is its
recognition, i.e., to efficiently decide whether a given graph belongs
to this class or not.  For intersection graph classes, all
{recognition algorithms} known to the authors provide an intersection
model when the membership is asserted.  Most of them, on the other
hand, simply return ``NO'' for non-membership, while one might also
want some verifiable \emph{certificate} for some reason
\cite{mcconnell-11-survey-certifying-algorithms}.  A recognition
algorithm is \emph{certifying} if it provides both positive and
negative certificates.  There are different forms of negative
certificates, while a minimal forbidden induced subgraph is arguably
the simplest and most preferable of them
\cite{heggernes-07-certifying-fis}.  Kratsch et
al.~\cite{kratsch-06-certifying-interval-and-permutation} reported a
certifying recognition algorithm for interval graphs, which in linear
time returns either an interval model of an interval graph or a
forbidden induced subgraph for a non-interval graph.  Although its
returned forbidden induced subgraph is not necessarily minimal, a
minimal one can be easily retrieved from it (see also
\cite{lindzey-13-find-forbidden-subgraphs} for another approach).
Likewise, a hole can be detected from a non-chordal graph in linear
time \cite{tarjan-84-chordal-recognition}.  On the other hand,
although a circular-arc model of a circular-arc graph can be produced
in linear time \cite{mcconnell-03-recognition-cag}, it remains a
challenging open problem to find a negative certificate for a
non-circular-arc graph.

The complication of circular-arc graphs may be attributed to two
special intersection patterns of circular-arc models that are not
possible in interval models.  The first is two arcs intersecting in
both ends, and a {circular-arc model} is called \emph{normal} if no
such pair exists.  The second is a set of arcs intersecting pairwise
but containing no common point, and a {circular-arc model} is called
\emph{Helly} if no such set exists.  Normal and Helly circular-arc
models are precisely those without three or less arcs covering the
whole circle
\cite{mckee-03-restricted-cag,lin-13-nhcag-and-subclasses}.  A graph
that admits such a model is called a \emph{normal Helly circular-arc
  graph}.  In particular, all interval graphs are normal Helly
circular-arc graphs.  

\begin{figure}[h!]
  \centering
  \subfloat[A circular-arc graph $G$]{\label{fig:example-graph}
    \includegraphics{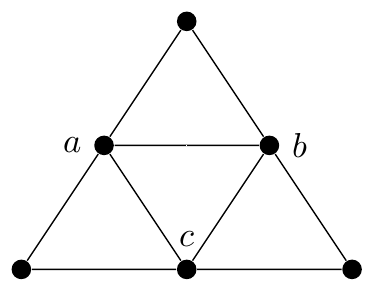} 
  }
  \qquad
  \subfloat[{A normal model of $G$}]{\label{fig:normal-model}
    \includegraphics{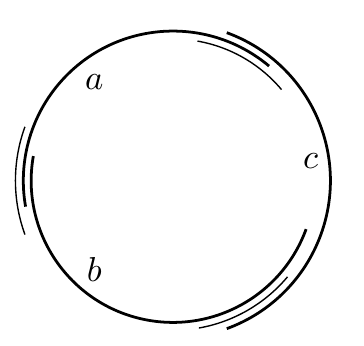} 
  }
  \qquad
  \subfloat[{A Helly model of $G$}]{\label{fig:helly-model}
    \includegraphics{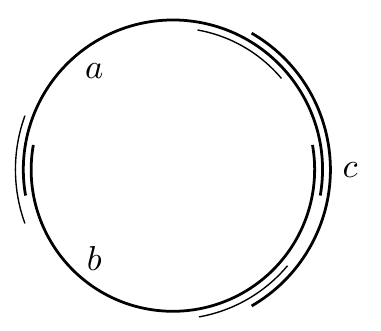} 
}
  \caption{a non-\nhcag\ and its circular-arc models}
  \label{fig:normal-and-helly}
\end{figure}

A word of caution is worth on the definition of \nhcag s.  One graph
might admit both a normal circular-arc model and a Helly circular-arc
model but not a normal and Helly circular-arc model.  For example, see
the graph and its models in Fig.~\ref{fig:normal-and-helly}.  The fact
that the model of Fig.~\ref{fig:normal-model} (resp.,
Fig.~\ref{fig:helly-model}) is not Helly (resp., normal) can be
evidenced by arc set $\{a,b,c\}$ (resp., $\{a,b\}$).  One may want to
verify that arranging a normal and Helly circular-arc model for this
graph is out of the question.  This example convinces us that the set
of \nhcag s is {\em not} equivalent to the intersection of normal
circular-arc graphs and Helly circular-arc graphs, but a proper subset
of it.

Let us mention some previous work related to normal Helly circular-arc
graphs.  Tucker \cite{tucker-75-coloring-cag} gave an algorithm that
outputs a proper coloring of any given normal Helly circular-arc graph
using at most $3\omega/2$ colors, where $\omega$ denotes the size of a
maximum clique.  Note that by the Helly property, $\omega$ is
equivalent to the maximum number of arcs covering a single point on
the circle.  This is tight as any odd hole, which has $\omega = 2$ and
needs at least three colors, is a \nhcag.  In the study of convergence
of circular-arc graphs under the clique operator, Lin et
al.~\cite{lin--10-clique-operator-cag} observed that normal Helly
circular-arc graphs arose naturally.  They then
\cite{lin-13-nhcag-and-subclasses} undertook a systematic study of
normal Helly circular-arc graphs as well as its subclass.  Their
results include a partial characterization of \nhcag s by forbidden
induced subgraph (more specifically, those restricted to Helly
circular-arc graphs), and a linear-time recognition algorithm (by
calling a recognition algorithm for circular-arc graphs).  As open
problems, they ask for determining the remaining minimal forbidden
induced subgraphs, and designing a direct recognition algorithm, both
of which are resolved by the current paper.

The first main result of this paper is a complete characterization of
\nhcag s by forbidden induced subgraphs.  A wheel (resp., $C^*$)
comprises a hole and another vertex completely adjacent (resp.,
nonadjacent) to it.

\begin{theorem}\label{thm:characterization}
A graph is a normal Helly circular-arc graph if and only if it
contains no $C^*$, wheel, or any graph depicted in Figs.~\ref{fig:at}
and \ref{fig:normal-helly}.
\end{theorem}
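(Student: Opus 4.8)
The plan is to prove both directions separately, with the "only if" direction being the easier one and the "if" direction requiring the heavy structural machinery.

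For the **necessity** ("only if") direction, I would show that each listed graph — $C^*$, every wheel, and each graph in Figs.~\ref{fig:at} and \ref{fig:normal-helly} — fails to be a \nhcag. By hereditarity of the class under induced subgraphs, it suffices to treat each forbidden graph once. For the chordal obstructions in Fig.~\ref{fig:at}, I would invoke (or reprove) the fact that they already fail to be interval graphs via Lekkerkerker–Boland, then argue directly that no normal Helly circular-arc model exists: the key point is that in a normal Helly model, an asteroidal triple forces three pairwise "separating" arc-triples whose union would cover the circle, contradicting the no-three-arcs-cover property. For $C^*$, a hole plus an isolated-from-it vertex $v$: in any normal Helly model the arcs of the hole cover the whole circle (a well-known fact, since a hole's Helly model has consecutive arcs only pairwise meeting and going around once), leaving no room for the arc of $v$ to avoid all of them. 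For wheels, the hub arc must meet every spoke arc; in a normal model this forces the hub arc together with two non-consecutive hole arcs to cover the circle, again a contradiction; a short case analysis handles small wheels separately. The graphs in Fig.~\ref{fig:normal-helly} (which I cannot see but which presumably include the obstructions Lin--Soulignac--Szwarcfiter already identified within Helly circular-arc graphs, plus a handful of new non-Helly-CA ones) are checked individually, each by exhibiting the covering-triple or a forbidden intersection pattern.

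For the **sufficiency** ("if") direction — the substance of the paper — I would argue the contrapositive: a graph $G$ that is not a \nhcag\ contains one of the listed graphs as an induced subgraph. The natural route is to split on whether $G$ is a circular-arc graph at all. If $G$ is \emph{not} a circular-arc graph, then since every non-CA graph contains a hole-or-wheel-or-$C^*$-or-... — here I would lean on the earlier structural results and the classical list of CA obstructions restricted to the relevant cases — we are done, or reduce to a chordal/interval sub-case handled by Fig.~\ref{fig:at} (Lekkerkerker--Boland). If $G$ \emph{is} a circular-arc graph but not a \nhcag, I would invoke Lin--Soulignac--Szwarcfiter's characterization of circular-arc graphs that fail to be normal Helly circular-arc (quoted in the introduction): such a $G$ contains a specific small "non-normal" or "non-Helly" configuration. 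Starting from such a configuration embedded in a circular-arc model of $G$, I would trace through the model to locate an induced $C^*$, wheel, or one of the finitely many graphs in Fig.~\ref{fig:normal-helly}: the non-Helly witnesses ($k$ arcs pairwise meeting with empty common intersection) yield small wheels or members of Fig.~\ref{fig:normal-helly} by examining which additional vertices the model forces, and the non-normal witnesses (two arcs meeting in both ends, equivalently a covering triple) similarly yield a wheel or $C^*$ after analyzing the arcs that "wrap around."

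The **main obstacle** will be the last step: extracting a \emph{minimal} forbidden induced subgraph from an arbitrary non-\nhcag\ circular-arc model. The difficulty is bookkeeping — a non-Helly or non-normal configuration in the model may involve many arcs, and one must prune down to an induced subgraph that is exactly one of the named graphs, showing no "shortcut" chords spoil it; this is where a careful case analysis on the cyclic arrangement of the offending arcs, and on the possible positions of vertices that could destroy inducedness, is unavoidable. I expect this to consume the bulk of the paper and to run in parallel with the linear-time detection algorithm promised in the abstract, since the algorithmic construction of the certificate essentially \emph{is} the constructive proof of this direction. A secondary subtlety is making sure the wheel/$C^*$ families and the finite list in Fig.~\ref{fig:normal-helly} are genuinely exhaustive — i.e., that the case analysis has no gaps — which I would organize by first reducing to the case where $G$ is "hole-rich" (contains a long enough hole) versus chordal, handling the chordal case cleanly via Fig.~\ref{fig:at}.
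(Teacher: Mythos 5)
There is a genuine gap in your sufficiency direction, and it occurs at the very first fork of your case analysis. You propose to split on whether $G$ is a circular-arc graph at all, and in the non-circular-arc case to ``lean on \ldots the classical list of CA obstructions.'' No such list exists: as the introduction of the paper itself stresses, the forbidden induced subgraph characterization of circular-arc graphs has been open since 1964 and is known only partially. So the branch of your argument that is supposed to handle non-circular-arc inputs rests on a result that is not available, and it is precisely this branch that cannot be waved away --- a graph can fail to be a normal Helly circular-arc graph simply by failing to be a circular-arc graph, in arbitrarily complicated ways. Your other branch (a circular-arc graph that is not normal Helly) is more plausible in outline, since Lin--Soulignac--Szwarcfiter do characterize those, but their obstruction is a set of at most three arcs covering the circle in a \emph{model}, not an induced subgraph, and their own forbidden-subgraph results are restricted to Helly circular-arc graphs; turning a covering triple of arcs into one of the finitely many graphs of Fig.~\ref{fig:normal-helly} (or a wheel or $C^*$) is a large amount of unperformed work, which your proposal correctly flags as ``the main obstacle'' but does not actually carry out.

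The paper's route avoids the trap entirely: it never decides circular-arc membership. If $G$ is chordal it reduces to interval-graph recognition (Fig.~\ref{fig:at} plus Lekkerkerker--Boland). Otherwise it fixes a hole $H$, builds the auxiliary graph $\mho(G)$ by splitting $N[h_0]$ into two copies attached to the two ``ends'' of $G-N[h_0]$, and shows that $G$ is a normal Helly circular-arc graph exactly when $\mho(G)$ is an interval graph (Lemma~\ref{lem:nhcag-to-interval} and Theorem~\ref{lem:build-ca-model}); the entire sufficiency proof then consists of translating a Lekkerkerker--Boland obstruction of $\mho(G)$ --- a hole or a graph of Fig.~\ref{fig:at} --- back into one of the listed obstructions of $G$ (Theorem~\ref{thm:negative-certificate} and the surrounding lemmas). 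Theorem~\ref{thm:characterization} is then a corollary of the correctness of the algorithm of Theorem~\ref{thm:certifying-algorithm}. Your necessity direction is essentially the paper's (and could be simplified for Fig.~\ref{fig:at}: a chordal normal Helly circular-arc graph is an interval graph, so chordal non-interval graphs are automatically excluded), but the sufficiency direction as proposed does not go through.
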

\begin{figure*}[t]
  \centering
  \subfloat[$K_{2,3}$]{\label{fig:long-claw-1}
    \includegraphics{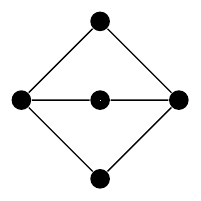} 
  }
  $\quad$
  \subfloat[twin-$C_5$]{\label{fig:g-2}
    \includegraphics{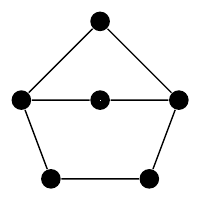} 
  }
  $\quad$
  \subfloat[domino]{\label{fig:domino}
    \includegraphics{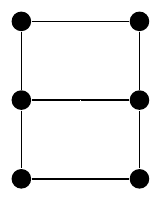} 
  }
  $\quad$  
  \subfloat[$\overline{C_6}$]{\label{fig:complement-c-6}
    \includegraphics{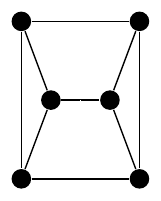} 
  }
  $\quad$  
  \subfloat[FIS-1]{\label{fig:fis-1}
    \includegraphics{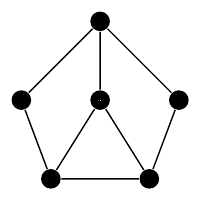} 
  }
  $\quad$
  \subfloat[FIS-2]{\label{fig:fis-2}
    \includegraphics{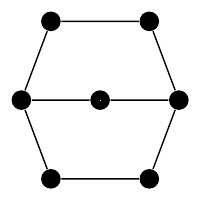} 
  }
  \caption{Non-chordal and finite minimal forbidden induced graphs.}
  \label{fig:normal-helly}
\end{figure*}

It is easy to use the definition to verify that a normal Helly
circular-arc graph is chordal if and only if it is an interval graph.
An interval model is always a normal and Helly circular-arc model, but
an interval graph might have circular-arc model that is neither normal
nor Helly, e.g., consider $K_4$.  For non-chordal graphs we have:
\begin{proposition}[\cite{mckee-03-restricted-cag,lin-13-nhcag-and-subclasses}]
  \label{thm:always-noraml-and-helly}
  If a normal Helly circular-arc graph $G$ is not chordal, then every
  circular-arc model of $G$ is normal and Helly.
\end{proposition}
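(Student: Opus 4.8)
The plan is to argue by contradiction, relying on the quoted fact that a circular-arc model is normal and Helly exactly when no three or fewer of its arcs cover the whole circle \cite{mckee-03-restricted-cag,lin-13-nhcag-and-subclasses}. So let $\mathcal{M}$ be any circular-arc model of $G$, and suppose some set $S$ of at most three vertices has $\bigcup_{w\in S}A_w$ equal to the whole circle; fix an inclusion-minimal such $S$. Since $G$ is not chordal it has a hole $H=v_1v_2\cdots v_k$ with $k\ge 4$, and the first step is to record the standard behaviour of $H$ inside $\mathcal{M}$: because a hole is not an interval graph and $v_{i-1},v_{i+1}$ are nonadjacent, the arcs $A_{v_1},\dots,A_{v_k}$ occur in this cyclic order, cover the whole circle, and admit \emph{corner points} $p_1,\dots,p_k$ in the same cyclic order with $p_i\in A_{v_i}\cap A_{v_{i+1}}$ and $p_i\notin A_{v_j}$ whenever $j\notin\{i,i+1\}$ (indices modulo $k$). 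From this I extract two facts to be used repeatedly: if an arc $A_w$ contains $p_i$, then $w\in\{v_i,v_{i+1}\}$ or $w$ is adjacent to both $v_i$ and $v_{i+1}$; and each $A_{v_l}$ contains only the two corner points $p_{l-1},p_l$, so any vertex whose arc contains three or more corner points lies outside $V(H)$.

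The core of the argument is to analyse how the arcs in $S$ cover the corner points. Each $A_w$ with $w\in S$ meets a cyclically consecutive \emph{block} of corner points, and the blocks of the vertices of $S$ jointly cover $\{p_1,\dots,p_k\}$. If some block has at least $k-1$ corner points, then its owner $w$ is outside $V(H)$, is adjacent to every vertex of $H$, and $G[V(H)\cup\{w\}]$ is a wheel. Otherwise each block has at most $k-2$ corner points; since $|S|\le 3$, a pigeonhole gives a block of at least $\lceil k/3\rceil\ge 2$ corner points, hence a vertex adjacent to (or lying among) at least three consecutive vertices of $H$, and the remaining at most two arcs of $S$ still have to cover the remaining corner points. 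Carrying this bookkeeping to the end --- distinguishing which vertices of $S$ lie on $H$ (by the remark above such a vertex accounts for at most two corner points, so this is only possible for small $k$), and determining the adjacencies among the vertices of $S$, the ``seam'' vertices at which the dominated stretches of $H$ overlap, and the remaining vertices of $H$ --- one exhibits inside $G$ an induced wheel or an induced copy of one of $K_{2,3}$, twin-$C_5$, domino, $\overline{C_6}$, FIS-1, or FIS-2 of Fig.~\ref{fig:normal-helly}, which contradicts Theorem~\ref{thm:characterization}. If one prefers a self-contained finish, one checks by hand that none of these graphs is a \nhcag; the representative case is the wheel, with hole $v_1\cdots v_k$ and hub $u$: in a hypothetical normal Helly model the Helly points $q_i$ of the triangles $\{u,v_i,v_{i+1}\}$ occur in cyclic order inside $A_u$, so $A_u$ misses at most one gap $(q_j,q_{j+1})$; since $q_j,q_{j+1}\in A_{v_{j+1}}$ but $q_{j-1}\notin A_{v_{j+1}}$, the arc $A_{v_{j+1}}$ contains that gap, whence $A_u\cup A_{v_{j+1}}$ is the whole circle, contradicting normality.

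The cases $|S|=1$ and $|S|=2$ are easy: $|S|=1$ produces a universal vertex, which with $H$ already forms a wheel, and $|S|=2$ is the same block analysis with only two blocks. The main obstacle is the endgame of the case $|S|=3$ when the corner points split into two or three blocks of comparable size: then the dominated stretches of $H$ overlap only at their seams, the forced configuration depends delicately on the adjacencies among the (at most three) covering vertices and the seam vertices, and --- crucially --- some of the small graphs arising in intermediate steps \emph{are} \nhcag s (for example a $4$-cycle together with one extra vertex adjacent to three of its vertices), so the extraction must be pushed far enough to reach a genuine obstruction. Laying out and dispatching this moderate list of subcases is where the real work lies.
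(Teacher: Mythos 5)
The paper does not actually prove this proposition: it is imported verbatim from \cite{mckee-03-restricted-cag,lin-13-nhcag-and-subclasses} and used only as motivation, so there is no in-paper proof to compare yours against. Judged on its own, your write-up is a proof \emph{plan} rather than a proof. The scaffolding is sound: the corner points $p_1,\dots,p_k$ exist with the stated properties, the observation that an arc containing $p_i$ belongs to a vertex in $\{v_i,v_{i+1}\}$ or adjacent to both is correct, the case of a block with $k-1$ or $k$ corner points does yield a wheel, and your self-contained argument that a wheel has no normal Helly model is valid (and is needed, since only the \emph{easy} direction of Theorem~\ref{thm:characterization} may be invoked here --- the hard direction is proved in the paper \emph{after} this proposition, so you should say explicitly that all you use is that each listed graph fails to be a \nhcag). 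But the entire content of the statement lives in the endgame you defer: showing that a minimal covering pair or triple whose blocks each span at most $k-2$ corner points still forces one of the listed subgraphs. You acknowledge this is ``where the real work lies'' and do not carry it out, so the proof is incomplete exactly where it is non-trivial.

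To see that this gap is not a formality, compare Lemma~\ref{lem:non-helly} of the paper, which treats precisely a set $U$ of two or three vertices with $\bigcup_{u\in U}N_H[u]=H$: it produces a wheel only under the additional hypothesis that each end of $N_H[u]$ is adjacent to at least two vertices of $U$, a condition the paper earns by first running Lemmas~\ref{lem:non-consecutive} and~\ref{lem:non-consecutive-2} on the graph. In your setting (an arbitrary model, an arbitrary minimal covering set $S$) pairwise adjacency of $S$ is automatic, but that second condition is not, and without it the ``hole completely adjacent to the third vertex'' construction does not go through; one instead lands in the menagerie of small graphs ($K_{2,3}$, twin-$C_5$, domino, $\overline{C_6}$, FIS-1, FIS-2) whose extraction is a genuine, delicate case analysis --- essentially the union of the analyses in Lemmas~\ref{lem:non-consecutive} and~\ref{lem:non-consecutive-2}. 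Either carry out that analysis in full, or restructure the argument so that it reduces cleanly to those lemmas (or simply cite the sources, as the paper does).
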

These observations inspire us to recognize \nhcag s as follows.  If
the input graph is chordal, it suffices to check whether it is an
interval graph.  Otherwise, we try to build a circular-arc model of
it, and if we succeed, verify whether the model is normal and Helly.  Lin
et al.~\cite{lin-13-nhcag-and-subclasses} showed that this approach
can be implemented in linear time.  Moreover, if there exists a set of
at most three arcs covering the circle, then their algorithm returns
it as a certificate.
This algorithm, albeit conceptually simple, suffers from twofold
weakness.  First, it needs to call some recognition algorithm for
circular-arc graphs, while all known algorithms are extremely
complicated.  Second, it is very unlikely to deliver a negative
certificate in general.  

The second main result of this paper is the following direct
certifying algorithm for recognizing \nhcag s, which would be
desirable for both efficiency and the detection of negative
certificates.  From now on, unless otherwise stated, whenever we refer
to a ``minimal forbidden induced subgraph'' it should be understood a
minimal forbidden induced subgraph for the class of normal Helly
circular-arc graphs.  We use $n := |V(G)|$ and $m := |E(G)|$
throughout.
\begin{theorem}\label{thm:certifying-algorithm}
  There is an $O(n+m)$-time algorithm that given a graph $G$, either
  constructs a normal and Helly circular-arc model of $G$, or finds a
  minimal forbidden induced subgraph of $G$.
\end{theorem}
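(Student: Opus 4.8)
The plan is to branch on whether $G$ is chordal, reducing the easy cases to interval graphs and concentrating the real work on connected non-chordal inputs. The main leverage there is Proposition~\ref{thm:always-noraml-and-helly}: for a non-chordal $G$, any circular-arc model is automatically normal and Helly, so it suffices to build \emph{some} circular-arc model or, failing that, to exhibit a minimal forbidden induced subgraph.

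First I would dispose of disconnected $G$. In any circular-arc model the arcs of a connected graph occupy a connected part of the circle, so a graph with at least two components is a normal Helly circular-arc graph if and only if every component is an interval graph. I would therefore run a certifying interval recognition algorithm \cite{kratsch-06-certifying-interval-and-permutation} on each component: if all succeed, I lay the interval models side by side around the circle; if some component $C$ fails, I return a minimal forbidden induced subgraph of $C$ --- one of the four chordal graphs of Fig.~\ref{fig:at} when $C$ is chordal (each of which is also minimal for our class by Theorem~\ref{thm:characterization}), or, when $C$ contains a hole $H$, the $C^*$ consisting of $H$ together with any vertex of another component. If $G$ is connected and chordal, the same certifying interval recognition finishes the job, since among chordal graphs the minimal forbidden induced subgraphs for interval graphs are exactly the four graphs of Fig.~\ref{fig:at} \cite{lekkerkerker-62-interval-graphs}.

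The heart is a connected, non-chordal $G$. I would compute a hole $H=h_1h_2\cdots h_k$ in $O(n+m)$ time \cite{tarjan-84-chordal-recognition} and use it as a rigid backbone. For each $v\notin V(H)$ I inspect $N_H(v)=N(v)\cap V(H)$: if it is all of $V(H)$, output the wheel on $V(H)\cup\{v\}$; if it is empty, output the $C^*$ on $V(H)\cup\{v\}$. Otherwise I invoke the following structural fact, which I would prove first: in any circular-arc model of $G$ the arcs of $H$ cover the whole circle --- otherwise cutting at an uncovered point would realize the induced cycle $H$ by intervals on a line, which is impossible --- and hence the proper connected arc of $v$ meets a cyclically consecutive block of $A_{h_1},\dots,A_{h_k}$, so $N_H(v)$ must be a set of consecutive vertices of the cycle $H$. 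If some $N_H(v)$ violates this, a short case analysis on a ``gap'' of $N_H(v)$ and the behaviour of $v$ along the rest of $H$ yields a minimal forbidden induced subgraph: a $C^*$ or a wheel built from a sub-hole through $v$ and a suitable non-neighbour or universal neighbour when $k$ is large, and in the small cases a twin-$C_5$, a $K_{2,3}$, or another graph of Fig.~\ref{fig:normal-helly}. All of this costs $O(n+m)$.

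Once every vertex outside $H$ is known to attach to a consecutive block of $H$, the cyclic order of the blocks around the circle is forced by $H$, and I would finish by building the model window by window: after placing $A_{h_1},\dots,A_{h_k}$, the vertices whose blocks fall inside a fixed stretch of $H$ form, with the relevant $h_i$, an instance that must be realized by intervals, which I test and build with a certifying interval-model construction and then splice back in. A failure in some window --- an interval obstruction that is one of the four chordal graphs, a hidden hole that together with a far-off vertex of $H$ spans a $C^*$, or two overlapping blocks whose forced realizations clash and so span a wheel or a longer $C^*$ --- is converted into a minimal forbidden induced subgraph; if no window fails, the assembled circular-arc model is, by Proposition~\ref{thm:always-noraml-and-helly}, normal and Helly, and is returned. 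The main obstacle is precisely this last stage: organizing the window decomposition so that it is computable and checkable in linear time, \emph{and} --- using the completeness of the obstruction set in Theorem~\ref{thm:characterization} --- so that every way in which the construction can fail is matched with one of the minimal forbidden induced subgraphs. This case analysis and its linear-time implementation is where the bulk of the work lies.
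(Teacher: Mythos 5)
Your opening moves match the paper's: handle the chordal case by certifying interval recognition, find a hole $H$ in a non-chordal $G$, and classify each vertex by its neighbourhood on $H$ (empty gives a $C^*$, full gives a wheel, non-consecutive gives one of the small obstructions --- this is exactly the paper's Lemma~\ref{lem:non-consecutive}). But the proposal has a genuine gap exactly where you say ``the bulk of the work lies.'' The paper's central idea is to replace your ``window by window'' assembly with a \emph{single global} reduction: it cuts the circle at $N[h_0]$, builds the auxiliary graph $\mho(G)$ with two copies $L,R$ of $N[h_0]$ glued to the two ends of $G-N[h_0]$, and shows that one interval-recognition call on $\mho(G)$ either yields a circular-arc model of $G$ (Theorem~\ref{lem:build-ca-model}) or a minimal non-interval subgraph that can be translated back into a \badgraph\ of $G$ (Theorem~\ref{thm:negative-certificate}, occupying all of Section~3). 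Your window decomposition is never defined precisely enough to assess: you do not say how windows overlap, how the interval models of adjacent windows are forced to agree on their shared vertices, or why a greedy per-window construction cannot fail on a yes-instance; and the claim that ``every way in which the construction can fail is matched with one of the minimal forbidden induced subgraphs'' is asserted via Theorem~\ref{thm:characterization} rather than proved. That appeal is also backwards relative to this paper: here Theorem~\ref{thm:characterization} is \emph{deduced from} the correctness of the recognition algorithm, and even granting the characterization from prior work, knowing an obstruction exists does not give a linear-time procedure to locate a minimal one --- which is the whole point of Theorem~\ref{thm:negative-certificate}.

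A second, smaller gap: Proposition~\ref{thm:always-noraml-and-helly} only says that if $G$ \emph{is} a non-chordal normal Helly circular-arc graph then every model is normal and Helly. A non-chordal circular-arc graph that is \emph{not} in the class (such as the graph of Fig.~\ref{fig:normal-and-helly}) can still pass all your per-vertex tests and admit a circular-arc model, which your plan would then wrongly certify. The paper closes this with a final verification step: it checks whether the constructed model has two or three arcs covering the circle and, if so, converts them into a wheel via Lemma~\ref{lem:non-helly}. Your proposal needs an analogous check (e.g., two adjacent vertices $u,v$ with $N_H[u]\cup N_H[v]=H$) and an extraction argument for it.
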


It is clear that each graph specified in
Theorem~\ref{thm:characterization} is a minimal forbidden induced
subgraph.  First, every graph in Fig.~\ref{fig:at} is chordal but
non-interval graph, and thus cannot be a \nhcag.  Second, a $C^*$ is
not a circular-arc graph, while a wheel cannot be arranged without
three or less arcs covering the circle.  Third, every graph in
Fig.~\ref{fig:normal-helly} has only a small number of vertices and
can be easily checked.  Therefore, to prove
Theorem~\ref{thm:characterization}, it suffices to show that a graph
containing none of them is a \nhcag.  That fact was actually proved in
\cite{grippo-12-cag-without-dominating-triple}, but the resulting
proof of Theorem 1.1 given there does not provide a linear-time
procedure to find the corresponding forbidden induced subgraphs when
the graph is not a normal Helly circular-arc graph.  Since the
algorithm we use to prove Theorem~\ref{thm:certifying-algorithm}
always finds such a subgraph in this case,
Theorem~\ref{thm:characterization} follows from the correctness proof
of our algorithm as a corollary.

Let us briefly discuss the basic idea behind the way we deal with a
non-chordal graph $G$.  If $G$ is a \nhcag, then for any vertex $v$ of
$G$, both $N[v]$ and its complement induce nonempty interval
subgraphs.  The main technical difficulty is how to combine interval
models for them to make a circular-arc model of $G$.  For this purpose
we build an auxiliary graph $\mho(G)$ by taking two identical copies
of $N[v]$ and appending them to the two ends of $G - N[v]$
respectively.  {The shape of symbol $\mho$ is a good hint for
  understanding the structure of the auxiliary graph.} We show that
$\mho(G)$ is an interval graph and more importantly, a circular-arc
model of $G$ can be produced from an interval model of $\mho(G)$.  On
the other hand, if $G$ is not a \nhcag, then $\mho(G)$ cannot be an
interval graph.  In this case we use the following procedure to obtain
a minimal forbidden induced subgraph of $G$.
\begin{theorem}\label{thm:negative-certificate}
  Given a minimal non-interval induced subgraph of $\mho(G)$, we can
  in $O(n+m)$ time find a \badgraph\ of $G$.
\end{theorem}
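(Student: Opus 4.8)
The plan is to reduce the problem to two separate analyses: the case where the minimal non-interval induced subgraph of $\mho(G)$ is a hole, and the case where it is one of the finitely many chordal obstructions of Fig.~\ref{fig:at} (by Lekkerkerker--Boland, these are the only possibilities). Recall that $\mho(G)$ is assembled from one copy of $G-N[v]$ together with two copies, say $A$ and $B$, of the closed neighborhood $N[v]$, glued on its two ends. Write $\pi$ for the natural projection $V(\mho(G))\to V(G)$ that identifies the two copies $A$ and $B$ of $N[v]$ and is the identity elsewhere; the preimage of every vertex of $G$ has size one or two. First I would observe that because $A$ and $B$ both induce interval graphs in $\mho(G)$ and $A\cup B$ is a module-like arrangement, any minimal non-interval induced subgraph $F$ of $\mho(G)$ must use vertices from the ``middle'' part $G-N[v]$; in particular $F$ cannot be swallowed inside a single copy of $N[v]$.

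The heart of the argument is to push $F$ down to $G$ along $\pi$. If $\pi$ restricted to $V(F)$ is injective, then $F$ already sits in $G$ as an induced subgraph, so $G$ contains a non-interval---hence non-normal-Helly---induced subgraph, and I can extract a minimal forbidden induced subgraph of $G$ from it in linear time (scanning vertices and testing, in the spirit of \cite{kratsch-06-certifying-interval-and-permutation,lindzey-13-find-forbidden-subgraphs,lindzey-13-find-forbidden-subgraphs}). The interesting situation is when $\pi$ collapses some pair $\{a_A,a_B\}$ with $\pi(a_A)=\pi(a_B)=a\in N[v]$; then in $G$ the two copies of $a$ become one vertex, and $F$ maps to a closed walk rather than an induced subgraph. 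Here I would argue that the image $\pi(V(F))$, together with $v$ itself, contains the vertex set of a hole of $G$ or of a wheel/$C^*$ configuration: the two copies of a common vertex $a$ of $N[v]$ force two internally disjoint paths in $G-N[v]$ between vertices of $N[v]$, and $v$ is adjacent to all of $N[v]$, so closing these paths through $v$ (or bypassing $v$) produces a hole $C$, whose interaction with $v$ is governed by which vertices of $C$ lie in $N[v]$---full adjacency gives a wheel, none gives a $C^*$, and a partial pattern must contain one of the small graphs in Fig.~\ref{fig:normal-helly} or again a wheel. Each of these conclusions identifies a forbidden induced subgraph, and a minimal one is then carved out by a bounded number of adjacency queries, which is $O(n+m)$.

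The main obstacle I anticipate is the bookkeeping in the collapsing case: controlling \emph{which} vertices of $F$ get identified and ensuring that the resulting object in $G$ is genuinely an induced hole/wheel/$C^*$ rather than something with unwanted chords. The right tool is the minimality of $F$ in $\mho(G)$, which tightly constrains $F$ to be either a hole or one of a short explicit list; for each such $F$ one can enumerate the possible ``fold patterns'' under $\pi$ and verify by hand that the image always carries a forbidden configuration of $G$. Because the list of $F$'s is finite and each case is a finite check, the whole extraction runs in linear time; the only superconstant work is the one-time linear scan that minimizes the forbidden subgraph of $G$ at the end, which is exactly the budget allowed by the theorem.
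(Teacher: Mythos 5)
Your high-level plan (split into the hole case and the Lekkerkerker--Boland case, then push $F$ down to $G$ and analyze fold patterns) matches the paper's outline, but the central step fails. You claim that if the projection $\pi$ is injective on $V(F)$ then $F$ ``already sits in $G$ as an induced subgraph.'' This is false: the obstruction is not only the identification of the two copies of a vertex, but the much more pervasive phenomenon the paper calls a \emph{bad pair} --- two vertices $x,y$ of $\mho(G)$ with $\og{x}\sim\og{y}$ in $G$ but $x\not\sim y$ in $\mho(G)$. Such pairs arise with $\og{x}\neq\og{y}$, e.g.\ $u\in\overline{\oo}$ and $v^r$ when $uv\in\ec$, or $u^l$ and $v^r$ for adjacent $u,v\in\oo$; none of these is a ``collapse'' in your sense. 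Consequently even an injectively projected $F$ can acquire chords in $G$, and essentially all of the paper's case analysis (Proposition~\ref{lem:find-bad-pair}, Lemmas~\ref{lem:at-no-w} and~\ref{lem:at-with-w}, and the nontrivial cases of Lemma~\ref{lem:hole}) is devoted to classifying which bad pairs occur in each of the finitely many $F$'s and which small forbidden graph each pattern yields. Your proposal collapses all of this into the single sentence about ``fold patterns,'' which is precisely where the work lies.

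Two further steps are wrong or missing. First, ``non-interval hence non-normal-Helly'' is false: a hole of $\mho(G)$ that projects to an induced hole of $G$ is \emph{not} a forbidden subgraph (holes are normal Helly circular-arc graphs); the paper must combine it with $h_0$ to produce a $C^*$ or a wheel, and your premise that $L$ and $R$ induce interval graphs is unjustified when $G$ is not a normal Helly circular-arc graph (the paper explicitly treats holes contained in $L$ or $R$). Second, the vertex $w$ lies outside the domain of your projection $\pi$ and cannot be ``pushed down'' at all; the paper needs a separate argument (Lemma~\ref{lem:at-with-w}) that replaces $w$ by a common neighbour in $\overline{\oo}$ when one exists and otherwise builds a \dag\ or \ddag\ from the two edges of $\ecc$ witnessing $w$'s neighbours. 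Without the bad-pair machinery, the $h_0$-augmentation for holes, and the treatment of $w$, the extraction you describe does not go through.
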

The crucial idea behind our certifying algorithm is a novel
correlation between normal Helly circular-arc graphs and interval
graphs, which can be efficiently used for algorithmic purpose.  This
was originally proposed in the detection of small forbidden induced
subgraph of interval graphs \cite{cao-14-almost-interval-recognition},
i.e., the opposite direction of the current paper.  In particular, in
\cite{cao-14-almost-interval-recognition} we have used a similar
definition of the auxiliary graph and pertinent observations.
However, the main structures and the procedures for the detection of
forbidden induced subgraphs divert completely.  For example, the most
common forbidden induced subgraphs in
\cite{cao-14-almost-interval-recognition} are $4$- and $5$-holes,
which, however, are allowed in normal Helly circular-arc graphs.  This
means that the interaction between $N[v]$ and $G - N[v]$ are far more
subtle, and thus the detection of \badgraph s in the current paper is
significantly more complicated than that of
\cite{cao-14-almost-interval-recognition}.

\section{The recognition algorithm}\label{sec:recognition}
All graphs are stored as adjacency lists.  We use the customary
notation $v\in G$ to mean $v\in V(G)$, and $u\sim v$ to mean $uv\in
E(G)$.  The \emph{degree} of a vertex $v$ is defined by $d(v) :=
|N(v)|$, where $N(v)$, called the \emph{neighborhood} of $v$,
comprises all vertices $u$ such that $u\sim v$.  The \emph{closed
  neighborhood} of $v$ is defined by $N[v] := N(v) \cup \{v\}$.  For a
vertex set $U$, its closed neighborhood and neighborhood are defined
by $N[U] := \bigcup_{v \in U} N[v]$ and $N(U) := N[U] \backslash U$,
respectively.  Exclusively concerned with induced subgraphs, we use
$F$ to denote both a subgraph and its vertex set.

Consider a circular-arc model $\cal A$.  If every point of the circle
is contained in some arc in $\cal A$, then we can find an
inclusion-wise minimal set $X$ of arcs that cover the entire circle.
If $\cal A$ is normal and Helly, then $X$ consists of at least four
vertices and thus corresponds to a hole.  Therefore, a \nhcag\ $G$ is
chordal if and only if it is an interval graph, for which it suffices
to call the algorithms of
\cite{kratsch-06-certifying-interval-and-permutation,lindzey-13-find-forbidden-subgraphs}.
We are hence focused on graphs that are not chordal.  We call the
algorithm of Tarjan and Yannakakis
\cite{tarjan-84-chordal-recognition} to detect a hole $H$.
\begin{proposition}\label{lem:fundamental}
  Let $H$ be a hole of a circular-arc graph $G$.  In any circular-arc
  model of $G$, the union of arcs for $H$ covers the whole circle.
  In other words, $N[H] = V(G)$.
\end{proposition}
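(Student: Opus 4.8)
The plan is to establish the two assertions in order: first that the arcs assigned to the vertices of $H$ cover the whole circle, and then to read off $N[H]=V(G)$ as an immediate consequence.

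For the first part I would fix an arbitrary circular-arc model of $G$ and argue by contradiction; write $A_v$ for the arc assigned to $v$. Suppose some point $p$ of the circle is covered by no $A_v$ with $v\in H$. Cutting the circle open at $p$ turns it into a line, and since $p$ lies outside every arc $A_v$ with $v\in H$, each such arc becomes an interval on this line. Moreover the cut does not change any pairwise intersection among these arcs: two of them meet at a point $q\ne p$ on the circle exactly when they meet at $q$ on the line, and their intersection, containing no copy of $p$, is entirely captured by the line. Hence we would obtain an interval model of $G[H]$. But $G[H]$ is an induced cycle of length at least four, which is not chordal and hence not an interval graph — a contradiction. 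Therefore $\bigcup_{v\in H}A_v$ is the entire circle.

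For the second part I would take an arbitrary vertex $u\in G$, choose any point $q$ in its (nonempty) arc $A_u$, and invoke the first part to find $v\in H$ with $q\in A_v$. Then $q\in A_u\cap A_v$, so $u=v$ or $u\sim v$; in either case $u\in N[v]\subseteq N[H]$. Since $u$ is arbitrary, $N[H]=V(G)$.

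I do not expect a serious obstacle here. The only point that needs a moment's care is the observation that opening the circle at a point missed by all the arcs of $H$ preserves their intersection pattern exactly, which is what reduces the contradiction to the standard fact that holes are not interval graphs. This is also precisely where the hypothesis that a hole has length at least four is used: for a triangle the three arcs can all lie inside a single interval and fail to cover the circle.
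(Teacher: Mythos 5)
Your argument is correct: cutting the circle at a point missed by all arcs of $H$ would yield an interval model of an induced cycle of length at least four, which is impossible since such a cycle is not chordal, and the second assertion then follows by picking any point of $A_u$ and locating it inside some arc of $H$. The paper states this proposition without proof, treating it as a known fact, and your cut-open-the-circle argument is exactly the standard justification it leaves implicit, so there is nothing to reconcile.
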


The indices of vertices in the hole $H = (h_0 h_1 \ldots h_{\vert
  H\vert-1} h_0)$ should be understood as modulo $|H|$, e.g., $h_{-1}
= h_{|H|-1}$.  By Proposition~\ref{lem:fundamental}, every vertex
should have neighbors in $H$.  We use $N_H[v]$ as a shorthand for
$N[v] \cap H$, regardless of whether $v \in H$ or not.  We start from
characterizing $N_H[v]$ for every vertex $v$: we specify some
forbidden structures not allowed to appear in a \nhcag, and more
importantly, we show how to find a minimal forbidden induced subgraph
if one of these structures exists.  The fact that they are forbidden
can be easily seen from the definition of normal and Helly and
Proposition~\ref{lem:fundamental}, and hence the proofs given below
will focus on the detection of \badgraph s.
\begin{lemma}\label{lem:non-consecutive}
  For every vertex $v$, we can in $O(d(v))$ time find either a proper
  sub-path of $H$ induced by $N_H[v]$, or a \badgraph.
\end{lemma}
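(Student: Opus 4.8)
We have a hole $H = (h_0 h_1 \ldots h_{|H|-1} h_0)$ in a non-chordal graph $G$, and by Proposition~\ref{lem:fundamental} every vertex $v$ satisfies $N_H[v] \neq \emptyset$. The claim splits into two parts: (1) if $G$ is a \nhcag, then $N_H[v]$ induces a proper sub-path of $H$ (a consecutive arc of the cycle $H$, missing at least one vertex); and (2) if this fails, we can in $O(d(v))$ time produce a minimal forbidden induced subgraph. The "forbidden" half is standard — in a normal and Helly circular-arc model, the arc for $v$ meets exactly a consecutive block of the arcs for $H$, and it cannot meet all of them or we would have $\le 2$ arcs among $v$ and a sub-hole covering the circle, contradicting normality/Helly-ness via Proposition~\ref{lem:fundamental} applied to a shorter hole. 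So following the paper's stated convention, the work is entirely in the detection.

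**Detection plan.** I would first walk the adjacency list of $v$ once, in $O(d(v))$ time, marking which vertices of $H$ lie in $N_H[v]$; since $H$ is fixed I can test membership in $H$ and read cyclic positions in $O(1)$. Now scan around the cycle to detect the failure modes. There are two ways $N_H[v]$ can fail to be a proper sub-path. \emph{Case A: $N_H[v] = H$.} Then $v$ is completely adjacent to the hole $H$, so $G[H \cup \{v\}]$ is a wheel — a forbidden induced subgraph, and already minimal since every proper induced subgraph of a wheel is either a hole-with-a-vertex-of-smaller-degree or chordal; more carefully, I would note that a wheel on $|H|+1$ vertices is listed as minimal, so we simply output $H \cup \{v\}$. \emph{Case B: $N_H[v]$ is nonempty and omits at least one vertex but is not consecutive.} Then, scanning the cycle starting from an omitted vertex, we find two "blocks" of $N_H[v]$ separated by gaps: concretely there exist indices $a < b < c < d$ (cyclically) with $h_a, h_c \in N_H[v]$ and $h_b, h_d \notin N_H[v]$. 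From this configuration I extract a small forbidden subgraph: the two sub-paths of $H$ between consecutive neighbors of $v$, together with $v$, form a structure containing $v$, two of its $H$-neighbors that are "far apart", and the two induced paths of $H$ joining them avoiding $v$; choosing the two shortest such connections yields either a $K_{2,3}$ (when the two connecting paths have length two, i.e., a single internal non-neighbor on each side), a long claw, a whipping top, or one of the graphs in Fig.~\ref{fig:normal-helly} depending on the gap lengths. The key point is that all the relevant pieces have bounded size \emph{once we shrink to the two minimal gaps}, so the whole extraction is $O(d(v)) + O(1)$.

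**Making the extracted subgraph minimal.** The subtlety — and the step I expect to be the main obstacle — is that the raw subgraph pulled out in Case B need not itself be minimal, and we must reduce it to a genuine \badgraph\ without exceeding the $O(d(v))$ budget and without re-scanning all of $G$. The plan is: (i) first contract each gap to a single representative by replacing a long arc $h_i h_{i+1} \cdots h_j$ of $H$ between consecutive $v$-neighbors with one representative non-neighbor $h_k$ in its interior — this is legitimate because $h_k$ together with its two $H$-neighbors (which may or may not see $v$) already captures the obstruction, and it is an $O(1)$ local surgery; (ii) then the candidate lives on a bounded number of vertices (at most, say, the endpoints of the two gaps, their hole-neighbors, and $v$ — a constant), so we can afford to test, for each vertex in the candidate, whether deleting it still leaves a non-\nhcag, by brute force in $O(1)$ time per check against the finite list in Theorem~\ref{thm:characterization} and the two infinite families ($C^*$, wheel) restricted to the bounded size; iterating deletions at most a constant number of times yields a minimal forbidden induced subgraph. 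One has to argue that after step (i) the candidate actually \emph{is} forbidden — i.e., the contraction did not accidentally land us in a \nhcag — which follows because the $\{$two gap-representatives, their hole-neighbors, $v\}$ subgraph still contains a vertex ($v$) whose $H'$-neighborhood on the shrunken hole $H'$ is non-consecutive, and by the forbidden-half of the lemma (the easy direction) such a graph cannot be a \nhcag; being finite, it contains one of the listed minimal obstructions as an induced subgraph, and that is what we return.

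**Where the real difficulty sits.** The genuinely delicate point is step (i): a hole of $G$ together with $v$ and its non-neighbors can be reduced to a \emph{sub-hole} only if the shortcuts we take do not create new chords — i.e., when we replace a hole $H$ by a shorter cycle through a representative $h_k$, we must check that $h_k$'s neighbors \emph{outside} $H$ (in particular whether $h_k \sim v$) don't spoil inducedness. I would handle this by choosing the representative $h_k$ to be an arbitrary non-neighbor of $v$ strictly inside the gap and by never shortening below a $4$-cycle; the case analysis on the two gap-lengths (each $\ge 2$ in cyclic distance, meaning at least one internal non-neighbor) then gives exactly the finite list of outcomes, and matching each outcome to a named minimal forbidden subgraph in Figs.~\ref{fig:at}–\ref{fig:normal-helly} is a finite (if tedious) verification. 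Everything else — the single pass to mark $N_H[v]$, the constant-size brute-force minimization — is routine and clearly within the $O(d(v))$ bound.
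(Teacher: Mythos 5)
There is a genuine gap, and it sits exactly where you flag ``the real difficulty'': your contraction step (i) does not produce an induced subgraph of $G$, and the subsequent constant-size brute-force minimization cannot work because in the hard cases the minimal obstruction is \emph{not} of bounded size. Concretely, suppose $v$ has exactly two neighbors $h_a,h_c$ on $H$ and both gaps between them are long. Replacing a long gap by a single representative non-neighbor $h_k$ leaves $h_k$ with no edges to the rest of your candidate set (it is adjacent to neither $h_a$, $h_c$, nor $v$), so the ``shrunken hole $H'$'' you invoke is not a cycle in the induced subgraph, the easy direction of the lemma does not apply to it, and the candidate need not be forbidden at all. Moreover, the correct certificate in this situation is a $C^*$: the hole formed by $v$ together with one of the two $h_a$--$h_c$ paths of $H$, plus an interior vertex of the other (long) gap as the isolated vertex. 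That hole can be arbitrarily long, so no constant-size minimal forbidden induced subgraph exists here, and any plan that funnels everything into a bounded candidate followed by $O(1)$ minimality checks is structurally doomed. Your list of possible outcomes in Case B tellingly omits $C^*$ and wheel, which are precisely the infinite families needed. (A secondary issue: $N_H[v]$ may split into three or more blocks, not just two, and this also has to be routed to a $C^*$.)

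The paper's proof avoids all of this by an explicit, exhaustive case analysis on the two sorted ``gap'' indices $p_1<p_2$ and $p_3<p_4$: it first returns a $C^*$ whenever one gap is long ($p_2-p_1>3$) or $v$ misses some vertex outside the two identified sub-paths, using the \emph{other} path to build the hole and an interior gap vertex as the isolated vertex; only after these reductions is the configuration genuinely bounded ($2\le p_2-p_1\le 3$ with controlled $|H|$), at which point it names the exact minimal graph ($K_{2,3}$, twin-$C_5$, FIS-1, FIS-2, domino) case by case. Your opening single-pass marking of $N_H[v]$ and the wheel/$C^*$ handling of $N_H[v]=H$ or $\emptyset$ match the paper, but the heart of the detection --- the dispatch between unbounded certificates ($C^*$, wheel) and the finite list --- is missing, so the proof as written does not go through.
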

\begin{proof}
  We pre-allocate a list {\sf IND} of $d(v)$ slots, initially all
  empty.  For each neighbor of $v$, if it is $h_i$, then add $i$ into
  the next empty slot of {\sf IND}.  After all neighbors of $v$ have
  been checked, we shorten {\sf IND} by removing empty slots from the
  end, which leaves $|N_H[v]|$ slots.  If $|N_H[v]|$ is $0$ or $|H|$,
  then we return $H$ and $v$ as a $C^*$ or wheel.  In the remaining
  case, $N_H[v]$ is a nonempty and proper subset of $H$.  We radix
  sort {\sf IND}; let $p$ and $q$ be its first and last elements
  respectively.

  Starting from the first element, we traverse {\sf IND} to the end
  for the first $i$ such that ${\sf IND}[i+1] > {\sf IND}[i] + 1$.  If
  no such $i$ exists, then we return ($h_p\cdots h_q$) as the path.
  In the remaining cases, we may assume that we have found the $i$;
  let $p_1 := {\sf IND}[i]$ and $p_2 := {\sf IND}[i+1]$.  We continue
  to traverse from $i+1$ to the end of {\sf IND} for the first $j$
  such that ${\sf IND}[j+1] > {\sf IND}[j] + 1$.  This step has three
  possible outcomes:
  \begin{inparaenum}[(1)]
  \item if $j$ is found, then $p_3 := {\sf IND}[j]$ and $p_4 := {\sf
      IND}[j+1]$;
  \item if no such $j$ is found, and at least one of $q<|H|-1$ and
    $p>0$ holds, then $p_3 := q$ and $p_4 :=  p + |H|$; and
  \item otherwise ($p=0$, $q=|H|-1$, and $j$ is not found).
  \end{inparaenum}
  In the third case, we return ($h_{p_2}\cdots h_{|H| - 1} h_0 \cdots
  h_{p_1}$) as the path induced by \hv{v}.  In the first two cases,
  $p_3$ and $p_4$ are defined, and $p_4 > p_3 +1$.  In other words, we
  have two nontrivial sub-paths, ($h_{p_1} h_{p_1+1}\dots h_{p_2}$)
  and ($h_{p_3} h_{p_3+1}\dots h_{p_4}$), of $H$ such that $v$ is
  adjacent to their ends but none of their inner vertices.

  If $p_2 - p_1 >3$, then we return ($v h_{p_3} h_{p_3+1}\dots h_{p_4}
  v$) and $h_{p_1 + 2}$ as a $C^*$.  Likewise, if $v\not\sim h_{\ell}$
  for some $\ell$ with $p_2 + 1<\ell<p_1 - 1 + |H|$, then we return
  ($v h_{p_1} h_{p_1+1}\dots h_{p_2} v$) and $h_{\ell}$ as a $C^*$;
  note this must hold true when $v$ is adjacent to both $h_{p_1 - 1}$
  and $h_{p_2 + 1}$.  Hence we may assume $2\le p_2 - p_1\le 3$, and
  without loss of generality, $v\not\sim h_{p_1 - 1}$.  

  If $p_2 - p_1 = 2$, then we return
  \begin{inparaenum}[(\itshape 1\upshape)]
  \item $H\cup \{v\}$ as a $K_{2,3}$ when $|H|=4$;
  \item $H\cup \{v\}$ as a twin-$C_5$ when $|H|=5$ and $|N_H[v]| = 2$;
  \item $H\cup \{v\}$ as an FIS-1 when $|H|=5$ and $|N_H[v]| = 3$; or
  \item $\{h_{p_1-2}, h_{p_1 - 1} \cdots, h_{p_2},v\}$ as a domino
    when $|H|>5$.
  \end{inparaenum}
  Otherwise, $p_2 - p_1 = 3$, and we return
  \begin{inparaenum}[(\itshape 1\upshape)]
  \item $H\cup \{v\}$ as a twin-$C_5$ when $|H|=5$;
  \item $H\cup \{v\}$ as an FIS-2 when $|H|=6$ and $v\not\sim
    h_{p_2+1}$;
  \item ($v h_{p_1} h_{p_1-1} h_{p_1-2} v$) and $h_{p_2-1}$ as a $C^*$
    when $|H|=6$ and $v\sim h_{p_2+1}$; or
  \item ($v h_{p_1} h_{p_1-1} h_{p_1-2} v$) and $h_{p_2-1}$ as a $C^*$ when
    $|H|>6$.
  \end{inparaenum}

  The construction of {\sf IND} takes $O(d(v))$ time.  In the same
  time we can traverse it to find indices $p_1,p_2,p_3,p_4$.  The rest
  uses constant time.  This concludes the time analysis and completes
  the proof.
\end{proof}

We designate the ordering $h_0, h_1, h_2, \cdots$ of traversing $H$ as
\emph{clockwise}, and the other \emph{counterclockwise}.  In other
words, edges $h_0 h_{1}$ and $h_0 h_{-1}$ are clockwise and
counterclockwise from $h_0$, respectively.  Now let $P$ be the path
induced by $\hv{v}$.  We can assign a direction to $P$ in accordance
to the direction of $H$, and then we have clockwise and
counterclockwise ends of $P$.  For technical reasons, we assign
canonical indices to the ends of the path $P$ as follows.
\begin{definition}
  For each vertex $v\in G$, we denote by \tail{v} and \head{v} the
  indices of the counterclockwise and clockwise, respectively, ends of
  the path induced by \hv{v} in $H$ satisfying
  \begin{itemize}
  \item $- |H| < \tail{v} \le 0\le \head{v} < |H|$ {if } $h_0\in
    N_H[v]$; or
  \item $0< \tail{v}\le \head{v}< |H|$, {otherwise}.
  \end{itemize}
\end{definition}
It is possible that $\head{v} = \tail{v}$, when $|\hv{v}| = 1$.  In
general, $\head{v} - \tail{v} = |\hv{v}| - 1$, and $v=h_i$ or $v\sim
h_i$ for each $i$ with $\tail{v}\le i\le \head{v}$.  The indices
$\tail{v}$ and $\head{v}$ can be easily retrieved from
Lemma~\ref{lem:non-consecutive}, with which we can check the adjacency
between $v$ and any vertex $h_i\in H$ in constant time.  Now consider
the neighbors of more than one vertices in $H$.
\begin{lemma}\label{lem:non-consecutive-2}
  Given a pair of adjacent vertices $u,v$ such that $N_H[u]$ and
  $N_H[v]$ are disjoint, then in $O(n+m)$ time we can find a
  \badgraph.
\end{lemma}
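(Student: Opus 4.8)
The plan is to show that if $u,v$ are adjacent but $N_H[u]\cap N_H[v]=\emptyset$, then $G$ already contains one of the forbidden configurations we can name and extract quickly. First I would invoke Lemma~\ref{lem:non-consecutive} separately for $u$ and for $v$ to obtain the paths $P_u$ and $P_v$ induced by $\hv{u}$ and $\hv{v}$, together with their canonical endpoint indices $\tail{u},\head{u}$ and $\tail{v},\head{v}$; this costs $O(d(u)+d(v))$ time and lets me test adjacency of $u$ or $v$ to any hole vertex in constant time. Since $N_H[u]$ and $N_H[v]$ are disjoint and each is a proper non-empty sub-path of $H$ (by Proposition~\ref{lem:fundamental} neither is empty), the two sub-paths are separated along the cycle by at least one hole-vertex on each side; so there are two ``gap'' arcs of $H$, one clockwise from $\head{u}$ to $\tail{v}$ and one from $\head{v}$ back to $\tail{u}$, each containing a hole vertex adjacent to neither $u$ nor $v$ in its interior. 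Let $a$ be a hole vertex in one gap and $b$ a hole vertex in the other.

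Next I would build a short closed walk using $u$, $v$, one endpoint of $P_u$, one endpoint of $P_v$, and the vertices $a,b$, and argue it contains an induced cycle. Concretely: $u$ is adjacent to some $h_i\in N_H[u]$ and to $h_{i}$'s one or more consecutive hole-neighbors; pick the endpoints of $P_u$ closest to the two gaps, call them $x^-$ (counterclockwise end) and $x^+$ (clockwise end), and similarly $y^-,y^+$ for $P_v$. Then $u\sim x^+$, $u\sim v$, $v\sim y^-$, and along the hole from $x^+$ we pass through the vertices of the first gap (including $a$) reaching $y^-$; none of these gap vertices see $u$ or $v$. Walking the other way, from $y^+$ through the second gap (including $b$) to $x^-$, again avoids $u,v$. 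Splicing $u$, $v$, and one of the two hole-arcs gives a cycle $C$ on $u,v$ and a sub-path of $H$; if $C$ is induced it is a hole, and together with the fact that the opposite gap vertex (say $b$) is nonadjacent to everything on $C$ except possibly its hole-portion, I can either find a $C^*$ (the cycle $C$ plus a far vertex $b$ with no neighbor on $C$) or, if $b$ does have a neighbor on $C$'s hole part, shrink $C$ using that chord until it becomes a genuine hole and recurse. The standard chord-shrinking argument (as in Tarjan--Yannakakis hole detection) keeps this in $O(n+m)$ time.

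The delicate point — and the main obstacle — is that $C$ as first written need not be induced: the endpoints $x^+$ and $y^-$ might be adjacent to each other, or the sub-path of $H$ we chose might be chordless but short (length $3$, giving a triangle rather than a hole), or $u$ might have further neighbors along the hole-arc we are trying to use even though it has none in the gap. To handle this I would do a careful case split on the cyclic distance between $N_H[u]$ and $N_H[v]$ in each gap: when both gaps are ``long'' (at least two hole-vertices wide, or enough so that $u\cup v\cup(\text{hole-arc})$ spans at least four vertices with no chords), we get a $C^*$ outright by taking the short cycle through one gap plus a vertex from the other; when a gap is ``short'' we are forced into one of the small finite obstructions of Fig.~\ref{fig:normal-helly} — a domino, $\overline{C_6}$, or one of FIS-1/FIS-2 — exactly as enumerated in Lemma~\ref{lem:non-consecutive}, and we verify membership by a constant-size check on $H\cup\{u,v\}$ after identifying $|H|$ and the precise positions of $\tail{u},\head{u},\tail{v},\head{v}$. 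In every branch the extracted subgraph is pinned down by a constant amount of information read off from $P_u$, $P_v$, and the two gap vertices $a,b$, so the whole procedure runs in the claimed $O(n+m)$ time (dominated by the one hole-detection/chord-shrinking call, should it be needed).
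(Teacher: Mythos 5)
Your opening reduction contains a false step that undercuts the whole construction. From the disjointness of $N_H[u]$ and $N_H[v]$ you conclude that each of the two ``gap'' arcs of $H$ contains in its interior a hole vertex adjacent to neither $u$ nor $v$. That is not true: the two sub-paths induced by $N_H[u]$ and $N_H[v]$ can abut, so that one or both gaps consist of a single edge of $H$ with no intermediate vertex at all (e.g., $N_H[u]=\{h_0,h_1\}$ and $N_H[v]=\{h_2,h_3\}$ on a $4$-hole, which must yield a $\overline{C_6}$). The paper's proof opens with exactly this case --- both connecting sub-paths trivial, $|H|=|N_H[u]|+|N_H[v]|$ --- and it is where $K_{2,3}$, $\overline{C_6}$, FIS-1 and one of the domino cases come from. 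Since your vertices $a$ and $b$ need not exist, the ``cycle through one gap plus a far vertex from the other gap'' template that carries your argument simply has no far vertex to use there, and your fallback for ``short'' gaps does not repair this: you defer to the enumeration of Lemma~\ref{lem:non-consecutive}, but that lemma classifies a \emph{single} vertex with non-consecutive hole-neighbourhood and its output list is not the right one here (the configurations forced by a disjoint \emph{pair} include $\overline{C_6}$, \dag, long claw and twin-$C_5$, none of which your sketch produces). Moreover a ``constant-size check on $H\cup\{u,v\}$'' is not constant-size when $|H|$ is large; one must identify a bounded vertex subset such as the two gap paths together with $\{u,v\}$, which is precisely the bookkeeping your proposal omits.

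The second gap is the $C^*$ extraction. The paper never needs chord-shrinking: when a gap $P_1$ is used to close a hole $(v\,u\,P_1\,v)$, that cycle is automatically induced (its interior vertices see neither $u$ nor $v$, and the two hole-endpoints lie in the disjoint sets $N_H[u]$ and $N_H[v]$), and the witness of nonadjacency is a \emph{specific} hole vertex $h_{\head{v}+1}$ or $h_{\head{v}+2}$ whose relation to the cycle is already determined by the $\tail{\cdot},\head{\cdot}$ data --- under side conditions on $|N_H[u]|$, $|N_H[v]|$ and the gap lengths that your proposal does not verify. Your alternative --- if $b$ has a neighbour on the cycle, shrink along the chord and recurse --- is not a valid substitute: $b$ is a hole vertex, so it has no chord to the $H$-portion of the cycle, and if it is adjacent to $u$ or $v$ it is not in a gap at all; in the genuinely problematic subcases (e.g., $|N_H[u]|=1$ with both gaps nontrivial) no $C^*$ exists and the correct output is a \dag, long claw, twin-$C_5$ or FIS-2 depending on $|N_H[v]|$ and $|H|$. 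In short, the overall strategy (long gaps give $C^*$, short gaps give finite obstructions) matches the paper in spirit, but the case split as you set it up excludes a mandatory case, the finite-obstruction list is wrong, and the $C^*$ argument does not go through as written.
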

\begin{proof}
  Clearly, neither of $u$ and $v$ can be in $H$.  We may assume both
  $N_H[u]$ and $N_H[v]$ induce proper sub-paths; otherwise we can call
  Lemma~\ref{lem:non-consecutive}.  They partition $H$ into four
  sub-paths, two of which are induced by $N_H[u]$ and $N_H[v]$.
  Denote by $P_1$ and $P_2$ the other two sub-paths; their ends are
  adjacent to $u$ and $v$ respectively, while their inner vertices, if
  any, are adjacent to neither $u$ nor $v$..

  Assume first that both $P_1$ and $P_2$ are of length $1$, then $|H|
  = |N_H[u]| + |N_H[v]|$.  If $u$ is adjacent to a single vertex $h_i$
  in $H$, (noting that $|N_H[v]|\ge 3$,) then we return $\{h_i,
  h_{i-1}, h_{i+1}, u, v\}$ as a $K_{2,3}$.  A symmetric argument
  applies when $|N_H[v]|= 1$.  If $|N_H[u]| = |N_H[v]| = 2$, then we
  return $H\cup \{u,v\}$ as a $\overline{C_6}$.  It must be in some
  case above if $|H| = 4$, and henceforth we assume $|H| > 4$.  If $u$
  is adjacent to only $h_i$ and $h_{i+1}$ in $H$, (noting that
  $|N_H[v]|\ge 3$,) then we return $\{h_{i-1}, h_i, h_{i+1}, h_{i+2},
  u, v\}$ as an FIS-1.  A symmetric argument applies when $|N_H[v]|=
  2$.  Now that both $|N_H[u]|$ and $|N_H[v]|$ are at least $3$, we
  return $P_1\cup P_2\cup\{u,v\}$ as a domino.

  Assume now that, without loss of generality, $P_2$ is nontrivial.
  We can return ($v u P_1 v$)$+h_{\head{v} + 1}$ (when both $|N_H[u]|
  > 1$ and $|N_H[v]| > 1$) or ($v u P_1 v$) and $h_{\head{v} + 2}$ (when
  the length of $P_1$ is longer than $3$) as a $C^*$.  A symmetric
  argument applies when $\tail{v} - \head{u} > 3$.  In the remaining
  cases, we assume without loss of generality, $|N_H[u]| = 1$, and
  both paths $P_1$ and $P_2$ contain at most $4$ vertices.
  Consequently, $|H\setminus N_H[v]|\le 5$.

  If $P_1$ is also nontrivial, then $|H\setminus N_H[v]|\ge 3$.  We
  return $\{u, v, h_{\tail{v}-1}, h_{\head{v}+1}\}\cup N_H[v]$ as a
  \dag\ when $|N_H[v]| > 1$.
  Now that $|N_H[v]|= 1$, then $|H|\le 6$, and we return
  \begin{inparaenum}[(\itshape 1\upshape)]
  \item $(H\setminus N_H[v])\cup \{u, v\}$ as a long claw when $|H|=
    6$;
  \item  $H\cup\{u,v\}$ as a twin-$C_5$ when $|H|= 4$; or
  \item $H\cup\{u,v\}$ as a FIS-2 when $|H|= 5$.
  \end{inparaenum}
  In the final case, $P_1$ is trivial but $P_2$ is nontrivial, which
  means that neither $u$ nor $v$ is adjacent to $h_{\tail{u}-1}$.  If
  $v\sim h_{\tail{u}-2}$, then we return $\{h_{\tail{u}-2},
  h_{\tail{u}-1}, h_{\tail{u}}, h_{\tail{u}+1}, u, v\}$ as
  \begin{inparaenum}[(\itshape 1\upshape)]
  \item an FIS-1 when $|H| = 4$; or
  \item a twin-$C_5$ when $|H| > 4$.
  \end{inparaenum}
  If $v\not\sim h_{\tail{u}-2}$, then we return
  \begin{inparaenum}[(\itshape 1\upshape)]
  \item $H\cup\{u,v\}$ as a domino when $|H| = 4$; or
  \item ($u v h_1 h_0 u$) and $h_{-2}$ as a $C^*$ when $|H| > 4$.
  \end{inparaenum}
  This procedure enters only one case, which is decided only by
  $N_H[u]$ and $N_H[v]$.  Therefore, it can be done in $O(n+m)$ time.
\end{proof}

\begin{lemma}\label{lem:non-helly}
  Given a set $U$ of two or three pairwise adjacent vertices such that
  \begin{enumerate}[1)]
  \item $\bigcup_{u\in U} N_H[u] = H$; and 
  \item for every $u\in U$, each end of $N_H[u]$ is adjacent to at
    least two vertices in $U$,
  \end{enumerate}
  then we can in $O(n+m)$ time find a \badgraph.
\end{lemma}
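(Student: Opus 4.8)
The configuration is forbidden for the usual reason: in any normal-and-Helly circular-arc model of a graph containing it, the arcs of $U$ would cover the whole circle --- this follows from Proposition~\ref{lem:fundamental} together with condition~(2) --- contradicting the model being normal and Helly. So, exactly as in Lemmas~\ref{lem:non-consecutive} and~\ref{lem:non-consecutive-2}, the real task is to extract a \badgraph, which will always be a $C^*$, a wheel, or one of the graphs of Figures~\ref{fig:at} and~\ref{fig:normal-helly}. I would first dispose of two easy situations: if some $N_H[u]$ ($u\in U$) fails to induce a proper sub-path of $H$, invoke Lemma~\ref{lem:non-consecutive}; and if some pair of (necessarily adjacent) vertices of $U$ has disjoint neighbourhoods in $H$, invoke Lemma~\ref{lem:non-consecutive-2}. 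Hence I may assume every $N_H[u]$ is a proper sub-path of $H$, with ends $h_{\tail u}$ and $h_{\head u}$ already known from Lemma~\ref{lem:non-consecutive}, and the $N_H[u]$ pairwise intersect.

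Suppose $|U|=2$, say $U=\{u_1,u_2\}$. I would first note $|N_H[u_1]|\ge 3$: otherwise $N_H[u_1]$ is a pair of adjacent vertices, both in $N_H[u_2]$ by condition~(2), and since $N_H[u_2]\supseteq H\setminus N_H[u_1]$ by condition~(1), this forces $N_H[u_2]=H$, contradicting properness; symmetrically $|N_H[u_2]|\ge 3$. Then, walking around the arc of $H$ complementary to $N_H[u_1]$, form the cycle $C:=(u_1\, h_{\tail{u_1}}\, h_{\tail{u_1}-1}\, \cdots\, h_{\head{u_1}+1}\, h_{\head{u_1}}\, u_1)$. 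This is a hole: $u_1$ is adjacent to exactly its two cycle-neighbours $h_{\tail{u_1}}$ and $h_{\head{u_1}}$ among the listed vertices (the rest are outside $N_H[u_1]$), the hole-vertices of $C$ form a sub-path of $H$ and hence induce a path, and $h_{\tail{u_1}}\not\sim h_{\head{u_1}}$ because $3\le|N_H[u_1]|\le|H|-1$; moreover $|C|=|H|-|N_H[u_1]|+3\ge 4$. Finally $u_2$ is adjacent to every vertex of $C$: to $u_1$ by hypothesis, to $h_{\tail{u_1}}$ and $h_{\head{u_1}}$ by condition~(2), and to every vertex of $H\setminus N_H[u_1]$ by condition~(1). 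I return $C\cup\{u_2\}$ as a wheel.

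Suppose $|U|=3$, say $U=\{u_1,u_2,u_3\}$. The plan is again to return a wheel with one $u_i$ as hub, but choosing the hub and its hole requires more care. If some pair of $U$ both covers $H$ and satisfies the two conditions for a set of size two, I hand it to the $|U|=2$ analysis. Otherwise I fix a hub $u\in U$, set $\{x,y\}=U\setminus\{u\}$, and use that every vertex of $H\setminus N_H[u]$ lies in $N_H[x]\cup N_H[y]$ and that each end of $N_H[u]$ lies in $N_H[x]$ or in $N_H[y]$; choosing the labelling so that $h_{\tail u}\in N_H[x]$ and $h_{\head u}\in N_H[y]$, the set $P:=N_H[u]\setminus(N_H[x]\cup N_H[y])$ is a (possibly empty) sub-path of $N_H[u]$ whose counter-clockwise neighbour lies in $N_H[x]\cap N_H[u]$ and whose clockwise neighbour lies in $N_H[y]\cap N_H[u]$. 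Closing ``$P$ extended by one vertex on each side'' through $y$ and then $x$ yields a cycle $D$ of length $\ge 4$, all of whose hole-vertices belong to $N_H[u]$; since $u$ is then adjacent to every vertex of $D$, the hole $D$ and the hub $u$ form a wheel, which I return. The few residual cases --- no hub admits such a labelling (so that some end of each $N_H[u]$ is adjacent to all of $U$), or the two overlaps of $N_H[u]$ with $N_H[x]$ and with $N_H[y]$ intersect --- force $|H|$ to be small and would be handled by direct inspection, returning one of $K_{2,3}$, twin-$C_5$, domino, $\overline{C_6}$, FIS-1, FIS-2, long claw, or a suitable $C^*$.

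Every branch above is decided solely from the paths $N_H[u]$, $u\in U$, already on hand from Lemma~\ref{lem:non-consecutive} in total time $O(\sum_{u\in U}d(u))$, plus $O(1)$ more work; so the procedure runs in $O(n+m)$ time, as required. I expect the main obstacle to lie in the $|U|=3$ case --- specifically in verifying that $D$ has no chord: that neither $x$ nor $y$ is adjacent to any vertex of $P$ (immediate) or to the ``wrong'' boundary hole-vertex, and that the two boundary hole-vertices are themselves non-adjacent. Pinning this down forces a split according to the overlap pattern of $N_H[u]$ with $N_H[x]$ and $N_H[y]$ and according to $|H|$, and, as in Lemmas~\ref{lem:non-consecutive} and~\ref{lem:non-consecutive-2}, it is the small values of $|H|$ that generate the finite forbidden subgraphs.
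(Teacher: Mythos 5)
Your overall strategy coincides with the paper's: in every case the certificate is a wheel, obtained by closing the portion of $H$ not covered by all-but-one vertex of $U$ into a hole and using the remaining vertex of $U$ as the hub. For $|U|=2$ your argument is exactly the paper's (the \stpath{h_{\tail{v_1}}}{h_{\head{v_1}}} through the complement of $N_H[v_1]$ plus $v_1$, with $v_2$ as hub), and your extra verification that $|N_H[u_1]|\ge 3$ is a genuine improvement in rigor over the paper, which asserts the hole without comment. For $|U|=3$ your hole $D$ is also the paper's hole in disguise: by condition (1), $N_H[u]\setminus(N_H[x]\cup N_H[y])=H\setminus(N_H[x]\cup N_H[y])$, so $D$ is precisely the \stpath{h_{\tail{x}}}{h_{\head{y}}} through the uncovered part of $H$ closed via $x$ and $y$, with $u$ as hub.

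The gap is in how you dispose of the degenerate overlap patterns in the three-vertex case. You defer them to ``direct inspection,'' asserting they force $|H|$ to be small and would yield one of the finite forbidden subgraphs; neither claim is right, and no inspection is actually performed. The paper instead first assumes that \emph{no} two vertices of $U$ satisfy the lemma's two conditions as a pair (otherwise it recurses into the two-vertex case), and then argues that the three neighbourhoods must overlap cyclically ($h_{\head{v_1}}\in N[v_2]$ and $h_{\tail{v_2}}\in N[v_1]$ after relabelling), so the wheel construction always succeeds and $|H|$ plays no role. Your residual cases are exactly what this reduction eliminates: for instance, if the two overlaps $N_H[u]\cap N_H[x]$ and $N_H[u]\cap N_H[y]$ ``meet on the wrong side,'' i.e., $N_H[x]$ contains both boundary neighbours of $P$, then contiguity forces $N_H[x]=H\setminus P$, whence $N_H[u]\cup N_H[x]=H$ and both ends of each of $N_H[u]$, $N_H[x]$ lie in the other — so the pair $\{u,x\}$ already satisfies the hypotheses and you are back in the two-vertex case; similarly, the configuration in which no hub admits your labelling either reduces to a good pair or contradicts condition (2) for the third vertex. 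To complete your proof you would need to carry out exactly this elimination (or the paper's equivalent cyclic-overlap argument) rather than predicting a finite-graph case analysis that never materializes.
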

\begin{proof}
  Consider first that $U$ contains only two vertices $v_1$ and $v_2$.
  The \stpath{h_{\tail{v_1}}}{h_{\head{v_1}}} whose inner vertices are
  nonadjacent to $v_1$ makes a hole with $v_1$.  This hole is
  completely adjacent to $v_2$, and thus we return a wheel.

  Consider then $U = \{v_1, v_2, v_3\}$.  We may assume that no two
  vertices of $U$ satisfy the condition of the lemma, as otherwise we
  are in the previous case.  Without loss of generality, assume that
  $h_{\head{v_1}}\in N[v_2]$, and then $h_{\tail{v_2}}\in N[v_1]$.
  The \stpath{h_{\tail{v_1}}}{h_{\head{v_2}}} whose inner vertices are
  adjacent to neither $v_1$ or $v_2$ makes a hole with $v_1$ and
  $v_2$.  By assumption, $v_3$ is adjacent to every vertex in the
  hole, and thus we return a wheel.
\end{proof}

Let $\oo := N[h_0]$ and $\overline{\oo} := V(G)\setminus \oo$.  As we
have alluded to earlier, we want to duplicate $\oo$ and append them to
different sides of $\overline{\oo}$.  Each edge between $v\in \oo$ and
$u\in \overline{\oo}$ will be carried by only one copy of $T$, and
this is determined by its direction specified as follows.  We may
assume that none of the Lemmas.~\ref{lem:non-consecutive},
\ref{lem:non-consecutive-2}, and \ref{lem:non-helly} applies to $v$
or/and $u$, as otherwise we can terminate the algorithm by returning
the forbidden induced subgraph found by them.  As a result, $u$ is
adjacent to either $\{h_{\tail{v}},\cdots, h_{-1}\}$ or $\{h_1,\cdots,
h_{\head{v}}\}$ but not both.  The edge $u v$ is said to be clockwise
from $\oo$ if $u\sim h_i$ for $1\le i\le {\head{v}}$, and
counterclockwise otherwise.  Let \ec\ (resp., \ecc) denote the set of
clockwise (resp., counterclockwise) edges from $\oo$, and let \oc\
(resp., \occ) denote the subsets of vertices of $\oo$ that are
incident to edges in \ec\ (resp., \ecc).  Note that $\{\ecc,\ec\}$
partitions edges between $\oo$ and $\overline{\oo}$, but a vertex in
$\oo$ might belong to both \occ\ and \oc, or neither of them.  We have
now all the details for the definition of the auxiliary graph
$\mho(G)$.

\begin{definition}
  The vertex set of $\mho(G)$ consists of $\overline{\oo}\cup L\cup
  R\cup \{w\}$, where $L$ and $R$ are distinct copies of $\oo$, i.e.,
  for each $v\in \oo$, there are a vertex $v^l$ in $L$ and another
  vertex $v^r$ in $R$, and $w$ is a new vertex distinct from $V(G)$.
  For each edge $u v\in E(G)$, we add to the edge set of $\mho(G)$
  \begin{itemize}
  \item an edge $u v$ if neither $u$ nor $v$ is in $\oo$;
  \item two edges $u^l v^l$ and $u^r v^r$ if both $u$ and $v$ are in
    $\oo$; or
  \item an edge $u v^l$ or $u v^r$ if $uv\in \ec$ or $uv\in \ecc$
    respectively ($v\in \oo$ and $u\in \overline\oo$).
  \end{itemize}
  Finally, we add an edge $w v^l$ for every $v\in \occ$.
\end{definition} 

\begin{lemma}\label{lem:construct-mho}
  The numbers of vertices and edges of $\mho(G)$ are upper bounded by
  $2 n$ and $2 m$ respectively.  Moreover, an adjacency list
  representation of $\mho(G)$ can be constructed in $O(n + m)$ time.
\end{lemma}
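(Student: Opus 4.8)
The plan is to verify the two size bounds by a direct counting argument against $G$, and then to spell out a construction that touches each vertex and edge of $G$ only a bounded number of times. For the vertex bound, observe that $V(\mho(G)) = \overline{\oo} \cup L \cup R \cup \{w\}$, where $|L| = |R| = |\oo|$ and $|\overline{\oo}| = n - |\oo|$. Hence $|V(\mho(G))| = (n - |\oo|) + 2|\oo| + 1 = n + |\oo| + 1 \le 2n$, using $|\oo| = |N[h_0]| \ge 1$ and that $h_0$ has a neighbor in $H$ (Proposition~\ref{lem:fundamental}), so in fact $|\oo| \ge 2$; either way $|\oo| + 1 \le n$ whenever $n \ge 2$, and the $n \le 1$ cases are trivial since then $G$ is chordal and we never reach this construction. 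For the edge bound, I would partition $E(G)$ into three parts according to the case analysis in the definition: edges inside $\overline{\oo}$, edges inside $\oo$, and edges between the two. An edge inside $\overline{\oo}$ contributes one edge to $\mho(G)$; an edge inside $\oo$ contributes two (the $L$-copy and the $R$-copy); an edge of $\ec \cup \ecc$ contributes exactly one (since $\{\ec, \ecc\}$ partitions the cross edges, each such edge is routed to exactly one of $v^l, v^r$). The only edges not yet accounted for are the edges $w v^l$ for $v \in \occ$, of which there are $|\occ| \le |\oo|$. So $|E(\mho(G))| \le |E_{\overline{\oo}}| + 2|E_{\oo}| + |\ec| + |\ecc| + |\occ|$. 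Since $\oo = N[h_0]$ induces a graph in which $h_0$ is universal, $|E_{\oo}| \ge |\oo| - 1 \ge |\occ|$, so $2|E_{\oo}| + |\ec| + |\ecc| + |\occ| \le 3|E_{\oo}| + |\ec| + |\ecc| \le$ — hmm, this is not quite clean; the cleaner bound is simply $|E_{\overline{\oo}}| + |\ec| + |\ecc| \le m$ and $|E_{\oo}| + |\occ| \le 2|E_{\oo}| \le 2m$ is too lossy. I would instead note $|\occ| \le |\oo| - 1 \le |E_{\oo}|$ (every vertex of $\occ$ other than possibly $h_0$, together with $h_0$, gives a distinct edge; and $h_0 \notin \occ$ since $h_0 \in \oo$ is not in $\overline{\oo}$, so $h_0$ has no incident cross edge, i.e.\ $h_0 \notin \oc \cup \occ$). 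Thus $|E(\mho(G))| \le |E_{\overline{\oo}}| + 2|E_{\oo}| + |\ec| + |\ecc| + |E_{\oo}| \le$ still three copies of $E_{\oo}$. The right move is: $|E_{\oo}| + |\occ| \le |E_{\oo}| + |\oo| - 1 \le 2|E_{\oo}|$, and separately $|E_{\oo}| + |E_{\overline{\oo}}| + |\ec| + |\ecc| = m$, giving $|E(\mho(G))| = m + |E_{\oo}| + |\occ| \le m + 2|E_{\oo}| \le m + 2m$? No. I will simply argue $|E(\mho(G))| = m + |E_{\oo}| + |\occ|$ and bound $|E_{\oo}| + |\occ| \le |E_{\oo}| + (|\oo|-1) \le |E_{\oo}| + |E_{\oo}| \le 2|E_{\oo}| \le m + \ldots$; the honest inequality is $|E_{\oo}| + |\occ| \le m$, which holds because $E_{\oo}$ is a set of edges of $G$ disjoint from the cross edges whose endpoints meet $\occ$, and $|\occ|$ is at most the number of cross edges. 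Concretely, the map $v \mapsto$ (some cross edge at $v$) is injective from $\occ$ into $\ecc$, so $|\occ| \le |\ecc|$, and therefore $|E_{\oo}| + |\occ| \le |E_{\oo}| + |\ecc| \le m$, whence $|E(\mho(G))| \le 2m$.

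For the construction in $O(n+m)$ time, I would proceed in four passes. First, a pass over $V(G)$ that marks each vertex as lying in $\oo$ or in $\overline{\oo}$ (using the already-computed adjacency to $h_0$), and, for vertices in $\oo$, allocates the two copies $v^l \in L$ and $v^r \in R$; this initializes the $2n$-slot vertex array and the new vertex $w$. Second, for each $v \in \oo$, I would compute $\tail{v}$ and $\head{v}$ via Lemma~\ref{lem:non-consecutive} (total $O(\sum_v d(v)) = O(m)$), so that the clockwise/counterclockwise test for a cross edge becomes a constant-time lookup. Third, a pass over the adjacency lists of $G$: for each edge $uv$, decide which of the three cases it falls in by the membership marks, and append the appropriate one or two edges to the adjacency lists of $\mho(G)$; each edge is processed $O(1)$ times and produces $O(1)$ new incidences. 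Fourth, a pass recording, for each cross edge in $\ecc$, that its $\oo$-endpoint $v$ belongs to $\occ$, and then adding the edges $w v^l$ for all such $v$ (deduplicating with a boolean array). The total is $O(n+m)$.

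The main obstacle is making sure the cross-edge classification is genuinely $O(1)$ per edge and that the algorithm may legitimately assume the classification is well-defined. Well-definedness is exactly the content of the paragraph preceding the definition: if none of Lemmas~\ref{lem:non-consecutive}, \ref{lem:non-consecutive-2}, \ref{lem:non-helly} applies, then for every cross edge $uv$ the vertex $u$ is adjacent to exactly one of the two "arms" $\{h_{\tail{v}}, \dots, h_{-1}\}$, $\{h_1, \dots, h_{\head{v}}\}$ of $N_H[v]$, so $uv$ lands in exactly one of $\ec, \ecc$; if some lemma does apply we have already returned a forbidden induced subgraph and do not build $\mho(G)$ at all. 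To test which arm $u$ meets in constant time, I would, during the second pass, also record for each $u \in \overline{\oo}$ its canonical indices $\tail{u}, \head{u}$ (again via Lemma~\ref{lem:non-consecutive}), and then use the fact that $u \sim h_i$ for some $i$ in a given index interval iff that interval intersects $[\tail{u}, \head{u}]$ modulo $|H|$ — a constant-time interval-overlap check. Everything else is bookkeeping, and the bounds above show the output has size $O(n+m)$, so writing it down also takes $O(n+m)$ time.
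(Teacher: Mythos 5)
Your proof follows essentially the same route as the paper's: the vertex count $n+|\oo|+1$, the edge count $m+|E(G[\oo])|+|\occ|$ bounded via the injection $|\occ|\le|\ecc|$ together with the disjointness of $E(G[\oo])$ and $\ecc$ inside $E(G)$, and a constant number of passes over the adjacency lists that matches the paper's construction procedure step for step. One step of your vertex bound is wrong as stated, however: $|\oo|+1\le n$ does \emph{not} follow from $n\ge 2$ (consider $\oo=V(G)$, e.g.\ a complete graph); what you actually need is $\overline{\oo}\neq\emptyset$, and the reason it holds is not that $n\ge 2$ but that this construction is only reached when $G$ is non-chordal with a hole $H$ of length at least four, so $h_2\notin N[h_0]=\oo$. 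With that substitution the argument is complete; the visible false starts in your edge-count paragraph eventually converge to exactly the inequality the paper uses, so no further repair is needed there.
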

\begin{figure}[h!]
  \vspace*{-5mm}
  \setbox4=\vbox{\hsize28pc \noindent\strut
  \begin{quote}
  \vspace*{-5mm} \footnotesize

  {\sc input}: a graph $G$ and a hole $H$.
  \\
  {\sc output}: the auxiliary graph $\mho(G)$ or a forbidden induced
  subgraph of $G$.
  \\[1ex]
  0 \hspace*{2ex} {\bf for each} $v\in V(G)$ {\bf do} compute \tail{v}
  and \head{v};
  \\
  1 \hspace*{2ex} {\bf for each} $v\in \oo$ {\bf do}
  \\
  1.1 \hspace*{3ex} add vertices $v^l$ and $v^r$;
  \\
  1.2 \hspace*{3ex} {\bf for each} $u\in N(v)$ {\bf do}
  \\
  1.2.1 \hspace*{5ex} {\bf if} $u\in \oo$ {\bf then} add $u^l$ to
  $N(v^l)$ and $u^r$ to $N(v^r)$;
  \\
  1.2.2 \hspace*{5ex} {\bf else if} $u$ is not marked {\bf then} mark
  $u$ and put it into $N(\oo)$;
  \\
  2 \hspace*{2ex} {\bf for each} $u\in N(\oo)$ {\bf do}
  \\
  2.1 \hspace*{3ex} {\bf for each} $v\in N(u)$ {\bf do}
  \comment{$\tail{v}\le 0\le \head{v}$.}
  \\
  2.1.1 \hspace*{5ex} {\bf if} $v\not\in \oo$ {\bf then goto} 2.1;
  \comment{henceforth $v\in X$.}
  \\
  2.1.2 \hspace*{5ex} {\bf if} $\tail{v}= 0= \head{v}$ {\bf then
    return} a forbidden induced subgraph;
  \\
  2.1.3 \hspace*{5ex} {\bf if} $\head{v} = 0$ {\bf then} replace $v$
  by $v^l$ in $N(u)$ and add $u$ to $N(v^l)$; \comment{$u v\in \ecc$.}
  \\
  2.1.4 \hspace*{5ex} {\bf if} $\tail{v} = 0$ {\bf then} replace $v$
  by $v^r$ in $N(u)$ and add $u$ to $N(v^r)$; \comment{$u v\in \ec$.}
  \\
  \hspace*{5ex} $\setminus\!\!\setminus$ in the remaining cases
  $N[v]\cap V(H) = \{h_{-1}, h_0, h_1\}$ and $|N[u] \cap \{h_{-1},
  h_1\}| = 1$.
  \\
  2.1.5 \hspace*{5ex} {\bf if} $\head{u} = |H| - 1$ {\bf then} replace
  $v$ by $v^l$ in $N(u)$ and add $u$ to $N(v^l)$; \comment{$u v\in
    \ecc$.}
  \\
  2.1.6 \hspace*{5ex} {\bf if} $\tail{u} = 1$ {\bf then} replace $v$
  by $v^r$ in $N(u)$ and add $u$ to $N(v^r)$; \comment{$u v\in \ec$.}
  \\
  2.1.7 \hspace*{5ex} {\bf if} $u v\in \ecc$ and $v$ is not marked as
  \occ\ {\bf then} mark $v$ and put it into \occ;
  \\
  2.1.8 \hspace*{5ex} {\bf if} $u v\in \ec$ and $v$ is not marked as
  \oc\ {\bf then} mark $v$ and put it into \oc;
  \\
  3 \hspace*{2ex} add vertex $w$;
  \\
  4 \hspace*{2ex} {\bf for each} $v\in \occ$ {\bf do} put $w$ into
  $N(v^l)$ and $v^l$ into $N(w)$;
  \\
  5 \hspace*{2ex} remove $\oo$.
\end{quote} \vspace*{-3mm} \strut} $$\boxit{\box4}$$
\vspace*{-7mm}
\caption{Procedure for constructing $\mho(G)$
  (Lemma~\ref{lem:construct-mho}).}
\label{fig:construct-omega-G}
\end{figure}
\begin{proof}
  The vertices of the auxiliary graph $\mho(G)$ include
  $\overline{\oo}$, two copies of $\oo$, and $w$.  So the number of
  vertices is $2|\oo| + |\overline{\oo}| + 1 = |V(G)| + |\oo| + 1 \le
  2 n$.  In $\mho(G)$, there are two edges derived from every edge of
  $G[\oo]$ and one edge from every other edge of $G$.  All other edges
  are incident to $w$, and there are \occ\ of them.  Therefore, the
  number of edges is $|E(G)| + |E(G[\oo])| + |\occ| \le |E(G)| +
  |E(G[\oo])| + |\ecc| < 2 m$.  This concludes the proof of the first
  assertion.

  For the construction of $\mho(G)$, we use the procedure described in
  Fig.~\ref{fig:construct-omega-G} (some bookkeeping details are
  omitted).  Step~1 adds vertex sets $L$ and $R$ (step 1.1) as well as
  those edges induced by them (step 1.2.1), and finds $N(\oo)$ (step
  1.2.2).  Step~2 adds edges in $\ecc$ and $\ec$, and detect \occ\ and
  \oc.  Steps~3 and 4 add vertex $w$ and edges incident to it.  Step~5
  cleans $\oo$.  The dominating steps are 1 and 2, each of which
  checks every edge at most once, and hence the total time is $O(n +
  m)$.
\end{proof}

In an interval model, each vertex $v$ corresponds to a closed interval
$I_v = [\lp{v}, \rp{v}]$.  Here \lp{v} and \rp{v} are the left and
right \emph{endpoints} of $I_v$ respectively, and $\lp{v}< \rp{v}$.
We use unit-length circles for circular-arc models, where every point
has a positive value in $(0,1]$.  Each vertex $v$ corresponds to a
closed arc $A_v = [\cce{v}, \ce{v}]$.  Here \cce{v} and \ce{v} are
counterclockwise and clockwise endpoints of $A_v$ respectively; $0<
\cce{v}, \ce{v}\le 1$ and they are assumed to be distinct.  It is
worth noting that possibly $\ce{v}<\cce{v}$; such an arc necessarily
contains the point $1$.

The definition of $\mho(G)$ is motivated by the following observation;
see Fig.~\ref{fig:cag-interval}.  We remark that this lemma is
actually implied by Theorem~\ref{thm:negative-certificate}, and the
constructive proof presented below is to reveal the intuition.
\begin{lemma}\label{lem:nhcag-to-interval}
  If $G$ is a normal Helly circular-arc graph, then $\mho(G)$ is an
  interval graph.
\end{lemma}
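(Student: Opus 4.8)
The plan is to start from a normal and Helly circular-arc model $\mathcal A$ of $G$ and carve out of it an interval model of $\mho(G)$. By Proposition~\ref{lem:fundamental}, the arcs for the hole $H$ cover the whole circle, so there is a point $x$ on the circle that lies in $A_{h_0}$ but is ``extremal'' in the sense that cutting the circle at $x$ separates the arcs as cleanly as possible; more precisely, I would pick $x$ inside $A_{h_0}$ but outside every arc $A_v$ with $v\notin \oo=N[h_0]$ (such a point exists because any such $A_v$ is disjoint from $A_{h_0}$, the model being normal so no arc both starts and ends inside $A_{h_0}$ — this is where normality and Helly are used to guarantee $A_{h_0}$ is a genuine sub-arc, not the whole circle). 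Cutting the circle at $x$ turns $\mathcal A$ into an interval model $\mathcal I$ of $G$ in which every vertex of $\overline\oo$ gets an honest interval, while each $v\in\oo$ whose arc contained $x$ gets split into two intervals — a left piece $I_v^{\text{left}}$ touching the left wall and a right piece $I_v^{\text{right}}$ touching the right wall.

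Next I would use this to define intervals for the vertices of $\mho(G)$. For $u\in\overline\oo$, keep $I_u$ from $\mathcal I$. For $v\in\oo$: if $A_v$ did not contain $x$ it becomes a single interval; I would then shift a copy of it all the way to the left wall to serve as $v^l$ and another copy all the way to the right wall to serve as $v^r$, so that $v^l$ (resp.\ $v^r$) meets exactly the $\overline\oo$-intervals that reach the left (resp.\ right) wall. If $A_v$ did contain $x$, then $v^l:=I_v^{\text{left}}$ and $v^r:=I_v^{\text{right}}$ are already at the two walls. The key point is that, by the definition of the clockwise/counterclockwise split of the edges between $\oo$ and $\overline\oo$, an edge $uv$ with $uv\in\ec$ corresponds to $u$ reaching toward the $h_1,\dots,h_{\head v}$ side — which after cutting at $x$ is the right wall — so $u$ meets $v^r$ and not $v^l$; and dually for $\ecc$. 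Within $L$ (and within $R$) two copies $v^l,v'^l$ of adjacent $v,v'\in\oo$ intersect because both are pushed to the same wall, giving the two edges $u^lv^l$ and $u^rv^r$ demanded by the definition; and a vertex $v^l$ and a vertex $u^r$ from opposite copies never meet, since one hugs the left wall and the other the right wall and $\overline\oo$-intervals bridging them do not exist once $x$ was chosen outside all of them. Finally, $w$ is realized by a tiny interval glued to the left wall, to the left of everything in $\overline\oo$ but overlapping exactly the intervals $v^l$ with $v\in\occ$, which is consistent because those are precisely the copies carrying a counterclockwise edge.

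The verification then splits into checking the three bullet types of edges of $\mho(G)$ plus the edges at $w$, in each direction (edge present $\Rightarrow$ intervals meet, and non-edge $\Rightarrow$ disjoint). Most of these are immediate from the wall-hugging construction; the slightly delicate ones are: (i) a copy $v^l$ should meet a copy $u^l$ of a non-neighbor of $v$ in $G$ exactly when... it shouldn't, and indeed it won't because both hugging the left wall still retains the order/overlap pattern of $G[\oo]$'s interval model restricted near $h_0$; and (ii) that $w$'s interval does not accidentally hit $\overline\oo$ or any $v^l$ with $v\notin\occ$. For (i) I would argue that $G[\oo]$ is an interval graph (it is $N[h_0]$, a clique-neighborhood in a chordal-or-not situation — more carefully, restrict $\mathcal I$ to $\oo$) and that pushing all its intervals to a common wall is an order-isomorphism preserving intersections, so $L\cong G[\oo]\cong R$ as graphs, which is exactly the second bullet of the definition.

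The main obstacle I expect is item (i) together with the consistency of the edge-direction bookkeeping: one must show that the purely combinatorial clockwise/counterclockwise classification of $\ec,\ecc$ (defined via $\head v$ and $\tail v$ in $H$) agrees with the geometric left-wall/right-wall classification obtained after cutting the model at $x$. Establishing that dictionary cleanly — i.e.\ that $u\sim h_i$ for some $1\le i\le\head v$ forces, in the cut model, that $I_u$ reaches the right wall on the side of $v$'s right copy, and never the left — is the real content; once it is in hand, the remaining case analysis is routine wall-hugging geometry. I would also keep in mind that since we have already discharged Lemmas~\ref{lem:non-consecutive}, \ref{lem:non-consecutive-2}, and \ref{lem:non-helly} (and Proposition~\ref{lem:fundamental}), every $N_H[v]$ is a proper subpath of $H$, so the ``$u$ is adjacent to either $\{h_{\tail v},\dots,h_{-1}\}$ or $\{h_1,\dots,h_{\head v}\}$ but not both'' dichotomy is available and makes the direction of each crossing edge well defined, which is exactly what lets the two copies $L,R$ stay disjoint.
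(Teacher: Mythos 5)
Your overall strategy is exactly the paper's: cut the normal and Helly model at a point of $A_{h_0}$ (the paper takes the cut at $\cce{h_0}=1$, which automatically lies outside every arc of $\overline\oo$ since those arcs are disjoint from $A_{h_0}$), keep the arcs of $\overline\oo$ as intervals, give each $v\in\oo$ two rigidly translated copies of its arc, one at each end of the unrolled line, and append $w$ beyond one end. The paper writes the intervals explicitly ($I_{v^l}=[\cce{v},\ce{v}]$ or $[\cce{v}-1,\ce{v}]$ according as $1\in A_v$ or not, $I_{v^r}=I_{v^l}+1$, and $I_w=[-1,\max_{x\in\overline\oo}\rp{x}-1]$), and these formulas are precisely the ``dictionary'' between the combinatorial \ec/\ecc\ classification and the geometry that you correctly identify as the real content but leave unestablished.

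There is, moreover, a concrete error in your bookkeeping: the definition of $\mho(G)$ puts the edge $uv^l$ when $uv\in\ec$ and $uv^r$ when $uv\in\ecc$, whereas you assert that for $uv\in\ec$ ``$u$ meets $v^r$ and not $v^l$,'' and you justify $N(w)=\{v^l:v\in\occ\}$ by saying these are ``precisely the copies carrying a counterclockwise edge.'' In $\mho(G)$ the counterclockwise edges are carried by $v^r$, not $v^l$, and that is exactly what makes room for $w$: the left copy $v^l$ retains the counterclockwise overhang of $A_v$ past the cut but has lost its counterclockwise $\overline\oo$-neighbours to $v^r$, so that overhang is free to meet $w$, and it reaches $w$ precisely when $v\in\occ$ (this is what the choice $\rp{w}=\max_{x\in\overline\oo}\rp{x}-1$ encodes). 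Under your assignment, $v^l$ would have to meet both $w$ (placed beyond all of $\overline\oo$) and some $\ecc$-neighbour $u$ inside $\overline\oo$, hence would contain $[\rp{w},\lp{u}]$ and therefore the left endpoint of every $\overline\oo$-interval starting no later than $u$, creating spurious edges to non-neighbours of $v$. Swapping the labels $l\leftrightarrow r$ does not repair this, because $w$ breaks that symmetry; the fix is to unroll in the orientation that places the clockwise ($\ec$) side of $\overline\oo$ next to the left copies. Finally, ``shift a copy of it all the way to the left wall'' must mean a single rigid translation of all $\oo$-arcs together, not an interval-by-interval push to the wall (which would turn $L$ into a clique); your item (i) gestures at this but does not pin it down.
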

\begin{figure}[h]
  \centering
  \subfloat[A normal and Helly circular-arc model of a graph $G$.]
  {\label{fig:xxx}
    \includegraphics{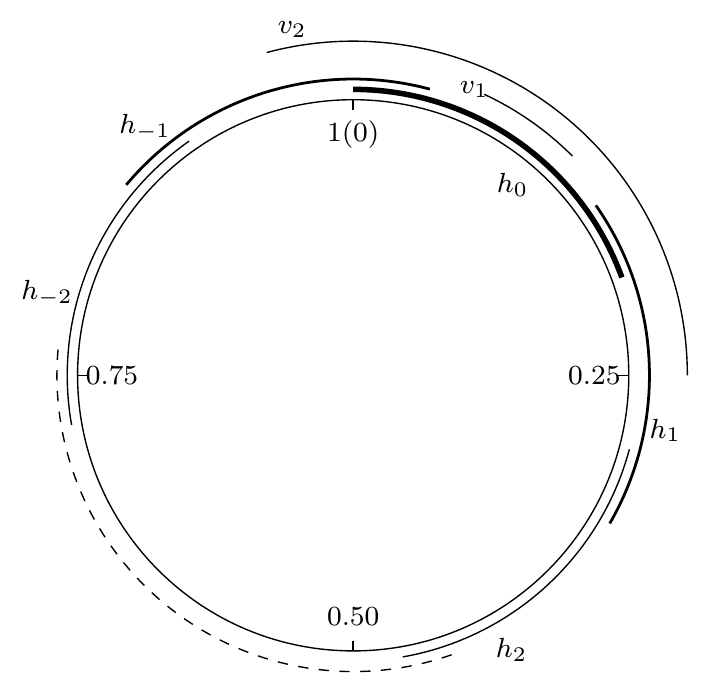}
  }

  \subfloat[{An interval model of $\mho(G)$ derived from
    (a)}.]{\label{fig:yyy}
    \includegraphics{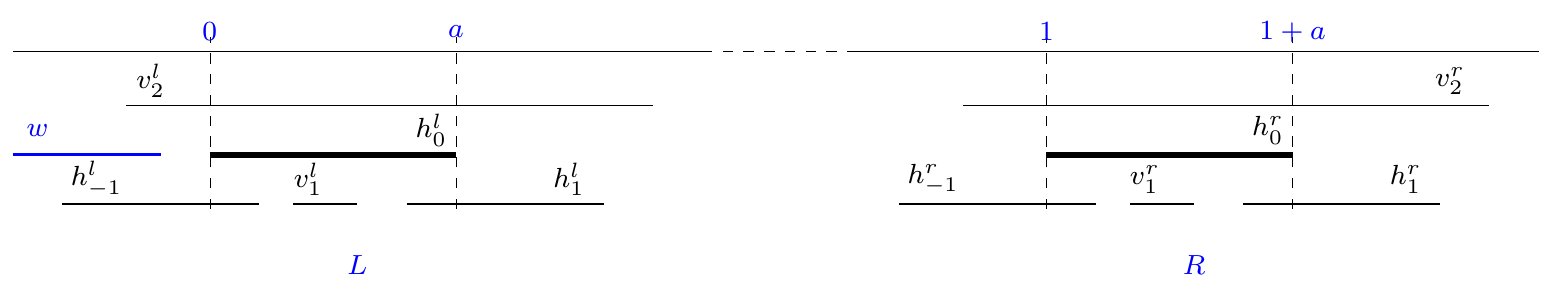}
  }
  \caption{Illustration for Lemma~\ref{lem:nhcag-to-interval}.  }
  \label{fig:cag-interval}
\end{figure}
\begin{proof}
  Let $\cal A$ be a normal Helly circular-arc model of $G$.  Then
  the union of arcs $\{A_v : v\in \oo\}$ does not cover the circle.
  Assume that $\cce{h_0} = 1$ and no other arc has an end at $1$.  An
  interval model of $\mho(G)$ can be obtained from $\cal A$ by setting
\begin{itemize}
\item $I_{v^l} := [\cce{v}, \ce{v}]$ and $I_{v^r} := [\cce{v} + 1,
  \ce{v} + 1]$ for $v\in \oo$ with $1\not\in A_v$;
\item $I_{v^l} := [\cce{v} - 1, \ce{v}]$ and $I_{v^r} := [\cce{v},
  \ce{v} + 1]$ for $v\in \oo$ with $1\in A_v$;
\item $I_{u} := [\cce{u}, \ce{u}]$ for $u\in \overline \oo$;  and
\item $I_{w} := [-1, \max_{x\in \overline \oo} \rp{x} - 1]$.
\end{itemize}
It is easy to use the definition to verify that this gives an interval
model of $\mho(G)$.
\end{proof}

Note that for any vertex $v\in \oo$, an induced \stpath{v^l}{v^r}
corresponds to a cycle whose arcs cover the entire circle.  The main
thrust of our algorithm will be a process that does the reversed
direction, which is nevertheless far more involved.

\begin{theorem}\label{lem:build-ca-model}
  If $\mho(G)$ is an interval graph, then we can in $O(n+m)$ time
  build a circular-arc model of $G$.
\end{theorem}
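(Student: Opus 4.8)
The plan is to take any interval model of $\mho(G)$, put it into a canonical ``$\mho$-shape,'' and then \emph{roll it up} into a circular-arc model of $G$, reversing the construction used in the proof of Lemma~\ref{lem:nhcag-to-interval}. By Lemma~\ref{lem:construct-mho} together with the linear-time interval recognition of~\cite{kratsch-06-certifying-interval-and-permutation}, an interval model $\mathcal I = \{I_x : x\in V(\mho(G))\}$ is obtained in $O(n+m)$ time; we normalize it so that all endpoints are distinct.

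First I would derive a normal form for $\mathcal I$. Since $h_0$ is adjacent to every vertex of $\oo = N[h_0]$, the vertices $h_0^l$ and $h_0^r$ are universal in $\mho(G)[L]$ and $\mho(G)[R]$ respectively; hence $\bigcup_{v\in\oo} I_{v^l}$ and $\bigcup_{v\in\oo} I_{v^r}$ are each a single interval, and they are disjoint because $\mho(G)$ has no edge between $L$ and $R$. After reversing $\mathcal I$ if necessary, the block of $L$-intervals precedes the block of $R$-intervals on the line; moreover, since $N_{\mho(G)}(w) = \{v^l : v\in\occ\}\subseteq L$ while $w$ is nonadjacent to $h_0^l$ and to every vertex of $\overline{\oo}\cup R$, the interval $I_w$ can be assumed to lie entirely to the left of the $L$-block, which fixes the orientation. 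Using the isomorphisms $\mho(G)[L]\cong G[\oo]\cong\mho(G)[R]$ and the partition of the edges between $\oo$ and $\overline{\oo}$ into the clockwise ones (carried by $R$) and the counterclockwise ones (carried by $L$), I would then rescale and, if needed, lengthen the $L$-intervals to the left and the $R$-intervals to the right, so that there are reals $a < a+C$ with: every $I_u$ with $u\in\overline{\oo}$ contained in the open window $(a, a+C)$; $I_w$ to the left of $a$; and, for every $v\in\oo$, either $I_{v^l}\subseteq(a, a+C)$ and $I_{v^r}\cap[a, a+C]=\emptyset$, or $I_{v^l}$ reaches $a$ from the left and $I_{v^r}$ reaches $a+C$ from the right (the two possibilities corresponding to whether the future arc $A_v$ avoids or contains the seam). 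I expect this normal-form step to be the main obstacle: ruling out pathological interleavings of the $L$-, $R$- and $\overline{\oo}$-intervals and the wrong orientation, and carrying out the rescaling so that the two halves of every future arc meet flush at the seam, requires a careful and somewhat lengthy case analysis (and uses that Lemmas~\ref{lem:non-consecutive}--\ref{lem:non-helly} did not apply, so that in particular every vertex of $\overline{\oo}$ has a neighbour in $H$).

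Given $\mathcal I$ in normal form, the circular-arc model is obtained by identifying the endpoints $a$ and $a+C$ of the window $[a, a+C]$, placing the seam at the glued point, setting $A_u := I_u$ for $u\in\overline{\oo}$ and $A_v := (I_{v^l}\cap[a, a+C]) \cup (I_{v^r}\cap[a, a+C])$ for $v\in\oo$ (a connected arc, by the normal form, with one piece possibly empty), and deleting $w$. Correctness is then verified against the definition of $\mho(G)$, one edge type at a time: adjacencies inside $\overline{\oo}$ are immediate; for $u\in\overline{\oo}$ and $v\in\oo$, the arcs $A_u$ and $A_v$ meet exactly when $u$ is adjacent to $v^l$ or to $v^r$ in $\mho(G)$, with $uv\in\ecc$ the counterclockwise side of the seam and $uv\in\ec$ the clockwise side, which matches $uv\in E(G)$; and for $u,v\in\oo$, $A_u$ and $A_v$ meet near the seam exactly when $u^l\sim v^l$ (equivalently $u^r\sim v^r$), i.e.\ when $uv\in E(G)$ --- here a short argument, again using that $h_0^l$ is universal in $L$, shows that two such arcs cannot both reach the seam from opposite sides unless $uv\in E(G)$, so that no spurious intersection is created. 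After a final perturbation making all $2n$ arc endpoints distinct this is a circular-arc model of $G$; since $G$ has the hole $H$ and is therefore non-chordal, one checks (or invokes Proposition~\ref{thm:always-noraml-and-helly} once $G$ is known to be a normal Helly circular-arc graph) that the model is in fact normal and Helly. Finally, each step --- constructing $\mho(G)$, recognizing that it is an interval graph, a constant number of left-to-right sweeps for the normal form, and reading off the arcs --- runs in $O(n+m)$ time, giving the claimed bound.
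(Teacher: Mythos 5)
Your proposal matches the paper's proof in essence: both take an interval model of $\mho(G)$, fix its orientation so that $w$, $L$, $\overline{\oo}$, $R$ appear from left to right, and glue the two ends into a circle, assigning each $u\in\overline{\oo}$ its own interval and each $v\in\oo$ an arc assembled from $I_{v^l}$ and (when $v\in\occ$) $I_{v^r}$ --- your glued arc $(I_{v^l}\cap[a,a+C])\cup(I_{v^r}\cap[a,a+C])$ is exactly the paper's wrapping arc $[\lp{v^r},\rp{v^l}]$. The only notable difference in execution is that the paper needs no rescaling or ``flush at the seam'' normal form: it anchors the model with the induced path $(w\, h_{-1}^l h_0^l h_1^l h_2 \cdots h_{-2} h_{-1}^r h_0^r h_1^r)$, which pins down the left-to-right order at once, so the endpoints $\lp{v^r}$ and $\rp{v^l}$ can be read off directly and the interleaving analysis you flag as the main obstacle collapses to a short verification by edge type.
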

\begin{proof}
  We can in $O(n+m)$ time build an interval model $\cal I$ for
  $\mho(G)$.  By construction, ($w h_{-1}^l h_0^l h_1^l h_2 \cdots$
  $h_{-2} h_{-1}^r h_0^r h_1^r$) is an induced path of $\mho(G)$;
  without loss of generality, assume it goes ``from left to right'' in
  $\cal I$.  We may assume $\rp{w} = 0$ and $\max_{u\in
    \overline{\oo}}\rp{u} = 1$, while no other interval in $\cal I$
  has $0$ or $1$ as an endpoint.  Let $a = \rp{h_0^l}$.  We use $\cal
  I$ to construct a set of arcs for $V(G)$ as follows.  For each $u\in
  \overline{\oo}$, we set $A_u := [\lp{u}, \rp{u} ]$, which is a
  subset of the arc $(a,1]$.  For each $v\in \oo$, we set
  \begin{equation*}
    \label{eq:arcs}
    A_v := 
    \begin{cases}
      [\lp{v^r}, \rp{v^l} ] & \text{if } v\in\occ,
      \\
      [\lp{v^l}, \rp{v^l} ] & \text{otherwise}.
    \end{cases}
  \end{equation*}

  It remains to verify that the arcs obtained as such represent $G$,
  i.e., a pair of vertices $u, v$ of $G$ is adjacent if and only if
  $A_u$ and $A_v$ intersect.  This holds trivially when neither $u$
  nor $v$ is in $\oo$; hence we may assume without loss of generality
  that $v\in \oo$.  By construction, $a< \lp{u}<\rp{u}\le 1$ for every
  $u\in \overline{\oo}$.  Note that $v^l\sim w$ and $v^r\sim
  \overline{\oo}$ for every $v\in \occ$, which implies that $\lp{v^l}<
  0$ if and only if $\lp{v^r}< 1$ if and only if $v\in \occ$.

  Assume first that $u$ is also in $\oo$, then $u\sim v$ in $G$ if and
  only if $u^l\sim v^l$ in $\mho(G)$.  They are adjacent when both
  $u,v\in\occ$, and since $\lp{v^l}, \lp{u^l}< 0$, both $A_u$ and
  $A_v$ contains the point $1$ and thus intersect.  If neither $u$ nor
  $v$ is in $\occ$, then $\lp{v^l}, \lp{u^l}> 0$, and $u\sim v$ if and
  only if $A_u = [\lp{u^l}, \rp{u^l}]$ and $A_v = [\lp{v^l},
  \rp{v^l}]$ intersect.  Otherwise, assume, without loss of
  generality, that $\lp{v^l} < 0 < \lp{u^l}$, then $u\sim v$ in $G$ if
  and only if $0 < \lp{u^l} < \rp{v^l}$, which implies $A_u$ and $A_v$
  intersect (as both contain $[\lp{u^l}, \rp{v^l}]$).

  Assume now that $u$ is not in $\oo$, and then $u\sim v$ in $G$ if
  and only if either $u\sim v^l$ or $u\sim v^r$ in $\mho(G)$.  In the
  case $u\sim v^l$, we have $\lp{v^l} \le a < \lp{u}\le \rp{v^l}$;
  since both $A_u$ and $A_v$ contain $[\lp{u}, \rp{v^l}]$, which is
  nonempty, they intersect.  In the case $u\sim v^r$, we have
  $\lp{v^r}< \rp{u} \le 1$; since both $A_u$ and $A_v$ contain
  $[\lp{v^r}, \rp{u}]$, they intersect.  Otherwise, $u\not\sim v$ in
  $G$ and $\lp{v^l} < \rp{v^l} < \lp{u} < \rp{u} < \lp{v^r} <
  \rp{v^r}$, then $A_u$ and $A_v$ are disjoint.
\end{proof}

We are now ready to present the recognition algorithm in
Fig.~\ref{fig:main-alg}, and prove
Theorem~\ref{thm:certifying-algorithm}.  Recall that Lin et
al.~\cite{lin-13-nhcag-and-subclasses} have given a linear-time
algorithm for verifying whether a circular-arc model is normal and
Helly.  In fact, given a circular-arc model, we can in linear time
find the minimum number of arcs that cover the circle
\cite{cao-14-almost-interval-recognition}.
\begin{figure*}[h]
\setbox4=\vbox{\hsize28pc \noindent\strut
\begin{quote}
  \vspace*{-5mm} \footnotesize

  Algorithm {\bf nhcag}($G$)
  \\
  Input: a graph $G$.
  \\
  Output: either a normal Helly circular-arc model of $G$, or a
  forbidden induced subgraph of $G$.
  \\
  
  1 \hspace*{2ex} \parbox[t]{0.90\linewidth}{ test the chordality of
    $G$ and find a hole $H$ if not;
    \\
    {\bf if} $G$ is chordal {\bf then} verify whether $G$ is an
    interval graph or not;}
  \\
  2 \hspace*{2ex} construct the auxiliary grpah $\mho(G)$;
  \\
  3 \hspace*{2ex} {\bf if} $\mho(G)$ is not an interval graph {\bf
    then}
  \\
  \hspace*{7ex} {\bf call} Theorem~\ref{thm:negative-certificate} to find
  a forbidden induced subgraph;
  \\
  4 \hspace*{2ex} {\bf call} Theorem~\ref{lem:build-ca-model} to build a
  circular-arc $\cal A$ model of $G$;
  \\
  5 \hspace*{2ex} verify whether $\cal A$ is normal and Helly.
\end{quote} \vspace*{-6mm} \strut} $$\boxit{\box4}$$
\vspace*{-9mm}
\caption{The recognition algorithm for \nhcag s.}
\label{fig:main-alg}
\end{figure*}

\begin{proof}[Proof of Theorem~\ref{thm:certifying-algorithm}]
  We use the algorithm {nhcag} presented in
  Fig.~\ref{fig:main-alg}.  Step 1 is clear.  Steps 2 to 4 follow from
  Lemma~\ref{lem:construct-mho}, Theorem~\ref{thm:negative-certificate},
  and Lemma~\ref{lem:build-ca-model}, respectively.  If the model $\cal
  A$ built in step 4 is not normal and Helly, then we can in linear
  time find such a set of two or three arcs whose union covers the
  circle.  Their corresponding vertices satisfy
  Lemma~\ref{lem:non-helly}, and this concludes the proof.
\end{proof}
It is worth noting that if we are after a recognition algorithm (with
positive certificate only), then we can simply return ``NO'' if the
hypothesis of step 3 is true (see Lemma~\ref{lem:nhcag-to-interval})
and the algorithm is already complete.

\section{Proof of Theorem~\ref{thm:negative-certificate}}
Recall that Theorem~\ref{thm:negative-certificate} is only called in
step 3 of algorithm nhcag; the graph is then not choral and we have a
hole $H$.  In principle, we can pick any vertex as $h_0$.  But for the
convenience of presentation, we require it satisfies some additional
conditions.  If some vertex $v$ is adjacent to four or more vertices
in $H$, i.e., $\head{v}-\tail{v} > 2$, then $v\not\in H$.  We can thus
use ($h_{\tail{v}} v h_{\head{v}}$) as a short cut for the sub-path
induced by $N_H[v]$, thereby yielding a strictly shorter hole.  This
condition, that $h_0$ cannot be bypassed as such, is formally stated
in the following lemma, and a procedure for finding such a hole is
given in Fig.~\ref{fig:compress-hole}.
\begin{lemma}\label{lem:hole-conditions}
  We can in $O(n+m)$ time find either a \badgraph, or a hole $H$ such
  that $\{h_{-1}, h_0, h_1\}\subseteq N_H[v]$ for some $v$ if and only
  if $N_H[v] = \{h_{-1}, h_0, h_1\}$.
\end{lemma}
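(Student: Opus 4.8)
The plan is to start from an arbitrary hole $H$ (obtained from the chordality test of Tarjan and Yannakakis) and repeatedly shorten it whenever the ``bad'' situation occurs, namely whenever some vertex $v$ satisfies $\{h_{-1},h_0,h_1\}\subseteq N_H[v]$ but $N_H[v]\neq\{h_{-1},h_0,h_1\}$. First I would observe that, by Lemma~\ref{lem:non-consecutive}, for each such $v$ we may assume (otherwise we already have a \badgraph) that $N_H[v]$ induces a proper sub-path of $H$, so $\head{v}-\tail{v}=|N_H[v]|-1\ge 3$; since the sub-path has at least four vertices, $v\notin H$. Then the short-cut $(h_{\tail{v}}\, v\, h_{\head{v}})$ together with the part of $H$ going the other way around from $h_{\head{v}}$ to $h_{\tail{v}}$ forms a strictly shorter closed walk; because $v$ is adjacent to no inner vertex of the sub-path $(h_{\tail{v}}\cdots h_{\head{v}})$ and the remaining arc of $H$ is chordless, this closed walk is in fact a shorter hole $H'$ with $|H'|=|H|-(|N_H[v]|-2)\le|H|-1$.

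The main loop is then: test chordality once to get $H$; as long as there is a vertex violating the condition, replace $H$ by the shorter hole $H'$ and repeat. Each iteration strictly decreases $|H|$, and a hole has length at least $4$, so at most $n-4$ iterations occur; when the loop terminates we have a hole with the stated property (or, along the way, Lemma~\ref{lem:non-consecutive} has already returned a \badgraph). A procedure implementing this is sketched in Fig.~\ref{fig:compress-hole}. The remaining point is the time bound. A naive re-scan of all vertices after each shortening would cost $O((n+m)\cdot n)$, so the hard part — and the main obstacle — is to show the whole process runs in $O(n+m)$.

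To get linear time I would maintain, for every vertex $v$, the quantities $\tail{v}$ and $\head{v}$ with respect to the current hole, updated incrementally: when $H$ is replaced by the short-cut through some vertex $x$, only vertices adjacent to $x$ or to the vertices of the deleted sub-path $(h_{\tail{x}+1}\cdots h_{\head{x}-1})$ can change their $N_H[\cdot]$, and the new indices can be read off from Lemma~\ref{lem:non-consecutive} restricted to the (short) neighborhoods involved. One charges the work of each shortening to the edges incident to the removed hole-vertices and to $x$; since every vertex is removed from the hole at most once over the entire run, each edge is touched a constant number of times, giving the $O(n+m)$ bound. I would also maintain a worklist of candidate ``bad'' vertices so that detecting whether the loop should continue is not a fresh $O(n)$ scan each time; a vertex needs re-examination only if one of its hole-neighbors just changed, which again is covered by the same charging argument. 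Once these bookkeeping invariants are set up the correctness is immediate from the single-step argument above, and the lemma follows.
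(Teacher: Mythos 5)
Your single-step shortening argument is sound, but your overall route diverges from the paper's, and the divergence is exactly where the gap lies. The paper does \emph{not} iterate: it computes \tail{v} and \head{v} once for every vertex with respect to the \emph{original} hole and, in one greedy pass (Fig.~\ref{fig:compress-hole}), selects a vertex $h$ whose neighborhood interval $[a,b]$ around $h_0$ is inclusion-wise maximal among all vertices with $\tail{v}\le -1$ and $\head{v}\ge 1$; it then performs a \emph{single} shortcut through $h$. Maximality already yields the stated property of the new hole: any $u$ with $\{h_a,h,h_b\}\subsetneq N_{H'}[u]$ whose $N_H[u]$ induces a sub-path through $h_0$ of the old hole would have offered a strictly larger interval and been selected instead, and the remaining degenerate case (where $u$'s path avoids $h_0$) makes $u$ adjacent to all of $H'$, which is caught as a wheel when Lemma~\ref{lem:non-consecutive} is later applied to $H'$. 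Linear time is then immediate: one computation of \tail{v}, \head{v} per vertex and constant extra work.

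Your iterative scheme instead hinges on an amortized analysis whose key invariant---``every vertex is removed from the hole at most once over the entire run''---you do not prove and which is not obviously true. Each shortcut vertex \emph{enters} the hole as the new $h_0$, may later be deleted by a subsequent shortcut, and nothing in your argument rules out that it then accumulates enough new hole-neighbours (among later shortcut vertices) to qualify as a shortcut vertex again; if a high-degree vertex cycles in and out $\Omega(n)$ times, your charging scheme gives $\Omega(n^2)$ rather than $O(n+m)$. The deferred bookkeeping is also nontrivial: all indices shift when $h_0$ moves, the canonical ranges of \tail{v} and \head{v} depend on whether $h_0\in\hv{v}$, and a neighbourhood that induced a path in $H$ can become disconnected in $H'$ (when $v$ meets the deleted arc but not the new $h_0$), which your incremental update must detect and route to Lemma~\ref{lem:non-consecutive}. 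None of this damages correctness or termination, but the linear-time claim---which you yourself flag as the main obstacle---is not established as written; the one-shot maximal-interval selection removes the need for all of it.
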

\begin{figure}[h]
\setbox4=\vbox{\hsize28pc \noindent\strut
\begin{quote}
  \vspace*{-5mm} \footnotesize

  {\sc input}: a graph $G$ and a hole $H$ of $G$.
  \\
  {\sc output}: a hole satisfying conditions of
  Lemma~\ref{lem:hole-conditions} or a \badgraph.
  \\[1ex]
  0 \hspace*{2ex} $h = h_0$; $a = -1$; $b = 1$;
  \\
  1 \hspace*{2ex} {\bf for each} $v\in V(G)\setminus H$ {\bf do}
  \\
  1.1 \hspace*{4ex} compute \tail{v} and \head{v} in $H$;
  \\
  1.2 \hspace*{4ex} {\bf if} \big($\tail{v}< a$ and $\head{v} \ge
  b$\big) or \big($\tail{v}= a$ and $\head{v} >b$\big) {\bf then}
  \\
  \hspace*{10ex} $h = v$; $a = \tail{v}$; $b = \head{v}$;
  \comment{$\tail{v}< 0 < \head{v}$.}
  \\
  2 \hspace*{2ex} {\bf return} ($h h_{b} h_{b + 1} \cdots h_{a} h$)
  where $h$ is the new $h_0$.
\end{quote} \vspace*{-3mm} \strut} $$\boxit{\box4}$$
\vspace*{-7mm}
\caption{Procedure for finding the hole for Lemma~\ref{lem:hole-conditions}.}
\label{fig:compress-hole}
\end{figure}
\begin{proof}
  We apply the procedure given in Fig.~\ref{fig:compress-hole}.
  Step~1 greedily searches for an inclusion-wise maximal $N_H[v]$
  satisfying $\tail{v}\le -1$ and $1 \le \head{v}$.  Initially,
  $a=\tail{h_0}=-1$ and $b=\head{h_0}=1$.  Each iteration of step~1
  checks an unexplored vertex $v$ in $V(G)\setminus H$.  If either
  condition of step 1.2 is satisfied, then $N[v]$ properly contains
  $\{h_a,h_{a+1},\dots,h_b\}$, and $a$ and $b$ are updated to be
  \tail{v} and \head{v} respectively.  Note that the values of $a$ and
  $b$ are non-increasing and nondecreasing respectively.  Thus, no
  previously explored vertex is adjacent to all of
  $\{h_a,h_{a+1},\dots,h_b\}$.  After step 1, all vertices have been
  explored, and the hole ($h h_{b} h_{b + 1} \cdots h_{a} h$)
  satisfies the claimed condition.

  What dominates the procedure is finding \tail{v} and \head{v} for
  all vertices (step~1.1).  It takes $O(d(v))$ time for each vertex
  $v$ and $O(n+m)$ time in total.
\end{proof}
This linear-time procedure can be called before step 2 of algorithm
{nhcag}, and it does not impact the asymptotic time complexity of
the algorithm, which remains linear.  Henceforth we may assume that
$H$ satisfies the condition of Lemma~\ref{lem:hole-conditions}.  During
the construction of $\mho(G)$, we have checked $N_H[v]$ for every
vertex $v$, and Lemma~\ref{lem:non-consecutive} was called if it
applies.  Therefore, for the proof of
Theorem~\ref{thm:negative-certificate} in this section, we may assume
that $N_H[v]$ always induces a proper sub-path of $H$.

Each vertex $x$ of $\mho(G)$ different from $w$ is uniquely defined by
a vertex of $G$, which is denoted by $\og{x}$.  We say that $x$ is
\emph{derived from} $\og{x}$.   For example, $\og{v^l} = \og{v^r} = v$
for $v\in \oo$.  By abuse of notation, we will use the same letter for
a vertex $u\in \overline{\oo}$ of $G$ and the unique vertex of
$\mho(G)$ derived from $u$; its meaning is always clear from the
context.  Therefore, $\og{u} = u$ for $u\in \overline{\oo}$, and in
particular, $\og{h_i} = h_i$ for $i = 2, \dots, |H|-2$.  We can mark
$\og{x}$ for each vertex of $\mho(G)$ during its construction.  The
function $\phi$ is also generalized to a set $U$ of vertices that does
not contain $w$, i.e., $\og{U} = \{\og{v}: v\in U\}$.  We point out
that possibly $|\og{U}|\ne |U|$.

By construction of $\mho(G)$, if a pair of vertices $x$ and $y$
(different from $w$) is adjacent in $\mho(G)$, then $\og{x}$ and
$\og{y}$ must be adjacent in $G$ as well.  The converse is not
necessarily true, e.g., for any vertex $v\in\oc$ and edge $u v \in
\ec$, we have $u\not\sim v^r$, and for any pair of adjacent vertices
$u,v\in \oo$, we have $u^l\not\sim v^r$ and $u^r\not\sim v^l$.  We say
that a pair of vertices $x,y$ of $\mho(G)$ is a \emph{bad pair} if
$\og{x}\sim \og{y}$ in $G$ but ${x}\not\sim {y}$ in $\mho(G)$.  By
definition, $w$ does not participate in any bad pair, and at least one
vertex of a bad pair is in $L\cup R$.  Note that any induced path of
length $d$ between a bad pair $x,y$ with $x=v^l$ or $v^r$ can be
extended to a \stpath{v^l}{v^r} with length $d+1$.

We have seen that if $G$ is a \nhcag, then for any $v\in \oo$, the
distance between $v^l$ and $v^r$ is at least $4$.  We now see what
happens when this necessary condition is not satisfied by $\mho(G)$.
By definition of $\mho(G)$, there is no edge between $L$ and $R$; for
any $v\in \oo$, there is no vertex adjacent to both $v^l$ and $v^r$.
In other words, for every $v\in \oo$, the distance between $v^l$ and
$v^r$ is at least $3$.  The following observation can be derived from
Lemmas.~\ref{lem:non-consecutive} and \ref{lem:non-consecutive-2}.
\begin{lemma}\label{lem:3-cover}
  Given a \stpath{v^l}{v^r} $P$ of length $3$ for some $v\in \oo$, we
  can in $O(n + m)$ time find a \badgraph.
\end{lemma}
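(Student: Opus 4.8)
\noindent
The plan is to take the length-$3$ path $P=(v^l,x,y,v^r)$ and run a short case analysis on its two internal vertices $x,y$, each of which lies in $L$, in $R$, in $\overline{\oo}$, or equals $w$. First I observe that $P$ is automatically induced: the distance between $v^l$ and $v^r$ in $\mho(G)$ is at least $3$ (there is no $L$--$R$ edge and no common neighbour of $v^l,v^r$), so no chord of $P$ can be present, and $v^l\not\sim v^r$ anyway. Since $w$ is adjacent only to vertices of $L$ and has no neighbour of the form $v^r$, the vertex $w$ can be neither $x$ nor $y$; and since there is no edge between $L$ and $R$, the vertices $x,y$ cannot lie one in $L$ and the other in $R$. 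Eliminating these, the surviving situations are, up to exchanging $x$ and $y$: (a) $x\in L$, $y\in\overline{\oo}$; (b) $x\in\overline{\oo}$, $y\in R$; and (c) $x,y\in\overline{\oo}$.

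In every surviving case, applying $\phi$ gives a triangle $T':=\{v,\og{x},\og{y}\}$ in $G$ (we have $\og{x}\neq\og{y}$ since $x\sim y$ rules out $\{x,y\}=\{u^l,u^r\}$), and at least one of $\og{x},\og{y}$ lies in $\overline{\oo}$. Unwinding which copies $x$ and $y$ came from, and using that $v\sim h_0$ while, by Lemma~\ref{lem:hole-conditions}, $N_H[v]$ is one of the short sub-paths around $h_0$, one reads off that the arcs of $T'$ ``wrap around'' $h_0$: there are $s,t\ge 1$ so that some vertex (in case (c), a pair of vertices) of $T'\cap\overline{\oo}$ witnesses adjacency to $h_{-s}$ on the counterclockwise side and to $h_t$ on the clockwise side, while $h_0$ is adjacent to no vertex of $T'\cap\overline{\oo}$. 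Following the sub-path $R$ of $H$ running from $h_t$ to $h_{-s}$ on the side away from $h_0$ and closing it up through $\og{x},\og{y}$ yields a cycle $C'$; since $h_0$ (and also $v$) is adjacent to at most two, consecutive, vertices of $C'$, one obtains either a wheel (when $v$ is complete to a hole through $C'$), a $C^*$ (when $h_0$, or an inner vertex of $H$ not reached by $T'$, is anticomplete to a hole), or a substructure to which Lemma~\ref{lem:non-consecutive} or Lemma~\ref{lem:non-consecutive-2} applies.

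When $|H|\in\{4,5\}$ the cycle $C'$ is too short to be a hole and the generic wheel/$C^*$ degenerates; in these finitely many sub-cases one instead reads off one of the graphs of Fig.~\ref{fig:normal-helly} (a $K_{2,3}$, twin-$C_5$, domino, $\overline{C_6}$, FIS-$1$, or FIS-$2$) or of Fig.~\ref{fig:at}, exactly as in the final paragraphs of the proofs of Lemmas~\ref{lem:non-consecutive} and~\ref{lem:non-consecutive-2}. Alternatively, and more economically for the write-up, once $T'$ and the indices $\tail{}$ and $\head{}$ of its (at most three) vertices have been identified, one simply invokes Lemma~\ref{lem:non-consecutive}, Lemma~\ref{lem:non-consecutive-2}, or Lemma~\ref{lem:non-helly} on the appropriate subset of vertices; each runs in $O(n+m)$ and returns a \badgraph. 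Extracting $T'$ and those indices from the precomputed data takes only constant time, so the whole procedure is $O(n+m)$.

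The main obstacle I anticipate is precisely this branching together with the small-$|H|$ degeneracies: in each surviving case one must certify that the extracted vertex set induces a \emph{minimal} forbidden induced subgraph (not merely some non-\nhcag), which forces separate treatment of $|H|=4$ and $|H|=5$ and the matching of each residual pattern with the correct graph in Figs.~\ref{fig:at} and~\ref{fig:normal-helly}. The routine verification that each such graph is indeed forbidden follows, as elsewhere in the paper, from the definition of normal and Helly together with Proposition~\ref{lem:fundamental}.
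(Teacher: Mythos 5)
Your reduction to the three configurations (a) $x\in L$, $y\in\overline{\oo}$; (b) $x\in\overline{\oo}$, $y\in R$; (c) $x,y\in\overline{\oo}$ matches the paper's, and the preliminary observations ($P$ induced, $w\notin P$, no $L$--$R$ crossing) are fine. The gap is in how you resolve these cases. Your central object, the cycle $C'$ obtained by walking along $H$ ``away from $h_0$'' and closing up through $\og{x}$ and $\og{y}$, is induced only when $N_H[\og{x}]$ and $N_H[\og{y}]$ are disjoint---but that is exactly the easy sub-case, already dispatched by Lemma~\ref{lem:non-consecutive-2}. In the essential sub-case where the two neighbourhoods overlap on the far side of $H$ (e.g.\ in case (c), $\tail{y}\le\head{x}$ with $\tail{x}=1$ and $\head{y}=|H|-1$), every chord $x h_i$ and $y h_i$ with $h_i$ in the overlap is present; the subgraph induced by $\{x,y\}\cup(H\setminus N[h_0])$ is then an interval graph containing no hole at all, so neither the ``$C^*$ with $h_0$'' nor the ``wheel centred at $v$'' you describe exists, and Lemmas~\ref{lem:non-consecutive} and~\ref{lem:non-consecutive-2} do not fire either. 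The paper's construction differs in a way that matters: it puts $v$ itself and only \emph{one} of the two inner vertices into the hole---$(v\, x\, h_{\head{x}}\cdots h_{-1}\, v)$ in case (c), or $(v\, h_{\head{v}}\cdots h_{\tail{u}}\, u\, v)$ in case (b)---and observes that the \emph{other} inner vertex is complete to that hole, giving a wheel whose hub is $y$ (resp.\ $x$), not $v$ or $h_0$. That hole always has length at least four, so none of the small-$|H|$ degeneracies and Fig.~\ref{fig:normal-helly} graphs you anticipate actually arise.

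Your parenthetical ``more economical'' alternative---invoking Lemma~\ref{lem:non-helly} on the triangle---is in fact the germ of a correct treatment of the overlapping sub-case (one can check that $U=\{v,\og{x},\og{y}\}$ does satisfy both hypotheses of that lemma there), but you assert this without verifying either hypothesis, and you offer it as an aside rather than as the proof. As written, the main line of argument does not cover the one case that carries all the difficulty.
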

\begin{proof}
  Let $P = (v^l x y v^r)$.  Note that $P$ must be a shortest
  \stpath{v^l}{v^r}, and $w\not\in P$.  The inner vertices $x$ and $y$
  cannot be both in $L\cup R$; without loss of generality, let $x\in
  \overline{\oo}$.  Assume first that $y\in \overline{\oo}$ as well,
  i.e., $v x\in \ec$ and $v y\in \ecc$.  By definition, $v\in
  \oc\cap\occ$, and then $v$ is adjacent to both $h_{-1}$ and $h_1$.
  If follows from Lemma~\ref{lem:hole-conditions} that $N_H[v] =
  \{h_{-1}, h_0, h_1\}$, and then $x\sim h_1$ and $y\sim h_{-1}$.  If
  $x\sim h_{-1}$, i.e., $\head{x} = |H| - 1$, then we call
  Lemma~\ref{lem:non-consecutive-2} with $v$ and $x$.  If $\head{x} <
  \tail{y}$, then we call Lemma~\ref{lem:non-consecutive} with $x$ and
  $y$.  In the remaining case, $\tail{y}\le \head{x}< |H| - 1$, and
  ($v x h_{\head{x}} \cdots h_{-1} v$) is a hole of $G$; this hole is
  completely adjacent to $y$, and thus we find a wheel.

  Now assume that, without loss of generality, $y = u^r \in R$.  If
  $\head{v} \ge \tail{y}$, then we call
  Lemma~\ref{lem:non-consecutive-2} with $v$ and $y$.  Otherwise, ($v
  h_{\head{v}} \cdots h_{\tail{y}} u v$) is a hole of $G$; this hole
  is completely adjacent to $x$, and thus we find a wheel.
\end{proof}

If $G$ is a \nhcag, then in a circular-arc model of $G$, all arcs for
\occ\ and \oc\ contain $\cce{h_0}$ and $\ce{h_0}$ respectively.  Thus,
both \occ\ and \oc\ induce cliques.  This observation is complemented
by the following lemma.
\begin{lemma}\label{lem:O}
  Given a pair of nonadjacent vertices $u,x\in \occ$ (or \oc), we can
  in $O(n+m)$ time find a \badgraph\ of $G$.
\end{lemma}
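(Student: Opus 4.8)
The plan is to show that if $u,x\in\occ$ are nonadjacent in $G$ (the case $u,x\in\oc$ being symmetric), then we can build a short induced path between $u^l$ and $x^l$ that, together with a suitable hole, produces a forbidden induced subgraph. First I would recall the structural consequences of $u,x\in\occ$: by definition each of them has a counterclockwise edge to $\overline{\oo}$, so both $u$ and $x$ are adjacent to $h_{-1}$; moreover $\tail u\le 0$ and $\tail x\le 0$, so $h_0\in N_H[u]\cap N_H[x]$. Hence $N_H[u]$ and $N_H[x]$ overlap (both contain $\{h_{-1},h_0\}$), which rules out Lemma~\ref{lem:non-consecutive-2}; and neither can induce all of $H$ or be empty, since we have assumed $N_H[v]$ always induces a proper nonempty sub-path. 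So both $N_H[u]$ and $N_H[x]$ are proper sub-paths of $H$ sharing the arc $\{h_{-1},h_0\}$, with counterclockwise ends $h_{\tail u}$ and $h_{\tail x}$ respectively.

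Next I would examine where these two sub-paths diverge in the counterclockwise direction. Say without loss of generality $\tail x\le\tail u$. If $\tail x<\tail u$, then $x\sim h_{\tail u-1}$ but $u\not\sim h_{\tail u-1}$; I would then follow the counterclockwise boundary of $H$ from $h_{\tail u}$ around to $h_{\tail x}$ (these are exactly the vertices $h_{\tail u},h_{\tail u-1},\dots,h_{\tail x}$, all adjacent to $x$ but whose interior is disjoint from $N_H[u]$ by the sub-path structure), closing it up through $u$ on the clockwise side to form a hole: concretely $(u\,h_{\head u}\,\cdots\,h_0\,h_{-1}\,\cdots\,h_{\tail x}\,x\,u)$ — using that $u\sim h_{-1}\sim\cdots$ and that $x$ is adjacent to both $h_{\tail x}$ and $h_{-1}$ while nonadjacent to the interior. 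Wait — one must check this really is induced: the delicate point is whether $u$ sees any of the vertices strictly between $h_{\tail u}$ and $h_{\tail x}$, which it does not, and whether $x$ sees any interior vertex of the $u$-side arc, i.e.\ $h_i$ with $1\le i\le\head u-1$ or with $\tail u+1\le i\le -1$; this need not hold in general, so I would split on it: if $x$ does see such an $h_i$, that $h_i$ plus the obvious $4$-cycle through $u,x$ and two consecutive hole vertices yields a $C^*$ or a wheel directly, and otherwise the hole above is induced. Having an induced hole containing $u,x$ and one extra vertex completely adjacent or completely nonadjacent to it gives a wheel or $C^*$, which we then trim to a \badgraph.

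The genuinely tricky case is $\tail u=\tail x$ (and, symmetrically, $\head u=\head x$, handled the same way using the $\oc$/clockwise side): now $N_H[u]$ and $N_H[x]$ have the same counterclockwise end, so the divergence must be on the clockwise side, and I would repeat the boundary-walk argument there, this time using that one of $u,x$ is adjacent to $h_{\head{}+1}$ and the other is not. If $N_H[u]=N_H[x]$ entirely, then $u$ and $x$ are "parallel" and I would instead use a middle hole vertex: pick any $h_i$ with $\tail u<i<\head u$ (such an $i$ exists unless $|N_H[u]|\le 2$, a bounded case disposed of by a small-graph check as in Lemma~\ref{lem:non-consecutive}); then $u,x,h_i$ together with the two boundary vertices $h_{\tail u-1}$ and $h_{\head u+1}$, each adjacent to exactly the expected subset, form one of the finite graphs in Fig.~\ref{fig:normal-helly} — most naturally an $\overline{C_6}$ or a domino depending on $|N_H[u]|$. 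I expect the main obstacle to be exactly this bookkeeping: enumerating the handful of configurations of $(\tail u,\head u,\tail x,\head x)$ and of the small values of $|N_H[u]|,|N_H[x]|$, and for each one naming the correct \badgraph\ from Figs.~\ref{fig:at} and~\ref{fig:normal-helly}; the time bound is immediate since everything is driven by $N_H[u]$ and $N_H[x]$, plus at most one call to an earlier lemma, hence $O(n+m)$.
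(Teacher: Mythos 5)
There is a genuine gap here, and it is structural rather than a matter of bookkeeping. Your entire argument is driven by $N_H[u]$ and $N_H[x]$ alone; nowhere do you use the actual content of the hypothesis $u,x\in\occ$, namely the existence of witnesses $v,y\in\overline{\oo}$ with $uv,xy\in\ecc$. But the configurations you analyse are perfectly realizable inside a \nhcag: attach to a $C_5$ two nonadjacent vertices $u,x$ whose arcs both lie inside the overlap of $A_{h_{-1}}$ and $A_{h_0}$, so that $N_H[u]=N_H[x]=\{h_{-1},h_0\}$. This satisfies everything you actually invoke (including your ``parallel'' case $N_H[u]=N_H[x]$, for which you offer five vertices as a six-vertex $\overline{C_6}$ or domino), yet the graph is a \nhcag\ and contains no forbidden induced subgraph; hence no case analysis on $(\tail{u},\head{u},\tail{x},\head{x})$ alone can succeed. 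What forces $\occ$ to be a clique is that each of $u,x$ must in addition reach a vertex of $\overline{\oo}$ on the counterclockwise side; the paper's proof accordingly builds its certificates (net, long claw, $K_{2,3}$, twin-$C_5$, FIS-1, FIS-2, $\overline{C_6}$, $C^*$) out of the seven vertices $\{h_0,h_1,h_2,u,v,x,y\}$, and the witnesses $v$ and $y$ are indispensable in every case except the initial wheel.

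A second, independent problem is that the cycle $(u\,h_{\head{u}}\cdots h_0\,h_{-1}\cdots h_{\tail{x}}\,x\,u)$ you propose is not induced. Since $u,x\in\occ$ we have $\tail{u},\tail{x}\le -1\le 0\le\head{u}$, so the segment $h_{\head{u}},\dots,h_0,h_{-1},\dots,h_{\tail{u}}$ of your cycle consists entirely of neighbours of $u$, and $h_{-1},h_0$ are common neighbours of $u$ and $x$; the cycle therefore has many chords at $u$, and indeed no hole through both $u$ and $x$ can be routed along $H$ on the counterclockwise side at all. The one salvageable fragment is your implicit observation that if both $u$ and $x$ are adjacent to $h_1$ then $(u\,h_{-1}\,x\,h_1\,u)$ together with $h_0$ is a wheel --- which is exactly the paper's opening move --- but everything past that point has to go through $v$ and $y$.
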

\begin{proof}
  By definition, we can find edges $u v,x y\in \ecc$.  We have three
  (possibly intersecting) induced paths $h_0 h_1 h_2$, $h_0 u v$, and
  $h_0 x y$.  If both $u$ and $x$ are adjacent to $h_1$, then we
  return ($u h_{-1} x h_1 u$)$+h_0$ as a wheel.  Hence we may assume
  $x\not\sim h_1$.

  If $u\sim h_1$, then by Lemma~\ref{lem:hole-conditions}, $N_H[u] =
  \{h_{-1}, h_0, h_1\}$.  We consider the subgraph induced by the set
  of distinct vertices $\{h_0, h_1, h_2, u, v, x\}$.  If $v$ is
  adjacent to $h_0$ or $h_1$, then we can call
  Lemma~\ref{lem:non-helly} with $u$ and $v$.  By assumption, $h_0,
  h_1$, and $u$ make a triangle; $x$ is adjacent to neither $u$ nor
  $h_1$; and $h_2$ is adjacent to neither $h_0$ nor $u$.  Therefore,
  the only uncertain adjacencies in this subgraph are between $v, x$,
  and $h_2$.  The subgraph is thus isomorphic to
  \begin{inparaitem}
  \item[(1)] FIS-1 if there are two edges among $v, x$, and $h_2$;
  \item[(2)] $\overline{C_6}$ if $v, x$, and $h_2$ are pairwise
    adjacent; or
  \item[(3)] net if $v, x$, and $h_2$ are pairwise nonadjacent.
  \end{inparaitem}
  In the remaining cases there is precisely one edge among $v, x$, and
  $h_2$, then we can return a $C^*$, e.g., ($v x h_0 u v$) and $h_2$
  when the edge is $vx$.

  Assume now that $u, x$, and $h_1$ are pairwise nonadjacent.  We
  consider the subgraph induced by $\{h_0, h_1, h_2, u, v, x, y\}$,
  where the only uncertain relations are between $v, y$, and $h_2$.
  The subgraph is thus isomorphic to
  \begin{inparaenum}[(1)]
  \item $K_{2,3}$ if all of them are identical; or
  \item twin-$C_5$ if two of them are identical, and adjacent to the
    other.
  \end{inparaenum}
  If two of them are identical, and nonadjacent to the other, then the
  subgraph contains a $C^*$, e.g., ($v u h_0 x v$) and $h_2$ when
  $v=y$.  In the remaining cases, all of $v, y$, and $h_2$ are
  distinct, and then the subgraph
  \begin{inparaenum}[(1)]
  \item  is isomorphic to long claw if they are pairwise nonadjacent;
  \item contains net $\{h_1, h_2, u, v, x, y\}$ if they are pairwise
    adjacent; or
  \item  is isomorphic to FIS-2 if there are two edges among them.
  \end{inparaenum}
  If there is one edge among them, then the subgraph contains a $C^*$,
  e.g., ($v u h_0 x y v$) and $h_2$ when the edge is $v y$.

  A symmetrical argument applies to \oc.  Edges $u v$ and $x y$ can be
  found in $O(n)$ time, and only a small constant number of
  adjacencies are checked; it thus takes $O(n+m)$ time in total.
\end{proof}

It can be checked in linear time whether \occ\ and \oc\ induce
cliques.  When it is not, a pair of nonadjacent vertices can be found
in the same time.  By Lemma~\ref{lem:O}, we may assume hereafter that
\occ\ and \oc\ induce cliques.  We say that a vertex $v$ is {\em
  simplicial} if $N[v]$ induces a clique.  Recall that $N(w)\subseteq
\occ$; as a result, $w$ is simplicial and participates in no holes.
\begin{proposition}\label{lem:non-bypass}
  Given an \stpath{h^l_0}{h^r_0} $P$ that is nonadjacent to $h_i$ for
  some $1<i<|H| - 1$, we can in $O(n + m)$ time find a \badgraph.
\end{proposition}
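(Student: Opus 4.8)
The plan is to project $P$ back into $G$ along $\phi$, to carve a hole through $h_0$ out of the resulting closed walk, and to use $h_i$ as the isolated vertex of a $C^*$; the substance of the proof lies in the few degenerate configurations where this projection collapses and $H$ has to be used directly.

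\emph{Shape of $P$.} I would first argue that $P=(h_0^l,u_1^l,x_2,\dots,x_{d-2},u_{d-1}^r,h_0^r)$ with $d\ge 4$, $u_1,u_{d-1}\in\oo\setminus\{h_0\}$, and $x_2,\dots,x_{d-2}\in\overline{\oo}$. The reasons: $\mho(G)$ has no edge between $L$ and $R$; since $h_0$ has no neighbour in $\overline{\oo}$, the vertex $h_0^l$ (resp.\ $h_0^r$) is adjacent in $\mho(G)$ to exactly the vertices $v^l$ (resp.\ $v^r$) with $v\in\oo\setminus\{h_0\}$, so on an induced path issuing from $h_0^l$ the only $L$-vertex after $h_0^l$ is its immediate successor, and symmetrically on the $R$-side; and $w$ cannot occur on $P$ (its neighbours all lie in $L$, of which only the successor of $h_0^l$ can appear on $P$, while $h_0^l\not\sim w$ because $h_0\notin\occ$). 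The vertices of $P$ are distinct, hence the only possible collision among their $\phi$-images is $u_1=u_{d-1}$, and in every case $\overline{P}:=\phi(P)=(h_0,u_1,x_2,\dots,x_{d-2},u_{d-1},h_0)$ is a closed walk of $G$ whose consecutive vertices are adjacent.

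\emph{Separating $h_i$.} Since $h_i\in\overline{\oo}$ we have $h_i\not\sim h_0$, and $h_i\sim x_j$ holds in $G$ iff it holds in $\mho(G)$, so $h_i$ misses $x_2,\dots,x_{d-2}$. Because $\{\ec,\ecc\}$ partitions the $\oo$--$\overline{\oo}$ edges and each is recorded at a unique copy of its $\oo$-endpoint, an adjacency $h_i\sim v$ with $v\in\oo$ is visible at exactly one of $v^l,v^r$; so whenever both copies of a vertex of $\oo$ lie on $P$ — which is precisely the case $u_1=u_{d-1}$ — that vertex misses $h_i$ too. Consequently $h_i$ is nonadjacent to all of $\overline{P}$ except, possibly, $u_1$ (and only if $u_1h_i\in\ec$) and $u_{d-1}$ (and only if $u_{d-1}h_i\in\ecc$).

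\emph{Carving the hole, and the hard cases.} As $h_0\not\sim x_j$ for all $j$, the $\phi(P)$-neighbours of $h_0$ lie in $\{u_1,u_{d-1}\}$, and the only chords of $\overline{P}$ join $u_1$ or $u_{d-1}$ to some $x_j$, or join $u_1$ to $u_{d-1}$. Shortcutting repeatedly across chords, always retaining $h_0$, yields a hole $C$ with $h_0\in C\subseteq\phi(P)$ and $C\setminus\{u_1,u_{d-1}\}\subseteq\overline{\oo}$ — unless the walk degenerates to a triangle, diamond, or paw, a bounded list of cases. When $C$ is obtained and $h_i$ misses $u_1$ and $u_{d-1}$, then $C\cup\{h_i\}$ is a $C^*$ ($h_i\notin\phi(P)$, so $h_i\notin C$), and the same works with the hole $(u_1,x_2,\dots,x_{d-2},u_{d-1})$ when the chord $u_1u_{d-1}$ is present. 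The remaining branches are the genuinely delicate ones: there an edge $u_1h_i\in\ec$ or $u_{d-1}h_i\in\ecc$ forces $N_H[u_1]$ or $N_H[u_{d-1}]$ to be a long contiguous arc of $H$ through both $h_0$ and $h_i$ on a fixed side, and — using Lemma~\ref{lem:hole-conditions} when that arc equals $\{h_{-1},h_0,h_1\}$ — one either reroutes $C$ around the bad vertex through $H$ or reads a wheel or $C^*$ off $O(1)$ named vertices of $H\cup\overline{P}$; the few small degenerate cases are finished by direct inspection, in several of which we simply hand an explicitly identified pair or triple to one of Lemmas~\ref{lem:non-consecutive}, \ref{lem:non-consecutive-2}, \ref{lem:non-helly}, and \ref{lem:O}. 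Everything is driven by data ($P$, the $N_H[\cdot]$'s, the $\ec/\ecc$ classification) computable in $O(n+m)$ time plus $O(1)$ adjacency checks per branch, so the running time is linear. \textbf{The main obstacle} is exactly this exhaustive case analysis of the degenerate configurations, where $\overline{P}$ alone no longer contains a hole avoiding $h_i$ and the clockwise/counterclockwise structure of $H$ must be invoked.
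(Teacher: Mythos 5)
Your overall strategy---project $P$ to a closed walk in $G$, carve out a hole through $h_0$, and attach $h_i$ as the isolated vertex of a $C^*$---is a genuinely different route from the paper's, and your structural analysis of an induced $P$ (exactly one $L$-vertex after $h_0^l$, one $R$-vertex before $h_0^r$, everything in between in $\overline\oo$, no $w$) is correct. But the proof is not complete: the configurations you defer to ``direct inspection'' and ``exhaustive case analysis''---where $\og{P}$ degenerates so that it contains no hole avoiding $h_i$, or where $u_1$ or $u_{d-1}$ is adjacent to $h_i$ in $G$ through the copy that is not on $P$---are exactly where the content of the proposition lies, and you flag this yourself as ``the main obstacle.'' These cases are neither few nor small: already for $d=5$ nothing in your argument excludes the chords $u_1u_4$, $u_1x_3$, $u_4x_2$ from all being present in $G$ simultaneously, in which case $\og{P}$ contains no hole at all and your construction produces nothing; ruling such configurations out, or extracting a \badgraph\ from them, is precisely the work that remains. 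Two smaller issues: the statement does not say $P$ is induced, so you should first pass to an induced \stpath{h_0^l}{h_0^r} inside $V(P)$ (a one-line fix); and your $\ec/\ecc$ labels are swapped---$\ec$-edges attach to $L$, so $h_i\not\sim u_1^l$ rules out $u_1h_i\in\ec$ and leaves $u_1h_i\in\ecc$ as the possible residual adjacency, not the other way around.

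The paper avoids the entire case analysis with a much shorter mechanism, which you may want to compare. Take the sub-path $P'$ of $P$ whose ends $x\in L$ and $y\in R$ have all their inner vertices in $\overline\oo$. Every inner vertex $z$ of $P'$ is adjacent to neither $h_0$ nor $h_i$, so $N_H[z]$, being a proper sub-path of $H$, lies entirely in $\{h_1,\dots,h_{i-1}\}$ or entirely in $\{h_{i+1},\dots,h_{-1}\}$; the inner neighbor of $x$ lands in the first set and that of $y$ in the second (unless an adjacent pair with disjoint $N_H$'s appears immediately). Walking along $P'$, some consecutive pair must therefore have disjoint $N_H$'s, and Lemma~\ref{lem:non-consecutive-2} converts any such adjacent pair directly into a \badgraph. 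This reduces the whole proposition to a lemma already in hand, with no hole to carve, no reliance on inducedness, and no degenerate configurations to enumerate; your approach, even if completed, would buy nothing over it here.
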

\begin{proof}
  Inside $P$ there must be a sub-path $P'$ whose ends $x,y$ are in $L$
  and $R$ respectively, and whose inner vertices are all in
  $\overline\oo$.  Let $x'$ and $y'$ be the neighbors of $x$ and $y$
  in $P'$ respectively; note that they are both in $\overline\oo$.  If
  $N_H[\og{x}]$ and $N_H[x']$ are disjoint, then we call
  Lemma~\ref{lem:non-consecutive-2}.  Otherwise, by assumption, we
  have $0 < \tail{x'} \le \head{x'} < i$; likewise, we may assume
  $i<\tail{y'}\le \head{y'}< |H|$.  Starting from $x'$, we traverse
  $P'$ till the first pair of consecutive vertices $u,v$ in $P$ such
  that $N_H[u]$ and $N_H[v]$ are disjoint: note that such a pair must
  exist because no vertex between $x'$ and $y'$ is adjacent to $h_0$
  or $h_i$.  Then we call Lemma~\ref{lem:non-consecutive-2}.
\end{proof}

We are now ready to prove Theorem~\ref{thm:negative-certificate}, which
is separated into three statements, the first of which considers the
case when $\mho(G)$ is not chordal.
\begin{lemma}\label{lem:hole}
  Given a hole $C$ of $\mho(G)$, we can in $O(n+m)$ time find a
  \badgraph.
\end{lemma}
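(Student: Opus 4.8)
The plan is to take the hole $C$ of $\mho(G)$ and project it back to $G$ via $\phi$, then argue that either the projected closed walk already exposes one of the structural lemmas (Lemmas~\ref{lem:non-consecutive}, \ref{lem:non-consecutive-2}, \ref{lem:non-helly}, \ref{lem:3-cover}, \ref{lem:O}, or Proposition~\ref{lem:non-bypass}), or it yields a genuine hole or wheel of $G$ directly. First I would record the easy structural facts about $C$: since $w$ is simplicial in $\mho(G)$ (as noted after Lemma~\ref{lem:O}), $w\notin C$; and since there is no edge between $L$ and $R$, $C$ cannot contain a vertex of $L$ adjacent to a vertex of $R$. Then $\phi(C)$ is a closed walk in $G$ in which consecutive vertices are adjacent (by the construction of $\mho(G)$, adjacency in $\mho(G)$ implies adjacency of the $\phi$-images in $G$).

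Next I would do a case analysis on how $C$ meets $L\cup R$. If $C$ avoids $L\cup R$ entirely, then $C\subseteq\overline{\oo}$ and $\phi$ is the identity there, so $C$ is literally a hole of $G$ disjoint from $\{h_{-1},h_0,h_1\}$ — but every such hole vertex has neighbours in $H$ by Proposition~\ref{lem:fundamental}, and one checks this forces a chord or a short-cut contradiction, or more directly this case simply cannot arise because $C$ would then be a hole of $G$ and we can invoke Lemma~\ref{lem:non-consecutive} / \ref{lem:non-consecutive-2} on the interaction of its vertices with $H$; in any event a forbidden subgraph pops out. If $C$ meets exactly one of $L,R$, say $L$, consider a maximal run of $C$ inside $L$: its two neighbours on $C$ lie in $\overline{\oo}$ (they can't be in $R$, and $w\notin C$). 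Collapsing each such $L$-run $v_1^l\cdots v_k^l$ and noting $\phi$ may identify vertices, I get a closed walk in $G$; if it has a repeated vertex or a chord I extract a shorter hole/$C^*$/wheel via Lemma~\ref{lem:non-consecutive} applied to whichever $v_j$ is hit, otherwise $\phi(C)$ is a genuine hole of $G$ and I am done. The interesting case is when $C$ meets both $L$ and $R$: then, as in Proposition~\ref{lem:non-bypass}, inside $C$ there is a subpath from $L$ to $R$ with interior in $\overline{\oo}$; if some such $L$--$R$ subpath is a $(v^l,v^r)$-path for a single $v\in\oo$ of length $\le 3$ I invoke Lemma~\ref{lem:3-cover}, and if it has length $\ge 4$ I argue it corresponds to arcs covering the circle and use the hole-plus-$h_i$ structure to call Proposition~\ref{lem:non-bypass}; if the $L$ and $R$ endpoints of the subpath come from different vertices $v\ne v'$ of $\oo$, then $v\in\oc$ (its edge into the first $\overline{\oo}$-vertex is clockwise) while... — here I would use that $v^l\notin R$-neighbourhood and trace the edges to land either in Lemma~\ref{lem:non-consecutive-2} (on a consecutive pair along the $\overline{\oo}$-part whose $H$-neighbourhoods are disjoint) or in Lemma~\ref{lem:O} (nonadjacent pair in $\oc$ or $\occ$ forced by the $C$-nonadjacency of the two $L$/$R$ vertices).

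For the running time: building $\phi(C)$ and detecting repeats/chords is $O(|C|)$ plus neighbourhood lookups; each invoked lemma is $O(n+m)$; and the case analysis is exhausted by a constant number of branches, so the total is $O(n+m)$. \textbf{The main obstacle} I expect is the both-$L$-and-$R$ case when the $L$-endpoint and $R$-endpoint of the crossing subpath are derived from \emph{distinct} vertices of $\oo$: there one must carefully argue, using Lemma~\ref{lem:hole-conditions} and the clockwise/counterclockwise partition of edges out of $\oo$, that the induced nonadjacency pattern forced by $C$ being a hole pins the configuration down enough to feed a consecutive pair with disjoint $H$-neighbourhoods into Lemma~\ref{lem:non-consecutive-2}, or a nonadjacent pair into Lemma~\ref{lem:O}; handling the short crossing subpaths (lengths $3$ and $4$) uniformly with Lemmas~\ref{lem:3-cover} and the reasoning of Proposition~\ref{lem:non-bypass} will be the fiddliest part.
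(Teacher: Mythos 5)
There is a genuine gap at the heart of your proposal. In the case where $C$ meets $L$ (and possibly $R$) together with $\overline{\oo}$, you collapse the $L$-runs and conclude that ``otherwise $\phi(C)$ is a genuine hole of $G$ and I am done.'' But a hole of $G$ is \emph{not} a forbidden induced subgraph for normal Helly circular-arc graphs---the entire non-chordal branch of the algorithm already lives inside a graph with a hole $H$. When the projection $\og{C}$ happens to be an induced cycle of $G$, you have produced nothing, and this situation genuinely occurs. What the paper does at exactly this point is the missing idea: pick a crossing edge $x_1x_2$ of $C$ with $x_1\in L$, $x_2\in\overline{\oo}$ (so $x_1x_2\in\ec$), set $a:=\head{\og{x_1}}$, and show that a suitable sub-cycle of $\og{C}$---obtained by short-cutting through the first neighbours of $h_{a+1}$ and $h_{a-1}$ along $C$, or through $h_{a-2}$---is a hole of $G$ \emph{completely adjacent to $h_a$}, i.e., a wheel. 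Without some such construction of a vertex dominating a cycle, the ``clean projection'' branch of your case analysis has no output. A secondary problem: you propose to dispose of chords and repeated vertices in $\og{C}$ by ``Lemma~\ref{lem:non-consecutive} applied to whichever $v_j$ is hit,'' but that lemma only produces a certificate when $N_H[v]$ fails to induce a proper sub-path of $H$, which is already assumed not to happen by the time Theorem~\ref{thm:negative-certificate} is invoked; chords of $\og{C}$ come from bad pairs and have to be exploited differently (in the paper they are precisely what creates the dominated sub-cycle).

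Two smaller points. The case $C\subseteq\overline{\oo}$ does arise and is trivial: every vertex of $\overline{\oo}$ is nonadjacent to $h_0$, so $\og{C}\cup\{h_0\}$ is a $C^*$; your detour through ``a chord or short-cut contradiction'' and Lemmas~\ref{lem:non-consecutive}--\ref{lem:non-consecutive-2} is both unnecessary and not obviously sound. Similarly $C\subseteq L$ or $C\subseteq R$ gives a wheel with $h_0$ immediately. Finally, your treatment of the ``$C$ meets both $L$ and $R$'' case is only a sketch of intentions (you flag it yourself as the main obstacle); note that the paper does not actually branch on whether $C$ meets one or both of $L,R$---once a single $\ec$-crossing edge is fixed, the same wheel construction covers both situations, which is what makes the proof go through.
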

\begin{proof}
  Let us first take care of some trivial cases.  If $C$ is contained
  in $L$ or $R$ or $\overline{\oo}$, then by construction, \og{C} is a
  hole of $G$.  This hole is either nonadjacent or completely adjacent
  to $h_0$ in $G$, whereupon we can return \og{C} and $h_0$ as a $C^*$
  or wheel respectively.  Since $L$ and $R$ are nonadjacent, one of
  the cases above must hold if $C$ is disjoint from $\overline{\oo}$.
  Henceforth we may assume that $C$ intersects $\overline{\oo}$ and,
  without loss of generality, $L$; it might intersect $R$ as well, but
  this fact is irrelevant in the following argument.  Then we can find
  an edge $x_1 x_2$ of $C$ such that $x_1\in L$ and $x_2\in
  \overline{\oo}$, i.e., $x_1 x_2\in \ec$.

  Let $a := {\head{\og{x_1}}}$.  Assume first that $x_2 = h_a$; then
  we must have $a > 1$.  Let $x_3$ and $x_4$ be the next two vertices
  of $C$.  Note that $x_3\not\in L$, i.e., $x_3\not\sim h^l_0$;
  otherwise $x_1\sim x_3$, which is impossible.  If $x_3\sim h_{a -
    2}$ (or $h^l_{a - 2}$ when $a = 3$), then
  $\og{\{x_1,x_2,x_3\}}\cup \{h_{a - 2}\}$ induces a hole of $G$, and
  we can return it and $h_{a-1}$ as a wheel.  Note that $x_4 \not\sim
  h_{a}$ as they are non-consecutive vertices of the hole $C$.  We now
  argue that $\head{\og{x_4}}< a$.  Suppose for contradiction,
  $\tail{\og{x_4}} > a$.  We can extend the \stpath{x_3}{x_1} $P$ in
  $C$ that avoids $x_2$ to a \stpath{h^l_0}{h^r_0} avoiding the
  neighborhood of $h_a$, which allows us to call
  Proposition~\ref{lem:non-bypass}.  We can call
  Lemma~\ref{lem:non-consecutive-2} with $x_3$ and $x_4$ if
  $\tail{\og{x_3}} = a$.  In the remaining case, $\tail{\og{x_3}} = a
  - 1$.  Let $x$ be the first vertex in $P$ that is adjacent to
  $h_{a-2}$ (or $h^l_{a-2}$ if $a\le 3$); its existence is clear as
  $x_1$ satisfies this condition.  Then $\og{\{x_3, \dots, x, h_{a-2},
    x_1, x_2\}}$ induces a hole of $G$, and we can return it and
  $h_{a-1}$ as a wheel.

  Assume now that $h_a$ is not in $C$.  Denote by $P$ the
  \stpath{x_2}{x_1} obtained from $C$ by deleting the edge $x_1 x_2$.
  Let $x$ be the first neighbor of $h_{a+1}$ in $P$, and let $y$ be
  either the first neighbor of $h_{a-1}$ in the \stpath{x}{x_1} or the
  other neighbor of $x_1$ in $C$.  It is easy to verify that
  $\og{\{x_1, \cdots, x, \cdots y, x_2\}}$ induces a hole of $G$,
  which is completely adjacent to $h_a$, i.e., we have a wheel.
\end{proof}

In the rest $\mho(G)$ will be chordal, and thus we have a chordal
minimally non-interval subgraph $F$ of $\mho(G)$.  This subgraph is
isomorphic to some graph in Fig.~\ref{fig:at}, on which we use the
following notation.  It is immediate from Fig.~\ref{fig:at} that each
of them contains precisely three simplicial vertices (squared
vertices), which are called \emph{terminals}, and others (round
vertices) are \emph{non-terminal vertices}.  In a long claw or \dag,
for each $i=1,2,3$, terminal $t_i$ has a unique neighbor, denoted by
$u_i$.

Since the diameter and maximum clique of $F$ is at most four, all bad
pairs in it can be found easily.
\begin{proposition}\label{lem:find-bad-pair}
  Given a subgraph $F$ of $\mho(G)$ in Fig.~\ref{fig:at}, we can in
  $O(n+m)$ time find either all bad pairs in $F$ or a \badgraph.
\end{proposition}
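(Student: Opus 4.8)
The plan is to exploit the fact that $F$ has bounded size (at most seven vertices, diameter at most four), so that enumerating candidate bad pairs is a matter of inspecting a constant number of vertex pairs; the real work is to show that whenever such a pair is found, we can \emph{either} confirm it is a bad pair \emph{or} exhibit a \badgraph. First I would recall that a bad pair $x,y$ requires $\og{x}\sim\og{y}$ in $G$ but $x\not\sim y$ in $\mho(G)$, and by the discussion preceding the proposition at least one of $x,y$ lies in $L\cup R$, while $w\notin F$ can be assumed since $w$ is simplicial and participates in no holes (and the minimal non-interval graphs of Fig.~\ref{fig:at} are not cliques plus a simplicial vertex in the relevant sense---more precisely $w$'s only neighbors are in $\occ$, which induces a clique, so $w$ cannot be a non-terminal of $F$ and if it were a terminal it could be harmlessly handled). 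For each vertex $x$ of $F$ we know $\og{x}$ from the construction of $\mho(G)$, so for each of the $O(1)$ pairs $\{x,y\}$ with $x\not\sim y$ in $F$ we test in constant time whether $\og{x}\sim\og{y}$ in $G$ using the adjacency information already stored; each such pair that passes the test is a candidate bad pair.

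The next step is to bound how many candidate bad pairs there can be before we are forced to output a \badgraph. The key observation is that a bad pair $x,y$ with $x=v^l$ (or $v^r$) yields, together with the path through $F$ between $x$ and $y$, an induced \stpath{v^l}{v^r}; when this path is short we can invoke Lemma~\ref{lem:3-cover} (length $3$) or, for length-$4$ paths that are nonadjacent to some $h_i$, Proposition~\ref{lem:non-bypass}. So I would argue case by case over the graphs in Fig.~\ref{fig:at}: since $F$ is chordal with small diameter, any non-edge $xy$ that turns out to be a bad pair sits inside a short cycle of $\mho(G)$-edges plus the virtual edge $\og{x}\og{y}$, and I would show that either (i) this already produces one of the forbidden structures of Lemmas~\ref{lem:non-consecutive}, \ref{lem:non-consecutive-2}, \ref{lem:non-helly}, \ref{lem:3-cover}, \ref{lem:O}, or a wheel/$C^*$ directly, or (ii) the number of bad pairs is so small (at most one or two, in controlled positions relative to the terminals) that we may simply report them. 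In particular, if two bad pairs were to share structure in a way that creates a short \stpath{v^l}{v^r}, Lemma~\ref{lem:3-cover} applies; and a bad pair between two terminals, or between a terminal and its neighbor $u_i$, can be ruled out or exploited using that $\og{F}$ would then contain a shortcut producing a strictly smaller forbidden configuration in $G$.

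The main obstacle I expect is the bookkeeping of the case analysis: one must go through each of the four graphs in Fig.~\ref{fig:at} (long claw, whipping top, \dag, \ddag) and, for each, determine precisely which non-edges can be bad pairs consistent with $F$ being an \emph{induced} minimal non-interval subgraph of $\mho(G)$, and then verify that every admissible pattern of bad pairs either leaves $\og{F}$ still exhibiting a recognizable obstruction in $G$ or is short enough to list outright. The reason this is delicate rather than trivial is that $\og$ is not injective on $F$ in general ($v^l$ and $v^r$ both map to $v$), so a non-edge of $F$ could even collapse to a loop under $\og$; I would handle that degenerate case first (if $\og{x}=\og{y}$ for a non-edge $xy$ of $F$, then $\{x,y\}=\{v^l,v^r\}$ for some $v\in\oo$, and the $F$-path between them is an induced \stpath{v^l}{v^r} of length at most the diameter of $F$, so Lemma~\ref{lem:3-cover} or Proposition~\ref{lem:non-bypass} applies). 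After discharging the collapsing case and the terminal-adjacent cases, what remains should be a short finite list of genuinely possible bad pairs per graph, each locatable and verifiable in constant time, giving the claimed $O(n+m)$ bound (dominated by the one-time cost of having $\og$ and the $H$-adjacencies available).
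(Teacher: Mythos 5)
Your proposal rests on the premise that $F$ ``has bounded size (at most seven vertices, diameter at most four)'', and everything downstream---enumerating a constant number of non-adjacent pairs, the per-graph bookkeeping---depends on that. The premise is false: while the long claw and the whipping top are single graphs on seven vertices, \dag\ and \ddag\ in Fig.~\ref{fig:at} are infinite families (the $n$-nets and $n$-tents of Lekkerkerker--Boland), so $F$ may contain $\Theta(n)$ vertices and $\Theta(n^2)$ non-edges. The paper's own proof begins by dismissing the two seven-vertex graphs precisely because they are trivial, and then does real work on the other two. Brute-force inspection of all non-adjacent pairs of $F$ is therefore not an $O(n+m)$ procedure, and your argument offers no substitute for it.

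The missing idea is a structural bound on where bad pairs can live inside a large $F$. The paper observes that every bad pair either meets one of the sets $L_{\text{c}}=\{v^l: v\in\oc\}$ and $R_{\text{cc}}=\{v^r:v\in\occ\}$---each a clique, hence contributing at most four vertices to $F$, so all such pairs are found by scanning the adjacency lists of at most eight vertices of $F$---or else joins $L\setminus L_{\text{c}}$ to $R\setminus R_{\text{cc}}$, two sets whose distance in $\mho(G)$ is at least $4$; since a \dag\ has exactly one pair of vertices at distance $4$ and a \ddag\ has none, at most one further candidate remains. Without an argument of this kind the claimed running time does not follow. Two smaller points: the degenerate case $\og{x}=\og{y}$ you set out to handle is vacuous, since a bad pair requires $\og{x}\sim\og{y}$ and hence $\og{x}\neq\og{y}$; and the conversion of bad pairs into \badgraph s, on which you spend much of your effort, is the job of Lemmas~\ref{lem:at-no-w} and \ref{lem:at-with-w}, not of this proposition, which only needs to locate the bad pairs.
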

\begin{proof}
  A long claw or whipping top has only seven vertices, and thus will
  not concern us.  A bad pair is either between $L$ and $R$, or
  between $\{v^l: v\in \occ\}\cup \{v^r: v\in \oc\}$ and $\overline
  \oo$.  Let ${L_{\text{c}}} = \{v^l: v\in \oc\}$ and ${R_{\text{cc}}}
  = \{v^r: v\in \occ\}$; both induce cliques.  Since a clique of $F$
  contains at most $4$ vertices, it contains at most $8$ vertices of
  ${L_{\text{c}}}$ and ${R_{\text{cc}}}$.  Bad pairs intersecting them
  can thus be found in linear time.  We now consider other bad pairs,
  which must be between $L\setminus {L_{\text{c}}}$ and $R\setminus
  {R_{\text{cc}}}$.  By construction, there is no edge between
  $L\setminus {L_{\text{c}}}$ and $R\cup \overline\oo$; there is no
  edge between $R\setminus {R_{\text{cc}}}$ and $L\cup \overline\oo$.
  Therefore, the shortest distance between $L\setminus {L_{\text{c}}}$
  and $R\setminus {R_{\text{cc}}}$ is at least $4$.  There exists only
  one pair of distance $4$ vertices in a $\dag$, and no such a pair in
  a $\ddag$.
\end{proof}

\begin{lemma}\label{lem:at-no-w}
  Given a subgraph $F$ of $\mho(G)$ in Fig.~\ref{fig:at} that does not
  contain $w$, we can in $O(n+m)$ time find a \badgraph.
\end{lemma}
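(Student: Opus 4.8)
The plan is to leverage the machinery already built: a subgraph $F \subseteq \mho(G)$ isomorphic to one of the four chordal graphs in Fig.~\ref{fig:at} that avoids $w$. Since $w \notin F$, every vertex $x$ of $F$ has a well-defined preimage $\og{x}$ in $G$. The natural candidate for a forbidden subgraph of $G$ is $\og{F}$, so the first step is to apply Proposition~\ref{lem:find-bad-pair} to find all bad pairs of $F$; if that routine already returns a \badgraph, we are done. Otherwise we have the complete (short) list of bad pairs, and we know that $|\og{F}|$ equals $|F|$ minus the number of coincidences caused by a vertex of $\oo$ appearing as both $v^l \in F$ and $v^r \in F$. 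The goal is then to argue that, up to the bad pairs, $\og{F}$ is still ``almost'' an instance of the same chordal obstruction in $G$, and that each residual defect (a bad pair, or a collapsed $v^l/v^r$ pair) can be patched using the earlier lemmas --- chiefly Lemma~\ref{lem:non-consecutive-2} (disjoint $N_H$), Lemma~\ref{lem:3-cover} (a length-3 \stpath{v^l}{v^r}), and Proposition~\ref{lem:non-bypass} (a bypassing \stpath{h_0^l}{h_0^r}).

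**Case analysis by obstruction type.** I would organize the argument by which of the four graphs $F$ is. For the long claw and whipping top ($7$ vertices each), the situation is small enough that a direct inspection --- as already done inside Proposition~\ref{lem:find-bad-pair} --- shows there are essentially no bad pairs to worry about, so $\og{F}$ is either exactly that graph (hence a \badgraph) or a coincidence $v^l = v^r$ occurs, which means there is a short \stpath{v^l}{v^r} inside $F$; if its length is $3$ we invoke Lemma~\ref{lem:3-cover}, and if it is $4$ (the only other possibility within these graphs) the path together with $v$ is a $4$- or $5$-hole in $G$ on which some additional vertex of $F$ acts, producing a wheel or $C^*$. For the \dag\ and \ddag, Proposition~\ref{lem:find-bad-pair} tells us there is at most one bad pair (exactly one distance-$4$ pair in $\dag$, none in $\ddag$), and that bad pair lies between $L \setminus L_{\text c}$ and $R \setminus R_{\text{cc}}$. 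I would show that if this bad pair exists then its two preimages $\og{x}=v$, $\og{y}=u$ are adjacent in $G$ but $x \not\sim y$ in $\mho(G)$; tracing the unique $x$--$y$ path of length $4$ inside $\dag$ and mapping it to $G$, it becomes a $\leq 4$-length cycle through $v$ and $u$, and the remaining vertices of $\dag$ attach to it so as to yield one of $C^*$, wheel, or a graph in Fig.~\ref{fig:normal-helly}; if instead the bad pair does not occur, $\og{F}$ is literally a \dag\ or \ddag\ in $G$. The collapsed-preimage cases are handled exactly as in the long claw discussion.

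**The main obstacle.** The delicate part is not the existence of these reductions but verifying that the structures returned are genuinely \emph{induced} and genuinely \emph{minimal} forbidden subgraphs of $G$, i.e.\ that no unwanted chords appear in $\og{F}$ and that the patched object is one of the finitely many obstructions of Theorem~\ref{thm:characterization} rather than something containing a smaller one. The subtlety is that an edge can be present in $G$ without being present in $\mho(G)$ (precisely the bad pairs) but the converse never fails, so $\og{F}$ has \emph{at least} the edges of $F$ (modulo collapses) and possibly a few extra, and one must rule out extra edges using the structural constraints on $\tail{}$ and $\head{}$ forced by Lemma~\ref{lem:hole-conditions} and the clique property of \occ, \oc\ established via Lemma~\ref{lem:O}. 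I expect the bulk of the proof to be a careful but routine enumeration: for each of the four $F$, for each way a bad pair or a $v^l=v^r$ collapse can sit inside it, exhibit the concrete vertex set and name the resulting obstruction, checking adjacencies against the definition of $\mho(G)$. The running time is immediate since $|F| \le 8$ and each invoked subroutine (Lemmas~\ref{lem:non-consecutive}, \ref{lem:non-consecutive-2}, \ref{lem:3-cover}, \ref{lem:O}, Proposition~\ref{lem:non-bypass}) runs in $O(n+m)$ time and is called $O(1)$ times.
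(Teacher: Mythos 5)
Your opening moves match the paper's (call Proposition~\ref{lem:find-bad-pair}; return $\og{F}$ if there is no bad pair; dispose of a bad pair at distance $\le 2$ via Lemma~\ref{lem:3-cover}), but the middle of the argument rests on two misreadings of Proposition~\ref{lem:find-bad-pair}, and they occur exactly where the real work lives. First, the remark there that a long claw or whipping top ``will not concern us'' refers only to the \emph{time} needed to list bad pairs (seven vertices, so brute force suffices); it does not say these graphs have no bad pairs. In fact the long claw \emph{with} a bad pair is the longest case of the paper's proof: depending on whether the minimum-distance bad pair is $\{t_1,t_2\}$ or $\{t_1,u_2\}$, and on how $\og{t_3}$ attaches to the hole created by closing the path between the bad pair, one returns a $C^*$, FIS-2, net, $K_{2,3}$, FIS-1, domino, or twin-$C_5$. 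Your plan as written skips all of this. Second, for the \dag\ you conclude ``there is at most one bad pair''; the proposition only bounds by one the bad pairs between $L\setminus L_{\text{c}}$ and $R\setminus R_{\text{cc}}$ --- bad pairs meeting $L_{\text{c}}$ or $R_{\text{cc}}$, or running between $\{v^l: v\in\occ\}\cup\{v^r: v\in\oc\}$ and $\overline\oo$, are found separately and there can be several. The paper's \dag\ case accordingly must treat $\{t_1,t_3\}$, $\{t_2,t_3\}$, and bad pairs inside $F-N[t_3]$.

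The missing idea that makes the enumeration tractable is to take a bad pair $x,y$ of \emph{minimum} distance in $F$, note that one of them must be a terminal (non-terminal vertices are pairwise at distance at most $2$, which would trigger Lemma~\ref{lem:3-cover}), and observe that the $x$--$y$ path maps to a hole of $G$; the minimality of that distance is precisely what restricts which further adjacencies the remaining terminal's image can have to this hole, so the finitely many attachment patterns can be matched one by one against the obstruction list. Your substitute device for excluding extra edges --- constraints on $\tail{}$/$\head{}$ and Lemma~\ref{lem:O} --- is not developed enough to carry this load. Also, a pair $v^l, v^r$ both lying in $F$ is not a separate ``collapse'' phenomenon in the paper's scheme: it is handled by the same path-to-hole argument or by Lemma~\ref{lem:3-cover}. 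The running-time claim is fine.
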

\begin{figure}[h]
  \centering
  \footnotesize
  \subfloat[labeled long claw]{\label{fig:long-claw}
    \includegraphics{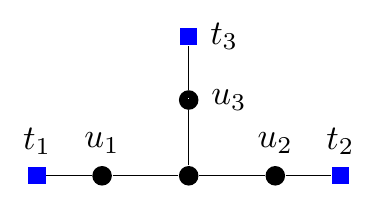} 
  }
  $\qquad$
  \subfloat[]{\label{fig:long-claw-5}
    \includegraphics{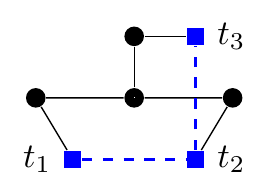} 
  }
  $\qquad$
  \subfloat[]{\label{fig:long-claw-6}
    \includegraphics{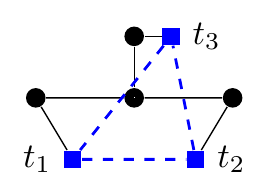} 
  }

  \subfloat[]{\label{fig:long-claw-1}
    \includegraphics{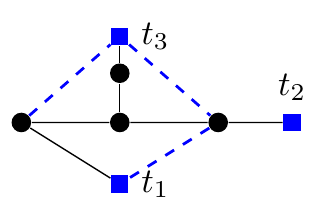} 
  }
  $\quad$
  \subfloat[]{\label{fig:long-claw-2}
    \includegraphics{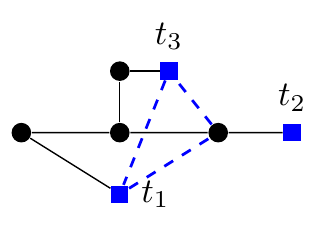} 
  }
  $\quad$
  \subfloat[]{\label{fig:long-claw-3}
    \includegraphics{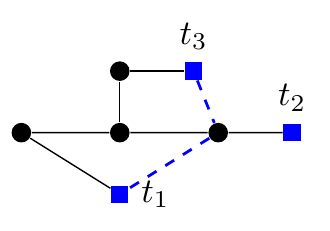} 
  }
  $\quad$
  \subfloat[]{\label{fig:long-claw-4}
    \includegraphics{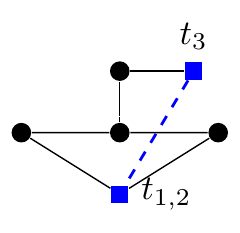} 
  }
  \caption{Illustrations for Lemma~\ref{lem:at-no-w} (blue dashed
    edges are in $G$ only).}
\label{fig:negative-certificate-1}
\end{figure}
\begin{proof}
  We first call Proposition~\ref{lem:find-bad-pair} to find all bad
  pairs in $F$.  If $F$ has no bad pair, then we return the subgraph
  of $G$ induced by \og{F}, which is isomorphic to $F$.  Let $x,y$ be
  a bad pair with the minimum distance in $F$; we may assume that it
  is $3$ or $4$, as otherwise we can call Lemma~\ref{lem:3-cover}.
  Noting that the distance between a pair of non-terminal vertices is
  at most $2$, we may assume that without loss of generality, $x$ is a
  terminal of $F$.  We break the argument based on the type of $F$.

  {\em Long claw.}  We may assume that $x = t_1$ and $y\in
  \{u_2,t_2\}$; other situations are symmetrical.  Let $P$ be the
  unique \stpath{x}{y} in $F$.  If \og{t_3} is nonadjacent to \og{P},
  then we return \og{P} and \og{t_3} as a $C^*$; we are thus focused
  on the adjacency between \og{t_3} and \og{P}
  (Fig.~\ref{fig:negative-certificate-1}).  If $y = t_2$, then by the
  selection of $x,y$ (they have the minimum distance among all bad
  pairs), $\og{t_3}$ can be only adjacent to $\og{t_1}$ and/or
  $\og{t_2}$.  We return either $\og{F}$ as an FIS-2
  (Fig.~\ref{fig:long-claw-5}), or $\og{\{t_1,t_2,t_3,u_1,u_2,u_3\}}$
  as a net (Fig.~\ref{fig:long-claw-6}).  In the remaining cases, $y =
  u_2$, and $\og{t_3}$ can only be adjacent to \og{u_1}, \og{u_2},
  and/or \og{t_1}.  We point out that possibly $\og{t_2} = \og{t_1}$,
  which is irrelevant as $\og{t_2}$ will not be used below.  If
  $\og{t_3}$ is adjacent to both $\og{u_1}$ and $\og{u_2}$ in $G$,
  then we get a $K_{2,3}$ (Fig.~\ref{fig:long-claw-1}).  Note that
  this is the only case when $\og{t_1} = \og{t_3}$.  If $\og{t_3}$ is
  adjacent to both $\og{t_1}$ and $\og{u_2}$ in $G$, then we get an
  FIS-1 (Fig.~\ref{fig:long-claw-2}).  If $\og{t_3}$ is adjacent to
  only $\og{u_2}$ or only $\og{t_1}$ in $G$, then we get a domino
  (Fig.~\ref{fig:long-claw-3}) or twin-$C_5$
  (Fig.~\ref{fig:long-claw-4}), respectively.  The situation that
  $\og{t_3}$ is adjacent to \og{u_1} but not \og{u_2} is similar as
  above.

  {\em Whipping top.} The diameter is $3$, and this distance is
  attained only by $\{t_1,t_3\}$ or $\{t_2, t_3\}$.  If both are bad
  pairs, then we have a domino.  If $\{t_1,t_3\}$ is the only bad
  pair, then $\og{F\setminus N[t_2]}$ induces a hole of $G$, and it is
  nonadjacent to \og{t_2}; we get a $C^*$.  A symmetrical argument
  applies if $\{t_2,t_3\}$ is the only bad pair.

  {\em \dag.}  Consider first that $x = t_1$ and $y = t_3$, and let $P
  = (t_1 u_1 u_3 t_3)$.  If \og{t_2} is nonadjacent to the hole
  induced by \og{P}, then we return \og{P} and \og{t_2} as a $C^*$.
  If \og{t_2} is adjacent to \og{t_3} or \og{u_1}, then we get a
  domino.  If \og{t_2} is adjacent to \og{t_1}, then we get a
  twin-$C_5$.  If \og{t_2} is adjacent to \og{t_1} and precisely one
  of $\{\og{t_3}, \og{u_1}\}$, then we get an FIS-1.  If \og{t_2} is
  adjacent to both \og{t_3} and \og{u_1}, then we get a $K_{2,3}$;
  here the adjacency between \og{t_2} and \og{t_1} is immaterial.  A
  symmetric argument applies when $\{t_2, t_3\}$ is a bad pair.  In
  the remaining case, neither \og{t_1} nor \og{t_2} is adjacent to
  \og{t_3}.  Therefore, a bad pair must be in the path $F - N[t_3]$,
  which is nonadjacent to \og{t_3}, then we get a $C^*$.

  {\em \ddag.} The only pair of vertices of distance $3$ is $\{t_1,
  t_2\}$.  Let $P$ be the \stpath{t_1}{t_2} in $F - N[t_3]$.  Since
  $\og{t_3}$ cannot be adjacent to any vertex in $\og{P}$, we can
  return \og{P} and \og{t_3} as a $C^*$.
\end{proof}

\begin{lemma}\label{lem:at-with-w}
  Given a subgraph $F$ of $\mho(G)$ in Fig.~\ref{fig:at} that contains
  $w$, we can in $O(n+m)$ time find a \badgraph.
\end{lemma}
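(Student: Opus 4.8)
The plan is to mirror the structure of the proof of Lemma~\ref{lem:at-no-w}, the key difference being that now the special vertex $w$ sits inside the forbidden subgraph $F$, and we must translate it into a vertex of $G$. Recall $N(w)\subseteq\{v^l:v\in\occ\}$ and $w$ is simplicial in $\mho(G)$, so $w$ is one of the three terminals of $F$, say $w=t_k$; write $u_k$ for its unique neighbor in $F$ when $F$ is a long claw or \dag, and note $\og{u_k}\in\occ$, so there is an edge $\og{u_k}v\in\ecc$ for some $v\in\overline\oo$. The natural candidate to play the role of $w$ in $G$ is $h_0$: by construction $h_0$ is adjacent to every vertex of $\occ$ via the left copy, but we must be careful because $h_0$ may have many other neighbors in $\og{F}$ as well.

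First I would dispose of the case where replacing $w$ by $h_0$ already works: if $h_0$ is adjacent in $G$ to exactly those vertices of $\og{F}$ that $w$ is adjacent to in $F$ (i.e.\ only $\og{u_k}$, or nothing extra beyond what the terminal of $F$ requires), then $\og{(F\setminus\{w\})\cup\{h_0\}}$ is isomorphic to $F$ and we are done. Otherwise $h_0$ has a ``spurious'' neighbor among $\og{F\setminus\{w\}}$, or fails to be adjacent to $\og{u_k}$ — but the latter cannot happen since $\og{u_k}\in\occ$ forces $\og{u_k}\sim h_0$. So the real work is handling spurious adjacencies of $h_0$ into $\og{F}$, and here I would invoke the machinery already available: if $h_0$ is adjacent to some vertex $x$ of $F$ with $\og{x}\ne\og{u_k}$, then together with $\og{u_k}$ and the structure of $F$ this typically yields a short forbidden configuration. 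Concretely, I would run through the four possible shapes of $F$ (long claw, whipping top, \dag, \ddag) as in Lemma~\ref{lem:at-no-w}, in each case take the unique path or pair realizing the structure, use $h_0$ in place of $w$, and case-analyze the adjacency of $h_0$ to the few remaining vertices of $\og{F}$; most branches give a $C^*$ (when $h_0$ is nonadjacent to a hole), a wheel (when $h_0$ is completely adjacent to a hole), or one of the small graphs of Fig.~\ref{fig:normal-helly} (when $h_0$ has an intermediate number of neighbors). I would also be prepared to call Lemmas~\ref{lem:non-consecutive}, \ref{lem:non-consecutive-2}, \ref{lem:non-helly}, or Lemma~\ref{lem:3-cover} whenever a bad pair or a short $(v^l,v^r)$-path surfaces.

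A second subtlety is that $w$ being a terminal means $w$ is only adjacent to $\occ$-copies, so if $F$ is a long claw or \dag\ then $u_k=\og{u_k}^l$ with $\og{u_k}\in\occ$, and I must track that $\og{u_k}^r$ (a vertex of $R\setminus R_{\text{cc}}$) is \emph{not} present or is far from the rest of $F$, so that no unexpected identification $|\og{F}|<|F|$ occurs; this is handled exactly as the bad-pair bookkeeping in Proposition~\ref{lem:find-bad-pair}. After first finding all bad pairs of $F$ via Proposition~\ref{lem:find-bad-pair} (treating $w$'s incident edges as never bad, since $w$ itself is dropped), I would argue that $\og{F\setminus\{w\}}$ together with $h_0$ realizes the required obstruction once the adjacencies of $h_0$ are resolved.

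The main obstacle I expect is the bookkeeping of exactly which extra adjacencies $h_0$ can have into $\og{F}$: unlike the generic terminal $t_3$ used in Lemma~\ref{lem:at-no-w}, the vertex $h_0$ is a fixed vertex of the ambient hole $H$ and its neighborhood in $G$ is not under our control, so the case analysis on $N_{\og{F}}[h_0]$ is the delicate part — one must show that in \emph{every} sub-case the resulting graph on $\og{F\setminus\{w\}}\cup\{h_0\}$ (or a subset of it) is isomorphic to, or contains, one of the listed minimal forbidden induced subgraphs or triggers one of the earlier lemmas. Once that enumeration is complete, the linear running-time claim is immediate since $F$ has constant size and we only inspect $O(d(h_0))$ adjacencies plus $O(n+m)$ for any lemma we invoke.
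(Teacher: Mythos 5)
There is a genuine gap. Your plan is to substitute $h_0$ for $w$ and then ``case-analyze the adjacency of $h_0$ to the few remaining vertices of $\og{F}$,'' but this defers exactly the part of the argument that carries all the difficulty, and the substitute you chose makes that analysis essentially intractable: since $\oo=N[h_0]$, the vertex $h_0$ is adjacent in $G$ to $\og{x}$ for \emph{every} $x\in F\cap(L\cup R)$, and $F$ necessarily meets $L$ (because $N(w)\subseteq\{v^l:v\in\occ\}$) and typically meets $L\cup R$ in several vertices. So spurious adjacencies of $h_0$ into $\og{F}$ are the rule rather than the exception, and nothing in your sketch shows that each resulting configuration on $\og{F\setminus\{w\}}\cup\{h_0\}$ contains a forbidden subgraph. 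In fact the paper's own proof indicates that it cannot always be found there: in the hard subcase the certificate is assembled from vertices $y_1,y_2\in\overline\oo$ and hole vertices $h_{-1},h_0,h_1,h_2$, most of which need not lie in $\og{F}$ at all.

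The paper's route is different and you are missing its key ideas. Since $w$ is simplicial it has at most two neighbors in $F$. With a unique neighbor the argument of Lemma~\ref{lem:at-no-w} goes through essentially unchanged. With two neighbors $x_1,x_2$ (note $\og{x_1}\sim\og{x_2}$ because $\occ$ is a clique), one first looks for a common neighbor $u\in\overline\oo$ of $\og{x_1}$ and $\og{x_2}$ --- a vertex of $\overline\oo$, not $h_0$, precisely because such a vertex has a controlled, interval-like neighborhood --- and uses it to replace $w$. Only if no such $u$ exists does one take two \emph{distinct} private witnesses $y_1,y_2\in\overline\oo$ with $\og{x_1}y_1,\og{x_2}y_2\in\ecc$, prove $y_1\not\sim y_2$ from the chordality of $\mho(G)$, and then run a short case analysis on $\head{\og{x_1}},\head{\og{x_2}}\in\{0,1\}$ and the adjacencies of $y_1,y_2$ to $h_{-1},h_1,h_2$ to exhibit a \dag, \ddag, $C^*$, or an application of Lemma~\ref{lem:non-consecutive-2}. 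None of this dichotomy (common neighbor in $\overline\oo$ versus two private witnesses) appears in your proposal, so as written it does not constitute a proof.
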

\begin{figure}[h]
\setbox4=\vbox{\hsize28pc \noindent\strut
\begin{quote}
  \vspace*{-5mm} \footnotesize

  \hspace*{2ex} $\setminus\!\!\setminus$ Note that $0\le
  \head{\og{x_1}}, \head{\og{x_2}} \le 1$.
  \\
  1 \hspace*{2ex} {\bf if} ${\head{\og{x_1}} = 1}$ and $y_1\sim h_2$
  {\bf then}
  \\
  \hspace*{7ex} {\bf call} Lemma~\ref{lem:non-consecutive-2} with ($y_1
  \og{x_1} h_{1} h_2 y_1$) and $\{\og{x_2}, y_2\}$;
  \\
  \textcolor{white}{1} \hspace*{2ex} {\bf if} ${\head{\og{x_1}} = 0}$
  and $y_1\sim h_1$ {\bf then}
  \\
  \hspace*{7ex} {\bf call} Lemma~\ref{lem:non-consecutive-2} with ($y_1
  \og{x_1} h_{0} h_1 y_1$) and $\{\og{x_2}, y_2\}$;
  \\
  \textcolor{white}{1} \hspace*{2ex} {\bf if} $y_2\sim
  h_{\head{\og{x_2}} + 1}$ {\bf then} {\sf symmetric as above};
  \\
  2 \hspace*{2ex} {\bf if} $\head{\og{x_1}} = \head{\og{x_2}}$ {\bf
    then}
  \\
  \hspace*{6ex} {\bf return} $\{y_1, \og{x_1}, y_2, \og{x_2},
  h_{\head{\og{x_2}}}, h_{\head{\og{x_2}}+1}\}$ as a \dag;
  \comment{Fig.~\ref{fig:hole-1}.}
  \\
  \hspace*{2ex} $\setminus\!\!\setminus$ assume from now that
  $\head{\og{x_1}} = 1$ and $\head{\og{x_2}} = 0$.
  \\
  3 \hspace*{2ex} {\bf if} ${\og{x_2}} \sim h_2$ {\bf then return}
  ($\og{x_2} h_0 h_1 h_2 \og{x_2}$) and $y_1$ as a $C^*$;
  \\
  4 \hspace*{2ex} {\bf if} $y_{2}\not\sim h_{-1}$ {\bf then return}
  $\{y_1, h_{-1}, \og{x_1}, y_2, \og{x_2}, h_0, h_1\}$ as a \ddag;
  \comment{Fig.~\ref{fig:hole-3}.}
  \\
  \textcolor{white}{4} \hspace*{2ex} {\bf if} $y_{2}\sim h_{-1}$ {\bf
    then return} $\{y_1, h_{-1}, y_2, \og{x_2}, h_0, h_1\}$ as a
  \dag. \comment{Fig.~\ref{fig:hole-4}.}

\end{quote} \vspace*{-3mm} \strut} $$\boxit{\box4}$$
\vspace*{-7mm}
\caption{Procedure for Lemma~\ref{lem:at-with-w}}
\label{fig:at-with-w}
\end{figure}
\begin{proof}
  Since $w$ is simplicial, it has at most $2$ neighbors in $F$.  If
  $w$ has a unique neighbor in $F$, then we can use a similar argument
  as Lemma~\ref{lem:at-no-w}.  Now let ${x_1}, x_2$ be the two
  neighbors of $w$ in $F$.  If there exists some vertex $u\in
  \overline{\oo}$ adjacent to both {\og{x_1}} and \og{x_2} in $G$,
  which can be found in linear time, then we can use it to replace $w$.
  Hence we assume there exists no such vertex.  By assumption, we can
  find two distinct vertices $y_1, y_2\in \overline{\oo}$ such that
  $\og{x_1} y_1, \og{x_2} y_2\in \ecc$; note that $\og{x_1}\not\sim
  y_2$ and $\og{x_2}\not\sim y_1$ in $G$.  As a result, $y_1$ and
  $y_2$ are nonadjacent; otherwise, $\{y_1, y_2\}$ and the
  counterparts of $\{{x_1}, x_2\}$ in $R$ induce a hole of $\mho(G)$,
  which contradicts the assumption that $\mho(G)$ is chordal.  We then
  apply the procedure described in Fig.~\ref{fig:at-with-w}.

  We now verify the correctness of the procedure.  Since each
  step---either directly or by calling a previously verified
  lemma---returns a \badgraph\ of $G$, all conditions of previous
  steps are assumed to not hold in a later step.  By
  Lemma~\ref{lem:hole-conditions}, $\head{\og{x_1}}$ and
  $\head{\og{x_2}}$ are either $0$ or $1$.  Step 1 considers the case
  where $y_1\sim h_{\head{\og{x_1}} + 1}$.  By
  Lemma~\ref{lem:non-helly}, $y_1\not\sim h_{\head{\og{x_1}}}$.  Thus,
  ($y_1 \og{x_1} h_{1} h_2 y_1$) or ($y_1 \og{x_1} h_{0} h_1 y_1$) is
  a hole of $G$, depending on ${\head{\og{x_1}}}$ is $0$ or $1$.  In
  the case ($y_1 \og{x_1} h_{1} h_2 y_1$), only $\og{x_1}$ and $h_1$
  can be adjacent to $\og{x_2}$; they are nonadjacent to $y_2$.
  Likewise, in the case ($y_1 \og{x_1} h_{0} h_1 y_1$), vertices
  $\og{x_1}$ and $h_0$ are adjacent to $\og{x_2}$ but not $y_2$, while
  $h_1$ can be adjacent to only one of $\og{x_2}$ and $y_2$.  Thus, we
  can call Lemma~\ref{lem:non-consecutive-2}.  A symmetric argument
  applies when $y_2\sim h_{\head{\og{x_2}} + 1}$.  Now that the
  conditions of step~1 do not hold true, step~2 is clear from
  assumption.  Henceforth we may assume without loss of generality
  that $\head{\og{x_1}} =1$ and $\head{\og{x_2}}= 0$.  Consequently,
  $\head{{y_1}} = |H| - 1$ (Lemma~\ref{lem:non-consecutive-2}).
  Because we assume that the condition of step~1 does not hold,
  $y_1\not\sim h_2$; this justifies step~3.  Step~4 is clear as $y_1$
  is always adjacent to $h_{-1}$.
  \end{proof}
\begin{figure}[h!]
  \centering
  \subfloat[]{\label{fig:hole-1}
    \includegraphics[scale=.97]{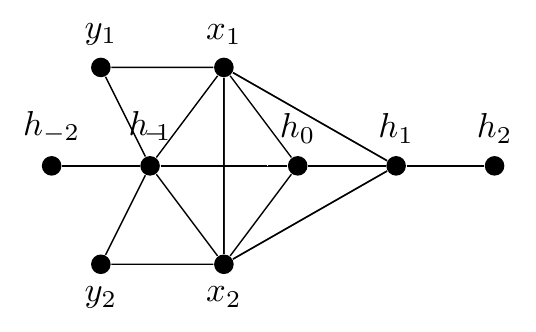} 
  }
  \,
  \subfloat[]{\label{fig:hole-3}
    \includegraphics[scale=.97]{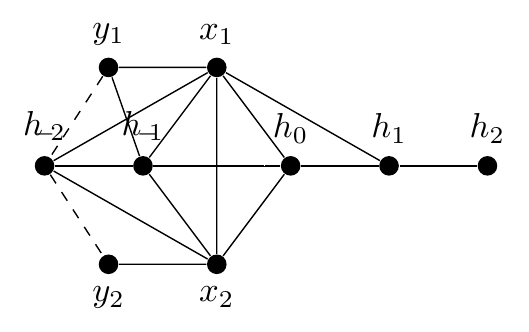} 
  }
  \,
  \subfloat[]{\label{fig:hole-4}
    \includegraphics[scale=.97]{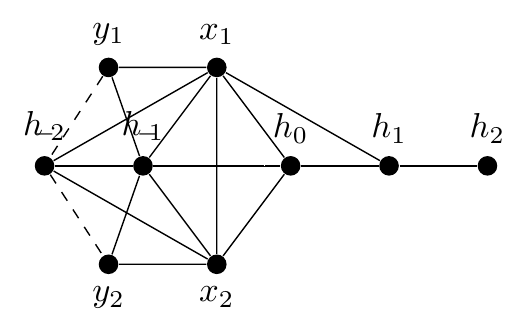} 
}
\caption{Structures used in the proof of Lemma~\ref{lem:at-with-w}
  (dashed edges may or may not exist).}
  \label{fig:negative-certificate-2}
\end{figure}

\end{document}